\def\Po#1#2{\Pos_{#1}{(#2)}}
\def\regw#1{\Reg{(#1)}}
\DeclareMathOperator{\Reg}{RegExp}
\DeclareMathOperator{\E}{E}
\DeclareMathOperator{\f}{F}
\DeclareMathOperator{\G}{G}
\DeclareMathOperator{\First}{First}
\DeclareMathOperator{\Last}{Last}
\DeclareMathOperator{\Follow}{Follow}
\DeclareMathOperator{\rooot}{root}
\DeclareMathOperator{\Pos}{Pos}
\DeclareMathOperator{\Leave}{Leaves}
\def\b#1{\overline{#1}}
\tikzstyle{every picture}=[>=stealth',shorten >=1pt,node distance=1.44cm,bend angle=45,initial text=,every state/.style={inner sep=0.75mm, minimum size=1mm},font=\scriptsize]
\begin{document}

\title{$K$-Position, Follow, Equation and $K$-C-Continuation Tree Automata Constructions} 
 
  \author{
    Ludovic Mignot, Nadia Ouali Sebti and Djelloul Ziadi\thanks{D.~Ziadi was supported by the MESRS - Algeria under Project 8/U03/7015.}
  } 
  
  \institute{
    LITIS, Universit\'e de Rouen, 76801 Saint-\'Etienne du Rouvray Cedex, France\\
     \email{\{ludovic.mignot,nadia.ouali-sebti,djelloul.ziadi\}@univ-rouen.fr}
  }
  
  \maketitle  
  
  \begin{abstract} 
   There exist several methods of computing an automaton recognizing the language denoted by a given regular expression: 
  In the case of words, 
  the \emph{position automaton} $\mathcal{P}$ due to Glushkov, the \emph{c-continuation automaton} $\mathcal{C}$ due to Champarnaud and Ziadi, the \emph{follow automaton} $\mathcal{F}$ due to 
  Ilie
  and Yu and the equation automaton $\mathcal{E}$ due to Antimirov.  
  It has been shown that $\mathcal{P}$ and $\mathcal{C}$ are isomorphic and that $\mathcal{E}$ (resp. $\mathcal{F}$) is a quotient of $\mathcal{C}$ (resp. of $\mathcal{P}$).
  
  In this paper, we define from a given regular tree expression the $k$-position tree automaton $\mathcal{P}$ and the follow tree automaton $\mathcal{F}$. 
  Using the definition of the equation tree automaton $\mathcal{E}$ of Kuske and Meinecke and our previously defined $k$-C-continuation tree automaton $\mathcal{C}$, we show that the previous morphic relations are still valid 
  on tree expressions.  
  \end{abstract}
  
\section{Introduction}
  Regular expressions are used in numerous domains of applications in computer science. 
  They are an easy and compact way to represent potentially infinite regular languages, that are well-studied objects leading to efficient decision problems.
  Among them, the membership test, that is to determine whether or not a given word belongs to a language.
  Given a regular expression $E$ with $n$ symbols and a word $w$, to determine whether $w$ is in the language denoted by $E$ can be polynomially performed (with respect to $n$) \emph{via} the computation of a finite state machine, called an automaton, that can be seen as a symbol-labelled graph with initial and final states.
  There exist several methods to compute such an automaton.
  
  The first approach is to determine particular properties over the syntactic structure of the regular expression $E$. 
  Glushkov~\cite{glushkov} proposed the computation of four position functions $\mathrm{Null}$, $\mathrm{First}$, $\mathrm{Last}$, and $\mathrm{Follow}$, which once computed, lead to the computation of a $(n+1)$-
  state
   automaton. 
  Ilie
   and Yu showed in~\cite{yu} how to reduce it by merging similar states.  
  Another method is to compute the transition function of the automaton as follows: 
  associating a regular expression with a state $s$,
  any path labelled by a word $w$ brings the automaton from 
  the
  state $s$ into a finite set of states $S'=\{s'_1,\ldots,s'_k\}$ such that these states denote the quotient $w^{-1}(L(s))$ of the language $L(s)$ by $w$, that contains the word $w'$ such that $ww'$ belongs to $L(s)$.
  Basically, it is a computation that tries to determine what words $w'$ can be accepted after reading a prefix $w$. 
  The first author that introduced such a process is Brzozowski~\cite{brzozowski}. 
  He showed how to compute a regular expression denoting $w^{-1}(L(E))$ from the expression $E$: this expression, denoted by $d_{w}(E)$, is called the \emph{derivative} of $E$ with respect to $w$.
  Furthermore, the set of dissimilar derivatives, combined with reduction according to associativity, commutativity and idempotence of the sum, is finite and can lead to the computation of a deterministic finite automaton.
  Antimirov \cite{antimirov} extended this method to the computation of partial derivatives, that are no longer expressions but sets of expressions. These so-called derived terms produce the \emph{equation automaton}.
  Finally, by deriving expressions after having them indexed, Champarnaud and Ziadi~\cite{ZPC2} computed the \emph{c-continuation automaton}.
  
  The different morphic links between these four automata have been studied too: 
  Ilie
   and Yu showed that the follow automaton is a quotient of the position automaton; Champarnaud and Ziadi proved that the position automaton and the c-continuation 
  automaton
  are isomorphic and that the equation automaton is a quotient of the position 
  automaton.   
  Finally, using a join of the two previously defined 
  quotients,
   Garcia \emph{et al.} presented in~\cite{garcia} an automaton that is smaller than both 
   the follow and the equation automata.
  
  In this paper, we extend the study of these morphic links to different computations of tree automata. We define two new tree automata constructions, \emph{the $k$-position automaton} and 
  \emph{the follow automaton},
   and we study their morphic links with two other already known automata constructions, the \emph{equation automaton} of Kuske and Meinecke~\cite{automate2} and our $k$\emph{-C-continuation automaton}~\cite{cie}.   
  Notice that a position automaton and a reduced 
  automaton
   have already been defined in~\cite{Ouali}. However, they are not isomorphic with the automata we define in this paper.
  This study is motivated by the development of a library of functions for handling rational kernels \cite{mohri1} in the case of trees. 
  The first problem consists in converting a regular tree expression into a tree transducer.  
  Section~\ref{sec prelim} recalls basic definitions and properties of regular tree languages and regular tree expressions. 
  In Section~\ref{sec tree automata}, we define two new automata computations, \emph{the $k$-position automaton} and 
  \emph{the follow automaton}
   and recall the definition of the \emph{equation automaton} and of the $k$\emph{-C-continuation automaton}; we also present the morphic links between these four methods in this section.
  Section~\ref{sec compar} is devoted to the comparison of the follow automaton and of the equation 
  automaton;
  it is proved that there are no morphic link between them.
  Moreover, we extend the computation of the Garcia \emph{et al.} equivalence leading to a smaller automaton in this section.

\section{Preliminaries}\label{sec prelim}
    Let $(\Sigma,\mathrm{ar})$ be  \emph{a ranked alphabet}, where $\Sigma$ is a finite set and $\mathrm{ar}$ represents the  \emph{rank} of $\Sigma$ which is a mapping from $\Sigma$ into $\mathbb{N}$. The set of symbols of rank $n$ is denoted by $\Sigma_{n}$. The elements of rank $0$ are called  \emph{constants}. A \emph{tree} $t$ over   $\Sigma$ is inductively defined as follows: $t=a,~ t=f(t_1,\dots,t_k)$ where $a$ is any symbol in  $\Sigma_0$, $k$ is any integer satisfying $k\geq 1$, $f$ is any symbol in $\Sigma_k$ and $t_1,\dots,t_k$ are any $k$ trees over $\Sigma$. We denote by $T_{\Sigma}$ the set of trees over $\Sigma$.  \emph{A tree language} is a subset of $T_{\Sigma}$. Let ${\Sigma}_{\geq 1}=\Sigma\backslash \Sigma_0$ denote the set of  \emph{non-constant symbols} of the ranked alphabet $\Sigma$. \emph{A Finite Tree Automaton} (FTA)~\cite{automate1,automate2} ${\cal A}$ is a tuple $( Q, \Sigma, Q_{T},\Delta )$ where $Q$ is a finite set of states, $Q_T \subset Q$ is the set of \emph{final states} 
and  $\Delta\subset\bigcup_{n\geq 0}(Q \times \Sigma_{n}\times Q^n)$ is the set of  \emph{transition rules}. This set is equivalent to the function $\Delta$ 
from $Q^n \times \Sigma_n$ to $2^Q$
defined  by $(q,f,q_1,\dots,q_n)\in \Delta\Leftrightarrow q\in \Delta(q_1,\dots,q_n,f)$. The domain of this function can be extended to  
$(2^Q)^n \times \Sigma_n$
as follows: $\Delta(Q_1,\dots,Q_n,f)=\bigcup_{(q_1,\dots,q_n)\in Q_1\times\dots\times Q_n} \Delta(q_1,\dots,q_n,f)$.  Finally, we denote by $\Delta^*$ the function from  $T_{\Sigma}\rightarrow 2^Q$  defined for any tree in $T_{\Sigma}$ as follows: 
  $\Delta^*(t)= \Delta(a)$ if  $t=a$ with $a\in \Sigma_0$, 
  $\Delta^*(t)=\Delta(\Delta^*(t_1),\dots,\Delta^*(t_n),f)$
   if $t=f(t_1,\dots,t_n)$ with $f\in {\Sigma}_n$ and $t_1,\ldots,t_n\in T_{\Sigma}$. 
  A tree is \emph{accepted} by ${\cal A}$ if and only if $\Delta^*(t)\cap Q_T\neq \emptyset$.   
  The language ${\cal L(A)}$ \emph{recognized} by $A$ is the set of trees accepted by ${\cal A}$  \emph{i.e.} ${\cal L(A)}=\{t\in T_{\Sigma}\mid \Delta^*(t)\cap Q_T\neq \emptyset\}$. 
 Let $\sim$ be an equivalence relation over $Q$. We denote by $[q]$ the equivalence class of any state $q$ in $Q$. The \emph{quotient of} $A$ w.r.t. $\sim$ is the tree automaton $A_{/\sim}=( Q_{/\sim}, \Sigma, {Q_{T}}_{/\sim},\Delta_{/\sim} )$ where: $Q_{/\sim}=\{[q]\mid q\in Q\}$, ${Q_{T}}_{/\sim}=\{[q]\mid [q]\cap Q_T \neq \emptyset\}$, $\Delta_{/\sim}=\{([q],f,[q_1],\ldots,[q_n]) \mid ([q]\times \{f\}\times [q_1]\times \ldots\times [q_n])\cap \Delta\neq \emptyset\}$.
Notice that a transition $([q],f,[q_1],\ldots,[q_n])$ in $\Delta_{/\sim}$ does not imply a transition $(q,f,q_1,\ldots,q_n)$ in $\Delta$. Moreover, the relation $\sim$ is not necessarily a congruence w.r.t. the transition function: in this paper, we will deal with specific equivalence relations (similarity relations) that turn to be congruences. This particular considerations will be clarified in Subsection~\ref{subsec folltreeaut}.

  For any integer $n\geq 0$, for any $n$ languages $L_1, \dots, L_n\subset T_{\Sigma}$, and for any symbol  $f\in \Sigma_n$, $f(L_1, \dots, L_n)$ is the tree language $\lbrace f(t_1, \dots, t_n)\mid t_i\in L_i\rbrace$. The \emph{tree substitution} of a constant $c$ in $\Sigma$ by a language $L\subset T_{\Sigma}$ in a tree $t\in T_{\Sigma}$, denoted by $t\lbrace c \leftarrow L\rbrace$, is the language inductively 
  defined by:
   $L$ if $t=c$; $\lbrace d\rbrace$ if $t=d$ where $d\in \Sigma_0\setminus\{c\}$; $f(t_1\lbrace c \leftarrow L\rbrace, \dots, t_n\lbrace c \leftarrow L\rbrace)$ if $t=f(t_1, \dots, t_n)$ with $f\in\Sigma_n$ and $t_1, \dots, t_n$ any $n$ trees over $\Sigma$.   
    Let $c$ be a symbol in $\Sigma_0$. The $c$-\emph{product} $L_1\cdot_{c} L_2$ of two languages $L_1, L_2\subset T_{\Sigma}$ is  defined by $L_1\cdot_{c} L_2=\bigcup_{t\in L_1}\lbrace t\lbrace c \leftarrow L_2\rbrace \rbrace$. The \emph{iterated $c$-product} is  inductively  defined for $L\subset T_{\Sigma}$ by:  $L^{0_c}=\lbrace c \rbrace$ and $L^{{(n+1)}_c}=L^{n_c}\cup L\cdot_{c} L^{n_c}$. The $c$-\emph{closure} of $L$ is defined by  $L^{*_c}=\bigcup_{n\geq 0} L^{n_c}$.
    
A \emph{regular expression} over a ranked alphabet $\Sigma$ is inductively defined by  $\E=0$, $\E\in \Sigma_0$, $\E=f(\E_1, \cdots, \E_n)$, $\E=(\E_1+\E_2)$, $\E=(\E_1\cdot_c \E_2)$, $\E=({\E_1}^{*_c})$, where $c\in\Sigma_0$, $n\in\mathbb{N}$, $f\in\Sigma_n$ and $\E_1,\E_2 ,\dots, \E_n$ are any $n$ regular expressions over $\Sigma$. Parenthesis can be omitted when there is no ambiguity. We write $\E_1=\E_2$ if $\E_1$ and $\E_2$ graphically coincide. We denote by $\regw{\Sigma}$ the set of all regular expressions over $\Sigma$. Every regular expression $\E$ can be seen as a tree over the ranked alphabet 
$\Sigma\cup \{+,\cdot_c, *_c \mid c\in \Sigma_0\}$
 where $+$ and $\cdot_c$ can be seen 
 as symbols
 of rank $2$ and $*_c$ has rank $1$. This tree is the syntax-tree $T_{\E}$ of $\E$. The \emph{alphabetical width} $||\E||$ of $\E$ is the number of occurrences of symbols of $\Sigma$ in $\E$. \emph{The size} $|\E|$ of $\E$ is the size of its syntax tree $T_{\E}$. The \emph{language} $\llbracket \E\rrbracket$  \emph{
denoted by} $\E$ is 
inductively defined by
 $\llbracket 0\rrbracket=\emptyset$, $\llbracket c\rrbracket=\lbrace c\rbrace$, $\llbracket f(\E_1,\E_2 , \cdots, \E_n)\rrbracket= f(\llbracket \E_1 \rrbracket, \dots, \llbracket \E_n \rrbracket)$, $\llbracket \E_1+ \E_2\rrbracket=\llbracket \E_1\rrbracket\cup\llbracket \E_2 \rrbracket$, $\llbracket \E_1\cdot_{c} \E_2\rrbracket=\llbracket \E_1\rrbracket \cdot_{c}\llbracket \E_2 \rrbracket$, $\llbracket {\E_1}^{*_c}\rrbracket=\llbracket \E_1\rrbracket^{*_c}$ where  $n\in\mathbb{N}$, $\E_1,\E_2,\dots, \E_n$ are any $n$ regular expressions, $f\in\Sigma_n$  and $c\in \Sigma_0$. It is well known that a tree language  is accepted by some tree automaton if and only if it can be denoted by a regular expression \cite{automate1,automate2}.
A regular expression $\E$ defined over $\Sigma$ is  \emph{linear} 
if
every symbol of 
rank greater than $1$
 appears at most once in $\E$. Note that any constant symbol may occur more than once. Let $\E$ be a regular expression over $\Sigma$. The  \emph{linearized regular expression} $\b\E$ in $\E$ of a regular expression $\E$ is obtained from $\E$ by marking differently all symbols of a rank  greater than or equal to $1$ (symbols of $\Sigma_{\geq 1}$). 
The marked symbols form together with the constants in $\Sigma_0$ a ranked alphabet $\mathrm{Pos}_E(E)$ the symbols of which we call \emph{positions}.
The mapping $h$ is defined from $\Po{\E}{\E}$ to $\Sigma$ with $h(\Po{\E}{\E}_m)\subset \Sigma_m$ for every  $m\in \mathbb{N}$. It associates with a marked symbol $f_j\in  \Po{\E}{\E}_{\geq 1}$ the symbol $f\in \Sigma_{\geq 1}$ and for a symbol $c\in \Sigma_0$ the symbol $h(c)=c$.
We can extend the mapping $h$ naturally to  $\regw{\Po{\E}{\E}}\rightarrow\regw{\Sigma}$ by $h(a)=a$, $h(\E_1+\E_2)=h(\E_1)+h(\E_2)$, $h(\E_1\cdot_c\E_2)=h(\E_1)\cdot_c h(\E_2)$, $h(\E_1^{*_c})=h(\E_1)^{*_c}$, $h(f_j(\E_1,\dots,\E_n))=f(h(\E_1),\dots,h(\E_n))$, with $n\in\mathbb{N}$, $a\in \Sigma_0$, $f\in \Sigma_n$, $f_j\in \Po{\E}{\E}_n$ such that $h(f_j)=f$ and $\E_1,\dots,\E_n$ any regular expressions over $\Po{\E}{\E}$. 


\section{Tree Automata from Regular Expressions}\label{sec tree automata}

  In this section, we show how to compute from a regular expression $\E$ four tree automata accepting $\llbracket \E\rrbracket$: we introduce two new constructions, the $K$-position automaton and the follow automaton of $\E$, and then we recall two already-known constructions, the equation automaton~\cite{automate2} and the C-continuation automaton~\cite{cie}.
  
  Regular languages defined over ranked alphabet $\Sigma$ are exactly the languages denoted by a regular expression on $\Sigma$. 
  There may exist many distinct regular expressions which denote the same regular language. 
  Two regular expressions are said to be \emph{equivalent} if they denote the same language.
 In what follows we  only consider expressions without $0$ or reduced to $0$.
 
  In the following of this section, $\E$ is a regular expression over a ranked alphabet $\Sigma$. 
  The set of symbols in $\Sigma$ that appear in an expression $\f$ is denoted by $\Sigma_F$.

  \subsection{The $K$-Position Tree Automaton}

  In this section, we show how to compute the $K$-position tree automaton of a regular expression $E$, recognizing $\llbracket E\rrbracket$. This is an extension of the well-known position automaton~\cite{glushkov} for word regular expressions
  where the $K$ represents the fact that any $k$-ary symbol is no longer a state of the automaton, but is exploded into $k$ states.
  The same method was presented independently by McNaughton and Yamada~\cite{mcnaughton60}.  
   Its computation is based on the computations of particular \emph{position functions}, defined in the following.
  
  In what follows, for any two trees $s$ and $t$, we denote by $s\preccurlyeq t$ the relation "$s$ is a subtree of $t$". 
  Let
  $t=f(t_1,\dots,t_n)$ be a tree. 
We denote by $\rooot(t)$ the root of $t$, by $k\mbox{-}\mathrm{child(t)}$ the $k^{th}$ child of $f$ in $t$, that is the root of $t_k$ if it exists, and by $\Leave(t)$ the set of the leaves of $t$, \emph{i.e.} $\{s\in \Sigma_0\mid s\preccurlyeq t\}$. 
  
  Let $\E$ be linear, $1\leq k\leq m$ be two integers and $f$ be a symbol in $\Sigma_m$.
  %
  The set $\First(\E)$ is the subset of $\Sigma$ defined by $\{\rooot(t)\in \Sigma \mid t\in \llbracket\E \rrbracket\}$; The set $\Follow(\E,f,k)$ is the subset of $\Sigma$ defined by $\{g\in \Sigma \mid \exists t\in \llbracket \E \rrbracket, \exists s\preccurlyeq t, \mathrm{root}(s)=f, k\mbox{-}\mathrm{child(s)}=g\}$; The set $\Last(\E)$ is the subset of $\Sigma_0$ defined by $\Last(\E)=\displaystyle\bigcup_{t\in\llbracket \E\rrbracket}\Leave(t)$.
  
 \begin{example}\label{Pos Automat}
 Let $\Sigma=\Sigma_0\cup\Sigma_1\cup\Sigma_2$ be defined by $\Sigma_0=\{a,b,c\}$, $\Sigma_1=\{f,h\}$ and $\Sigma_2=\{g\}$.
    Let us consider the regular expression $\E$ and its linearized form defined by:
    
    \centerline{$\E=(f(a)^{*_a}\cdot_a b+ h(b))^{*_b}+g(c,a)^{*_c}\cdot_c (f(a)^{*_a}\cdot_a b+ h(b))^{*_b}$,}
    
    \centerline{$\b\E=(f_1(a)^{*_a}\cdot_a b+ h_2(b))^{*_b}+g_3(c,a)^{*_c}\cdot_c (f_4(a)^{*_a}\cdot_a b+ h_5(b))^{*_b}$.}

The language denoted by $\b\E$ is 

\centerline{
  \begin{tabular}{l@{\ }l}
    $\llbracket \b\E\rrbracket=$ & $\{b, f_1(b),f_1(f_1(b)),f_1(h_2(b)),h_2(b),h_2(f_1(b)),h_2(h_2(b)), \ldots,$\\
    & $g_3(b,a),
g_3(g_3(b,a),a),
 g_3(f_4(b),a),g_3(h_5(b),a),$\\
 & $f_4(f_4(b)),f_4(h_5(b),h_5(f_4(b)),h_5(h_5(b)),\ldots\}$\\
   \end{tabular}
}

Consequently, $\First(\b\E)=\{b,f_1,h_2,g_3,f_4,h_5\}$ and $\Follow(\b\E,f_1,1)=\{b,f_1,h_2\}$, $\Follow(\b\E,h_2,1)=\{b,f_1,h_2\}$, $\Follow(\b\E,g_3,1)=\{b,g_3,f_4,h_5\}$, $\Follow(\b\E,g_3,2)=\{a\}$, $\Follow(\b\E,f_4,1)=\{b,f_4,h_5\}$,\\ $\Follow(\b\E,h_5,1)=\{b,f_4,h_5\}$. 
\end{example}  

   Let us first show that the position functions $\mathrm{First}$ and $\mathrm{Follow}$ are    
   inductively computable.

  \begin{lemma}\label{firstComput}
    Let $\E$ be linear.
    The set $\First(\E)$ can be computed 
    as follows:
    
    \centerline{$\First(0)=\emptyset$, $\First(a)=\lbrace a\rbrace$, $\First(f(\E_1, \cdots,\E_m))=\lbrace f\rbrace$,}
    
    \centerline{$\First(\E_1+\E_2)=\First(\E_1)\cup\First(\E_2)$, $\First({\E_1}^{*_c})=\First(\E_1)\cup\lbrace c\rbrace$,}
    
    \centerline{
      $\First(\E_1\cdot_c \E_2)=
        \left\{
          \begin{array}{l@{\ }l}
            (\First(\E_1)\setminus\{c\}) \cup \First(\E_2) & \text{ if } c\in\llbracket \E_1\rrbracket,\\
            \First(\E_1) & \text{ otherwise.}\\ 
          \end{array}
        \right.
      $
    }
  \end{lemma}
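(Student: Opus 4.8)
The plan is to proceed by structural induction on the regular expression $\E$, treating the six cases of the statement; since every subexpression of a linear expression is again linear, the induction hypothesis applies to $\E_1,\E_2$ in the composite cases. The three base cases are immediate from the definitions of $\llbracket\cdot\rrbracket$ and of $\First$: for $\E=0$ the language is empty, so $\First(0)=\emptyset$; for $\E=a$ the only tree is the single node $a$, whose root is $a$; and for $\E=f(\E_1,\dots,\E_m)$ every tree of $\llbracket f(\E_1,\dots,\E_m)\rrbracket=f(\llbracket\E_1\rrbracket,\dots,\llbracket\E_m\rrbracket)$ has root $f$, while the convention that we consider only expressions without $0$ guarantees each $\llbracket\E_i\rrbracket\neq\emptyset$, hence at least one such tree exists and $\First(f(\E_1,\dots,\E_m))=\{f\}$. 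The sum case is just as direct, since $\llbracket\E_1+\E_2\rrbracket=\llbracket\E_1\rrbracket\cup\llbracket\E_2\rrbracket$ and the set of roots of a union of languages is the union of the sets of roots.

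The technical core lies in the product and star cases, and both rest on a single observation about how the root of a tree behaves under the tree substitution $t\{c\leftarrow L\}$. Because $c$ is a constant of rank $0$, the only tree whose root equals $c$ is the single node $c$ itself; consequently, for $t\in T_\Sigma$ and any $s\in t\{c\leftarrow L\}$ one has $\rooot(s)=\rooot(t)$ whenever $\rooot(t)\neq c$ (the substitution only rewrites leaves strictly below the root), whereas $t=c$ forces $t\{c\leftarrow L\}=L$, so in that case the contributed roots are exactly those of $L$. The same remark yields $c\in\First(\E_1)\Leftrightarrow c\in\llbracket\E_1\rrbracket$, which justifies phrasing the side condition of the product formula as $c\in\llbracket\E_1\rrbracket$.

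Applying this to $\llbracket\E_1\cdot_c\E_2\rrbracket=\bigcup_{t\in\llbracket\E_1\rrbracket}t\{c\leftarrow\llbracket\E_2\rrbracket\}$, I would split the trees $t\in\llbracket\E_1\rrbracket$ according to whether $\rooot(t)=c$: those with $\rooot(t)\neq c$ contribute precisely the symbols of $\First(\E_1)\setminus\{c\}$, while the single tree $t=c$ — present exactly when $c\in\llbracket\E_1\rrbracket$ — contributes all of $\First(\E_2)$, giving the stated two-case formula. For the star case I would run an inner induction on $n$ over the iterated product: the base $L^{0_c}=\{c\}$ supplies the root $c$, and the step $L^{(n+1)_c}=L^{n_c}\cup\llbracket\E_1\rrbracket\cdot_c L^{n_c}$ combined with the product analysis shows that no iterate introduces a root outside $\First(\E_1)\cup\{c\}$; conversely, $L^{1_c}\supseteq\llbracket\E_1\rrbracket$ realises all of $\First(\E_1)$, so $\First(\E_1^{*_c})=\First(\E_1)\cup\{c\}$.

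The main obstacle I anticipate is the star case, where one must verify that the repeated $c$-products never produce a root symbol beyond $\First(\E_1)\cup\{c\}$. The product-root observation settles each individual substitution, but the containment has to be threaded carefully through the inner induction, keeping in mind that the grafted copies of $L^{n_c}$ always sit strictly below the root of the enclosing tree unless that tree is $c$ itself, in which case only roots already counted are reproduced.
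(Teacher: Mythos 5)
Your proposal is correct and follows essentially the same route as the paper's proof: a structural induction whose core is the observation that the $c$-substitution $t\{c\leftarrow L\}$ preserves the root whenever $\rooot(t)\neq c$ (this is the paper's condition \textbf{(C1)}/\textbf{(C2)}), with the star case handled by iterating that observation over $\llbracket \E_1\rrbracket^{n_c}$ (the paper's condition \textbf{(C3)}). The only presentational difference is that you isolate the substitution-root fact and the equivalence $c\in\First(\E_1)\Leftrightarrow c\in\llbracket\E_1\rrbracket$ up front, and you should, like the paper, explicitly invoke $\llbracket\E_2\rrbracket\neq\emptyset$ in the product case so that $t\{c\leftarrow\llbracket\E_2\rrbracket\}$ is guaranteed non-empty.
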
  
  \begin{proof}
    Let us show by induction over $E$ that any symbol $f$ in $\Sigma$ belongs to $\mathrm{First}(E)$ if and only if there exists a tree $t$ in $\llbracket E\rrbracket$ the root of which is $f$ (\emph{i.e.} $t=f(t_1,\ldots,t_n)$).
    \begin{enumerate}
      \item If $E=0$, $\llbracket E\rrbracket=\emptyset=\First(E)$.
      \item Let us suppose that $E=a$ with $a\in\Sigma_0$. Hence $\llbracket E\rrbracket=\{a\}=\First(E)$.
      \item Suppose that $E=f(E_1,\ldots,E_n)$. 
        Hence $\llbracket E\rrbracket=f(\llbracket E_1\rrbracket,\ldots,\llbracket E_n\rrbracket)$. 
        As a direct consequence, any tree $t$ in $\llbracket E\rrbracket$ admits $f$ as root. 
        Consequently, $\{f\}=\mathrm{First}(E)$ and  $\exists f(t_1,\ldots,t_n) \in \llbracket E\rrbracket$.
      \item Suppose that $E=E_1+E_2$.
        Then:
        
        \centerline{
          \begin{tabular}{l@{\ }l}
            & $f\in \mathrm{First}(E_1)\cup \mathrm{First}(E_2)$\\
            $\Leftrightarrow$ & $f\in \mathrm{First}(E_1)\vee f\in \mathrm{First}(E_2)$\\
            $\Leftrightarrow$ & $\exists f(t_1,\ldots,t_n) \in \llbracket E_1\rrbracket \vee \exists f(t_1,\ldots,t_n) \in \llbracket E_2\rrbracket$\ \ \ \ \textbf{(Induction Hypothesis)}\\
            $\Leftrightarrow$ & $\exists f(t_1,\ldots,t_n) \in \llbracket E_1\rrbracket \cup \llbracket E_2\rrbracket$\\
            $\Leftrightarrow$ & $\exists f(t_1,\ldots,t_n) \in \llbracket E\rrbracket$\\
          \end{tabular}
        }
    
      \item Let us consider that $E=E_1\cdot_c E_2$, with $c\in\Sigma_0$.
        \begin{enumerate}
          \item\label{item concat first comput} If $c\notin \llbracket E_1\rrbracket$, then $\mathrm{First}(E)=\mathrm{First}(E_1)$. 
            
            \noindent Let $t=f(t_1,\ldots,t_n)$ be a tree in $\llbracket E_1\rrbracket\setminus\{c\}$. 
            Since $\llbracket E_2\rrbracket\neq\emptyset$, $t_{c\leftarrow \llbracket E_2\rrbracket}\neq\emptyset$. 
            Furthermore, any tree $t'$ in $t_{c\leftarrow \llbracket E_2\rrbracket}\neq\emptyset$ admits the same root as $t$ by construction.
            Hence $\exists t=f(t_1,\ldots,t_n) \in \llbracket E_1\rrbracket$ $\Rightarrow$ $\exists t'=f(t'_1,\ldots,t'_n) \in \llbracket E_1\cdot_c E_2\rrbracket$
            
            \noindent Reciprocally, let $t'=f(t'_1,\ldots,t'_n)$ be a tree in $\llbracket E_1\cdot_c E_2\rrbracket$. 
            By definition, there exists a tree $t$ in $\llbracket E_1\rrbracket$ such that $t'\in t_{c\leftarrow \llbracket E_2\rrbracket}$.
            Trivially, since $t\neq c$, $t$ and $t'$ admit the same root.
            Hence $\exists t=f(t_1,\ldots,t_n) \in \llbracket E_1\rrbracket$ $\Leftarrow$ $\exists t'=f(t'_1,\ldots,t'_n) \in \llbracket E_1\cdot_c E_2\rrbracket$.
            
            \noindent Consequently, the following condition \textbf{(C1)} holds:
            
            \centerline{
              $\exists t=f(t_1,\ldots,t_n) \in \llbracket E_1\rrbracket\setminus\{c\}$ $\Leftrightarrow$ $\exists t'=f(t'_1,\ldots,t'_n) \in \llbracket E_1\rrbracket\setminus\{c\}\cdot_c \llbracket E_2\rrbracket$.
            }
            
            \noindent Finally:
            
            \centerline{
              \begin{tabular}{l@{\ }l}
                $f\in \mathrm{First}(E)$ & $\Leftrightarrow$ $f\in \mathrm{First}(E_1)$\\
                & $\Leftrightarrow$  $\exists t=f(t_1,\ldots,t_n) \in \llbracket E_1\rrbracket$\ \ \ \ \textbf{(Induction hypothesis)}\\
                & $\Leftrightarrow$ $\exists t'=f(t'_1,\ldots,t'_n) \in \llbracket E_1\rrbracket\setminus\{c\}\cdot_c \llbracket E_2\rrbracket$\ \ \ \ \textbf{(C1)}\\
                & $\Leftrightarrow$ $\exists t'=f(t'_1,\ldots,t'_n) \in \llbracket E\rrbracket$\\
              \end{tabular}
            }
            
          \item Consider that $c\in \llbracket E_1\rrbracket$. 
            Hence $\mathrm{First}(E)=\mathrm{First}(E_1)\setminus\{c\} \cup \mathrm{First}(E_2)$. 
            Since $E\neq 0$ $\Rightarrow$ $E_2\neq 0$ $\Rightarrow$ $\llbracket E_2\rrbracket\neq\emptyset$, let $u_2$ be a tree in $\llbracket E_2\rrbracket$. 
            Similarly, since $E\neq 0$ $\Rightarrow$ $E_1\neq 0$ $\Rightarrow$ $\llbracket E_1\rrbracket\neq\emptyset$, let $u_1$ be a tree in $\llbracket E_1\rrbracket$. 
            
            \noindent According to condition \textbf{(C1)} in case~\ref{item concat first comput},
            
            \centerline{
              $\exists t=f(t_1,\ldots,t_n) \in \llbracket E_1\rrbracket\setminus\{c\}$ $\Leftrightarrow$ $\exists t'=f(t'_1,\ldots,t'_n) \in \llbracket E_1\rrbracket\setminus\{c\}\cdot_c \llbracket E_2\rrbracket$.
            }
            
            \noindent Furthermore, by definition of $\cdot_c$, the following condition \textbf{(C2)} holds:
            
            \centerline{
              $\exists f(t_1,\ldots,t_n) \in \llbracket E_2\rrbracket$ $\Leftrightarrow$ $\exists f(t_1,\ldots,t_n) \in \{c\} \cdot_c \llbracket E_2\rrbracket$
            }
            
            \noindent Consequently:
            
            \centerline{
              \begin{tabular}{l@{\ }l}
                & $f\in \mathrm{First}(E)$\\
                $\Leftrightarrow$  & $f\in \mathrm{First}(E_1)\setminus\{c\} \cup \mathrm{First}(E_2)$\\
                $\Leftrightarrow$  & $\ \ \exists f(t_1,\ldots,t_n) \in \llbracket E_1\rrbracket\setminus\{c\}$\\
                & $\vee \exists f(t_1,\ldots,t_n) \in \llbracket E_2\rrbracket$\ \ \ \ \textbf{(Induction hypothesis)}\\
                $\Leftrightarrow$  & $\ \ \exists f(t_1,\ldots,t_n) \in \llbracket E_1\rrbracket\setminus\{c\}\cdot_c \llbracket E_2\rrbracket$\\
                & $ \vee \exists f(t_1,\ldots,t_n) \in \{c\}\cdot_c\llbracket E_2\rrbracket$\ \ \ \ \textbf{(C1) and (C2)}\\
                $\Leftrightarrow$  & $\exists f(t_1,\ldots,t_n) \in \llbracket E_1\rrbracket\cdot_c \llbracket E_2\rrbracket$\\
                $\Leftrightarrow$  & $\exists f(t_1,\ldots,t_n) \in \llbracket E\rrbracket$\\
              \end{tabular}
            }
        \end{enumerate}
        
      \item Let us suppose that $E=E_1^{*_c}$. 
        Then $\mathrm{First}(E)=\mathrm{First}(E_1)\cup\{c\}$.
        Let $f$ be a symbol in $\Sigma$. If $f=c$, then $c\in \mathrm{First}(E)$ and $c\in \llbracket E\rrbracket$.
        Otherwise, since $E\neq 0$ $\Rightarrow$ $E_1\neq 0$ $\Rightarrow$ $\llbracket E_1\rrbracket\neq\emptyset$, there exists some tree in $\llbracket E_1\rrbracket$.
        According to condition \textbf{(C1)} in case~\ref{item concat first comput},
        
        \centerline{
          $\exists t=f(t_1,\ldots,t_n) \in \llbracket E_1\rrbracket\setminus\{c\}$ $\Leftrightarrow$ $\exists t'=f(t'_1,\ldots,t'_n) \in \llbracket E_1\rrbracket\setminus\{c\}\cdot_c \llbracket E_1\rrbracket$.
        }
        
        \noindent By successive application of \textbf{(C1)}, it can be shown that the following condition \textbf{(C3)} holds: for any integer $n\geq 1$, 
        
        \centerline{
          $\exists t=f(t_1,\ldots,t_n) \in \llbracket E_1\rrbracket\setminus\{c\}$ $\Leftrightarrow$ $\exists t'=f(t'_1,\ldots,t'_n) \in \llbracket E_1\rrbracket^{n_c}$.
        }
        
        \noindent Consequently:
        
        \centerline{
          \begin{tabular}{l@{\ }l}
            $f\in \mathrm{First}(E)$ & $\Leftrightarrow$ $f\in \mathrm{First}(E_1)\setminus \{c\}$\\
            & $\Leftrightarrow$  $\exists f(t_1,\ldots,t_n) \in \llbracket E_1\rrbracket\setminus\{c\}$ \hfill \textbf{(Induction hypothesis)}\\
            & $\Leftrightarrow$  $\exists n\geq 1, \exists t'=f(t'_1,\ldots,t'_n) \in \llbracket E_1\rrbracket^{n_c}$ \\
            & $\Leftrightarrow$  $\exists f(t_1,\ldots,t_n)\in \llbracket E_1^{*_c}\rrbracket$\\
          \end{tabular}
        }
    \end{enumerate}
  \end{proof}

 \begin{lemma}\label{Computfollow} 
    Let $\E$ be linear, $1\leq k\leq m$ be two integers and $f$ be a symbol in $\Sigma_m$.  
    The set of symbols $\Follow(\E,f,k)$ can be computed inductively as follows:
    
    \centerline{$\Follow(0,f,k)=\Follow(a,f,k)=\emptyset$,}
    
    \centerline{
      $\Follow(g(\E_1,\ldots,\E_n),f,k)=
        \left\{
          \begin{array}{l@{\ }l}
            \First(\E_k) & \text{ if } f=g,\\
            \Follow(\E_l,f,k) & \text{ if } \exists l\mid f\in \Sigma_{E_l},\\
            \emptyset & \text{ otherwise }. 
          \end{array}
        \right.$
    }
    
    \centerline{
      $\Follow(\E_1+\E_2,f,k)=
        \left\{
          \begin{array}{l@{\ }l}
            \Follow(\E_1,f,k) & \text{ if } f\in \Sigma_{\E_1},\\
            \Follow(\E_2,f,k) & \text{ if } f\in \Sigma_{\E_2},\\ 
             \emptyset & \text{ otherwise }. 
          \end{array}
        \right.$
    }
    
    \centerline{
      $\Follow(\E_1 \cdot_c \E_2,f,k)=
        \left\{
          \begin{array}{l@{\ }l}
            (\Follow(\E_1,f,k)\setminus\{c\}) \cup \First(\E_2) & \text{ if } c\in\Follow(\E_1,f,k),\\
            \Follow(\E_1,f,k) & \text{ if } f\in \Sigma_{\E_1}\\
            &\ \  \wedge c\notin\Follow(\E_1,f,k),\\
            \Follow(\E_2,f,k) & \text{ if } f\in \Sigma_{\E_2}\\
            & \ \  \wedge c\in\Last(\E_1),\\
            \emptyset & \text{ otherwise,}
          \end{array}
        \right.$
    }
    
    \centerline{
      $\Follow(\E_1^{*_c},f,k)=
        \left\{
          \begin{array}{l@{\ }l}
            \Follow(\E_1,f,k) \cup \First(\E_1) & \text{ if } c\in\Follow(\E_1,f,k),\\
            \Follow(\E_1,f,k) & \text{ otherwise,}\\ 
          \end{array}
        \right.$
    }
  \end{lemma}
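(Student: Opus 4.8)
The plan is to argue by structural induction on $\E$, following the same case split as Lemma~\ref{firstComput}. Throughout I use the defining characterization that $g\in\Follow(\E,f,k)$ exactly when some tree $t\in\llbracket\E\rrbracket$ has a subtree whose root is $f$ and whose $k$-th child is $g$, and I repeatedly invoke linearity: since $f\in\Sigma_m$ with $m\geq 1$, the symbol $f$ occurs at most once in $\E$. This single-occurrence property is what keeps the various cases from interfering.

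The base cases $\E=0$ and $\E=a$ are immediate, as no tree in $\emptyset$ or in $\{a\}$ contains a subtree rooted at a symbol of rank $\geq 1$; hence the set is empty. For $\E=g(\E_1,\ldots,\E_n)$, linearity forces $g$ to occur only at the root, so no $\E_i$ contains $g$. If $f=g$, the only $f$-rooted subtree of a tree $g(t_1,\ldots,t_n)\in\llbracket\E\rrbracket$ is the whole tree, whose $k$-th child ranges over $\{\rooot(t_k)\mid t_k\in\llbracket\E_k\rrbracket\}=\First(\E_k)$; if instead $f$ occurs in the (unique) $\E_l$, the $f$-rooted subtrees lie entirely inside the corresponding $t_l$, giving $\Follow(\E_l,f,k)$; otherwise $f$ is absent and the set is empty. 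The sum case $\E=\E_1+\E_2$ is identical in spirit: $f$ lies in at most one $\E_i$, and as $\llbracket\E\rrbracket=\llbracket\E_1\rrbracket\cup\llbracket\E_2\rrbracket$ the $f$-rooted subtrees come solely from the side containing $f$.

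The concatenation case $\E=\E_1\cdot_c\E_2$ is the crux, and the first step I would isolate is a structural description of the $f$-rooted subtrees of a tree $t'\in t_{c\leftarrow\llbracket\E_2\rrbracket}$ with $t\in\llbracket\E_1\rrbracket$: each such subtree is either \emph{(a)} rooted at an occurrence of $f$ already present in $t$, or \emph{(b)} contained entirely within one of the grafted copies of an $\llbracket\E_2\rrbracket$-tree. In case~\emph{(a)} the $k$-th child is the root of the substituted $k$-th subtree, so it equals the old $k$-th child when that child was not $c$, and becomes some element of $\First(\E_2)$ when it was $c$; hence these children range over $\Follow(\E_1,f,k)$ if $c\notin\Follow(\E_1,f,k)$ and over $(\Follow(\E_1,f,k)\setminus\{c\})\cup\First(\E_2)$ if $c\in\Follow(\E_1,f,k)$, with $c$ possibly re-entering via $\First(\E_2)$ precisely when $c\in\llbracket\E_2\rrbracket$. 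In case~\emph{(b)} such subtrees exist iff $f\in\Sigma_{\E_2}$ and at least one copy of $\llbracket\E_2\rrbracket$ is actually grafted, that is, iff some tree of $\llbracket\E_1\rrbracket$ carries a $c$-leaf, which is exactly $c\in\Last(\E_1)$; their $k$-th children then range over $\Follow(\E_2,f,k)$. Linearity guarantees $f\in\Sigma_{\E_1}$ and $f\in\Sigma_{\E_2}$ cannot hold simultaneously, so cases~\emph{(a)} and~\emph{(b)} never coexist, and the four branches of the statement are exactly the mutually exclusive and exhaustive combinations of these observations, each supplied by the relevant induction hypothesis together with Lemma~\ref{firstComput}.

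Finally, for $\E=\E_1^{*_c}$ I would use that every tree of $\llbracket\E\rrbracket$ arises by iteratively grafting $\llbracket\E_1\rrbracket$-trees at $c$-leaves, so every $f$-rooted subtree sits inside some $\E_1$-copy and its $k$-th child is controlled by $\Follow(\E_1,f,k)$; the only new phenomenon is that a $k$-th child equal to $c$ may either remain the leaf $c$ (already recorded in $\Follow(\E_1,f,k)$) or be replaced by the root of a further grafted $\E_1$-copy, thereby contributing $\First(\E_1)$. This gives $\Follow(\E_1,f,k)\cup\First(\E_1)$ when $c\in\Follow(\E_1,f,k)$ and $\Follow(\E_1,f,k)$ otherwise. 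The main obstacle I anticipate is stating the case~\emph{(a)}/\emph{(b)} dichotomy cleanly and pinning down the exact role of $\Last(\E_1)$; once that structural observation is in place, the star case follows by the same replacement argument applied along the iteration.
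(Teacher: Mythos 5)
Your proposal is correct and takes essentially the same route as the paper's proof: a structural induction showing that the semantic set $\{g\in\Sigma \mid \exists t\in\llbracket \E\rrbracket,\ \exists s\preccurlyeq t,\ \rooot(s)=f \wedge k\mbox{-}\mathrm{child}(s)=g\}$ satisfies the stated recurrences, with linearity keeping the cases disjoint, $\First(\E_2)$ (resp. $\First(\E_1)$) accounting for children created by substitution at $c$-leaves, and $c\in\Last(\E_1)$ characterizing when copies of $\llbracket\E_2\rrbracket$ are actually grafted. The paper organizes the concatenation and star cases as branch-by-branch double inclusions (with a minimality argument on the iteration depth for the star) rather than your subtree-location dichotomy, but the underlying ideas coincide.
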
 
  \begin{proof}
    Let $F$ be any linear regular expression, $h$ be any symbol in $\Sigma$, and $j$ be any integer. 
    We denote by $\phi^{h,j}_F$ the set $\{g\in \Sigma \mid \exists t\in \llbracket F \rrbracket, \exists s\preccurlyeq t, \mathrm{root}(s)=h\wedge j\mbox{-}\mathrm{child(s)}=g\}$.
    Let us show by induction over the structure of $E$ that $\mathrm{Follow}(E,f,k)=\phi^{f,k}_E$. 
    Let $h$ be a symbol in $\Sigma$. 
    \begin{enumerate}
      \item If $E=0$ or if $E=a$, then $\mathrm{Follow}(E,f,k)=\emptyset=\phi^{f,k}_E$.
      \item Suppose that $E=g(E_1,\ldots,E_n)$. 
        Since $E\neq 0$, then for any $1\leq j\leq n$, $E_j\neq 0$ and then there exists at least one tree $u_j$ in $\llbracket E_j \rrbracket$. 
        Three cases have to be considered:
        \begin{enumerate}
          \item if $f=g$, since $E$ is linear, for any integer $j\in\{1,\ldots,n\}$ it holds that $f\notin \Sigma_{E_j}$. 
            Then:
            
            \centerline{
              \begin{tabular}{l@{\ }l}
                $h\in \mathrm{Follow}(E,f,k)$ & $\Leftrightarrow$ $h\in \mathrm{First}(E_k)$\\   
                & $\Leftrightarrow$ $\exists t=h(t_1,\ldots,t_n)\in \llbracket E_k \rrbracket$ \ \ \ \  \textbf{(Lemma~\ref{firstComput})}\\   
                & $\Leftrightarrow$ $\exists t'=f(u_1,\ldots,u_{k-1},h(t_1,\ldots,t_n),\ldots,u_n) \in \llbracket E \rrbracket$\\       
                & $\Leftrightarrow$ $ h\in \phi^{f,k}_E$\\       
              \end{tabular}
            }
            
          \item Consider that there exists an integer $l$ such that $f\in\Sigma_{E_l}$. Then:
        
            \centerline{
              \begin{tabular}{l@{\ }l}
                & $h\in \mathrm{Follow}(E,f,k)$ \\
                $\Leftrightarrow$ & $h\in \mathrm{Follow}(E_l,f,k)$\\
                $\Leftrightarrow$ & $h\in \phi^{f,k}_{E_l}$ \ \ \ \ \textbf{(Induction hypothesis)}\\
                $\Leftrightarrow$ & $\exists t\in \llbracket E_l \rrbracket, \exists s\preccurlyeq t, \mathrm{root}(s)=f\wedge k\mbox{-}\mathrm{child(s)}=h$\\
                $\Leftrightarrow$ & $\exists t'=g(u_1,\ldots,u_{l-1},t,\ldots,u_n) \in \llbracket E \rrbracket, \exists s\preccurlyeq t, \mathrm{root}(s)=f\wedge k\mbox{-}\mathrm{child(s)}=h$\\
                $\Leftrightarrow$ & $\exists t'\in \llbracket E \rrbracket, \exists s\preccurlyeq t, \mathrm{root}(s)=f\wedge k\mbox{-}\mathrm{child(s)}=h$\\      
                $\Leftrightarrow$ & $ h\in \phi^{f,k}_E$\\       
              \end{tabular}
            }
        
          \item Otherwise, since $f\notin \Sigma_E$, $\mathrm{Follow}(E,f,k)=\emptyset=\phi^{f,k}_E$.
        \end{enumerate}
        
      \item Suppose that $E=E_1+E_2$. 
        Then:
        
        \centerline{
          \begin{tabular}{l@{\ }l@{\ }l}
            & $h\in \mathrm{Follow}(E,f,k)$\\
            $\Leftrightarrow$ & $h\in \mathrm{Follow}(E_1,f,k)\vee h\in \mathrm{Follow}(E_2,f,k)$\\
            $\Leftrightarrow$ & $h\in \phi^{f,k}_{E_1}\vee h\in \phi^{f,k}_{E_2}$ \ \ \ \ \textbf{(Induction hypothesis)}\\
            $\Leftrightarrow$ & $\exists t\in \llbracket E_1 \rrbracket, \exists s\preccurlyeq t, \mathrm{root}(s)=f\wedge k\mbox{-}\mathrm{child(s)}=h$\\
            & $\vee \exists t\in \llbracket E_2 \rrbracket, \exists s\preccurlyeq t, \mathrm{root}(s)=f\wedge k\mbox{-}\mathrm{child(s)}=h$\\
            $\Leftrightarrow$ & $\exists t\in \llbracket E_1 \rrbracket\cup \llbracket E_2 \rrbracket, \exists s\preccurlyeq t, \mathrm{root}(s)=f\wedge k\mbox{-}\mathrm{child(s)}=h$\\
            $\Leftrightarrow$ & $ h\in \phi^{f,k}_E$\\
          \end{tabular}    
        }
      
      \item Let us consider that $E=E_1\cdot_c E_2$. 
        The following cases have to be considered.
        \begin{enumerate}
          \item Let us suppose that $c$ is in $\mathrm{Follow}(E_1,f,k)$.
            \begin{enumerate} 
              \item\label{cas followcomput} Suppose that $h$ is in $\mathrm{Follow}(E_1,f,k)\setminus\{c\}$. 
                According to induction hypothesis, $h$ belongs to $\phi^{f,k}_{E_1}$ and then there exists a tree $t$ in $\llbracket E_1 \rrbracket$ that contains a subtree $s$ such that $\mathrm{root}(s)=f$ and $k\mbox{-}\mathrm{child(s)}=h$. 
                Since $h\neq c$, any tree $t'$ in $t_{c\leftarrow \llbracket E_2 \rrbracket}$ still contains the subtree $s$. Consequently, there exists a tree $t'$ in $\llbracket E \rrbracket$ that contains a subtree $s$ such that $\mathrm{root}(s)=f$ and $k\mbox{-}\mathrm{child(s)}=h$, \emph{i.e.} $h$ is in $\phi^{f,k}_E$.
              \item Suppose now that $h$ is in $\mathrm{First}(E_2)$. 
                According to Lemma~\ref{firstComput}, there exists a tree $t'=h(t_1,\ldots,t_n)$ in $\llbracket E_2 \rrbracket$. 
                Since $c$ is in $\mathrm{Follow}(E_1,f,k)$, then by induction hypothesis $c$ belongs to $\phi^{f,k}_{E_1}$ ; therefore there exists a tree $t_c$ in $\llbracket E_1 \rrbracket$ containing a subtree $s$ such that $\mathrm{root}(s)=f$ and $k\mbox{-}\mathrm{child(s)}=c$. 
                Consequently, $(t_c)_{c\leftarrow t'}$ admits by construction a subtree $s$ such that $\mathrm{root}(s)=f$ and $k\mbox{-}\mathrm{child(s)}=c$. 
                Therefore $h$ is in $\phi^{f,k}_E$.
              \item Conversely, suppose that $h\in \phi^{f,k}_E$. 
                Then there exists a tree $t'$ in $\llbracket E \rrbracket$ that contains a subtree $s$ such that $\mathrm{root}(s)=f$ and $k\mbox{-}\mathrm{child(s)}=h$.  
                By definition of $\llbracket E \rrbracket$, $t'$ is in $t_{c\leftarrow \llbracket E_2 \rrbracket}$ with $t$ a tree in $\llbracket E_1 \rrbracket$. 
                If $h$ is in $\Sigma_{E_1}\setminus\{c\}$, since $E$ is linear, $h$ appears in $t$ and so do $f$ and $s$. 
                Consequently, $h$ is in $\mathrm{Follow}(E_1,f,k)\subset \mathrm{Follow}(E,f,k)$.
                If $h$ is in $\Sigma_{E_2}$, then the subtree $s$ is created by substituting to a leave $c$ whose father is $f$ a tree in $\llbracket E_2 \rrbracket$ the root of which is $h$.
                Consequently $h$ is in $\mathrm{First}(E_2)\subset \mathrm{Follow}(E,f,k)$.
            \end{enumerate}
            
          \item Suppose now that $f\in\Sigma_{E_1}$ and that $c\notin \mathrm{Follow}(E_1,f,k)$. 
            \begin{enumerate} 
              \item Consider that $h\in \mathrm{Follow}(E_1,f,k)\setminus\{c\}$. 
                According to case~\ref{cas followcomput}, $h\in \phi^{f,k}_E$. 
              \item Conversely, suppose that $ h\in \phi^{f,k}_E$. 
                Then $\exists t'\in \llbracket E \rrbracket, \exists s\preccurlyeq t, \mathrm{root}(s)=f\wedge k\mbox{-}\mathrm{child(s)}=h$.  
                By definition, $t'\in t_{c\leftarrow \llbracket E_2 \rrbracket}$ with $t\in\llbracket E_1 \rrbracket$. 
                Since $c\notin \mathrm{Follow}(E_1,f,k)$, $h\in\Sigma_{E_1}\setminus\{c\}$ and since $E$ is linear, $h$ appears in $t$ and so do $f$ and $s$. 
                Consequently, $h\in \mathrm{Follow}(E_1,f,k)=\mathrm{Follow}(E,f,k)$.
            \end{enumerate}
            
          \item Consider that $f\in\Sigma_{E_2}$ and that $c\in  \mathrm{Last}(E_1)$.
            By definition, there exists a tree $t_c \in \llbracket E_1 \rrbracket$ such that $c\in\mathrm{Leaves}(t_c)$.
            \begin{enumerate} 
              \item Suppose that $h\in \mathrm{Follow}(E_2,f,k)$. 
                By induction hypothesis, $\exists t\in \llbracket E_2 \rrbracket, \exists s\preccurlyeq t, \mathrm{root}(s)=f, k\mbox{-}\mathrm{child(s)}=h$. 
                Moreover, $s$ is a subtree of $(t_c)_{c\Leftarrow t}\subset (t_c)_{c\Leftarrow \llbracket E_2 \rrbracket}\subset \llbracket E_ \rrbracket$. 
                Consequently, $\exists t\in \llbracket E \rrbracket, \exists s\preccurlyeq t, \mathrm{root}(s)=f, k\mbox{-}\mathrm{child(s)}=h$. 
                Hence, $h\in \phi^{f,k}_E$. 
              \item Conversely, suppose that $ h\in \phi^{f,k}_E$. 
                Then $\exists t'\in \llbracket E \rrbracket, \exists s\preccurlyeq t, \mathrm{root}(s)=f\wedge k\mbox{-}\mathrm{child(s)}=h$. 
                By definition, $t'\in t_{c\leftarrow \llbracket E_2 \rrbracket}$ with $t\in\llbracket E_1 \rrbracket$. 
                Since $f\in\Sigma_{E_2}$, so does $h$. 
                Consequently, $s$ is a subset that belongs to a tree $t''$ that has been substituted to a leave $c$ of $t$ to appear in $t'$. 
                Hence $h\in \mathrm{Follow}(E_2,f,k)=\mathrm{Follow}(E,f,k)$.
            \end{enumerate}
            
          \item In all other cases, there is no tree $t$ in $\llbracket E \rrbracket$ such that $\exists s\preccurlyeq t, \mathrm{root}(s)=f \wedge k\mbox{-}\mathrm{child(s)}=h$. 
            Hence $\mathrm{Follow}(E,f,k)=\emptyset=\phi^{f,k}_E$.
        \end{enumerate}
        
      \item Let us consider that $E=E_1^{*_c}$.
        \begin{enumerate}
          \item Consider that $h\in\mathrm{Follow}(E_1,f,k)$.
            By induction hypothesis, $\exists t\in \llbracket E_1 \rrbracket, \exists s\preccurlyeq t, \mathrm{root}(s)=f, k\mbox{-}\mathrm{child(s)}=h$.
            Since $\llbracket E_1 \rrbracket\subset \llbracket E \rrbracket$, $\exists t\in \llbracket E \rrbracket, \exists s\preccurlyeq t, \mathrm{root}(s)=f, k\mbox{-}\mathrm{child(s)}=h$.
            Hence $h\in \phi^{f,k}_E$.            
          \item Suppose that $h\in\mathrm{First}(E_1)$.
            According to Lemma~\ref{firstComput}, there exists a tree $t'=h(t_1,\ldots,t_n)$ in $\llbracket E_1 \rrbracket$.
            By construction, $c\in\mathrm{Follow}(E_1,f,k)$.
            By induction hypothesis, $c$ belongs to $\phi^{f,k}_{E_1}$ and then there exists a tree $t_c$ in $\llbracket E_1 \rrbracket$ that contains a subtree $s$ such that $\mathrm{root}(s)=f$ and $k\mbox{-}\mathrm{child(s)}=c$.
            Moreover, $t=(t_c)_{c\leftarrow t'}$ belongs by construction to $\llbracket E_1 \rrbracket\cdot_c \llbracket E_1 \rrbracket\subset \llbracket E_1 \rrbracket^{*_c}$ and contains a subtree $s$ such that $\mathrm{root}(s)=f$ and $k\mbox{-}\mathrm{child(s)}=h$.
            Therefore $\exists t\in \llbracket E \rrbracket, \exists s\preccurlyeq t, \mathrm{root}(s)=f, k\mbox{-}\mathrm{child(s)}=h$.
            Hence $h\in \phi^{f,k}_E$.            
          \item Conversely, suppose that $ h\in \phi^{f,k}_E$.  
            Then $\exists t\in \llbracket E \rrbracket, \exists s\preccurlyeq t, \mathrm{root}(s)=f, k\mbox{-}\mathrm{child(s)}=h$.
            By construction, $t\in t'_{c\leftarrow \llbracket E_1 \rrbracket}$ with $t'$ a tree in $\llbracket E_1 \rrbracket^{n_c}$ for some integer $n$.
            Let us consider that $t$ is the tree satisfying the previous condition with minimal $n$.
            If $s$ appears in $t'$, contradiction with the minimality of $n$.
            By construction, $t$ is obtained from $t'$ by substituting to its leaves $c$ a tree $u$ in $\llbracket E_1 \rrbracket$.
            If $s$ is a subtree of $u$, $h\in \phi^{f,k}_{E_1}$ and by induction hypothesis, $h\in \mathrm{Follow}(E_1,f,k)\subset \mathrm{Follow}(E,f,k)$.
            If $s$ does not appear in $u$, then it is created during the $c$-product, \emph{i.e.} $t'$ admits a leave $c$ the root of which is $f$, and $u$ admits $h$ as root.
            Consequently, by induction hypothesis, $c\in\mathrm{Follow}(E_1,f,k)$ and according to Lemma~\ref{firstComput}, $h\in\mathrm{First}(E_1)\subset \mathrm{Follow}(E,f,k)$.
        \end{enumerate}
    \end{enumerate}
  \end{proof}
  
   The two functions $\mathrm{First}$ and $\mathrm{Follow}$ are sufficient to compute the \emph{$K$-position tree automaton} of $E$. 
 
  \begin{definition}\label{def aut pos}
    Let $\E$ be linear.
    The $K$-\emph{position automaton} ${\cal P_{\E}}$ is the automaton $(Q,\Sigma,Q_T,\Delta)$ defined by 
    
    \centerline{
      \begin{tabular}{l@{\ }l}
        $Q=$ & $\ \ \{f^k \mid f\in \Sigma_m\wedge 1\leq k\leq m\}$\\
        & $\cup \{\varepsilon^1\}$ with $\varepsilon^1$ a new symbol not in $\Sigma$, $Q_T=\{\varepsilon^1\}$,\\
      \end{tabular}}    
    
      \centerline{$\begin{array}{r@{\ }c@{\ }l}
      \Delta = & & \{(f^k,g,g^1,\ldots,g^n)\mid f\in\Sigma_m\wedge k\leq m\wedge g\in\Sigma_n\wedge g \in \Follow(\E,f,k)\} \\
            & \cup &  \{(\varepsilon^1,f,f^1,\ldots,f^m)\mid f \in \Sigma_m\wedge f \in\First(\E)\}\\
            &  \cup &  \{(\varepsilon^1,c)\mid c\in \Sigma_0\wedge c\in\First(\E)\}\\
            & \cup & \{(f^k,c)\mid f\in\Sigma_m\wedge k\leq m\wedge c\in \Follow(\E,f,k)\}\\
      \end{array}$}
      
  \end{definition}
  
  In order to show that the $K$-position tree automaton of $\E$ accepts $\llbracket \E\rrbracket$, we characterize the membership of a tree $t$ in 
   the language denoted by $\E$ using the functions $\mathrm{First}$ and $\mathrm{Follow}$.
  
   \begin{proposition}\label{proposition Le}
   Let $\E$ be linear. A tree $t$ belongs to $\llbracket \E\rrbracket$ if and only if:
    \begin{enumerate}
      \item $\mathrm{root}(t)\in\First(\E)$ and
      \item for every subtree $f(t_1,\ldots,t_m)$ of $t$, for any integer $k$ in $\{1,\ldots,m\}$, $\mathrm{root}(t_k)\in\Follow(\E,f,k)$.
    \end{enumerate}
  \end{proposition}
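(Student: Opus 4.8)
The plan is to prove the two implications separately. The forward direction is immediate from the definitions: if $t\in\llbracket\E\rrbracket$, then $\rooot(t)$ is the root of a tree in $\llbracket\E\rrbracket$, so $\rooot(t)\in\First(\E)$; and for any subtree $f(t_1,\ldots,t_m)\preccurlyeq t$ and any $k$, taking this subtree itself as the witness $s$ in the definition of $\Follow(\E,f,k)$ gives $\rooot(t_k)\in\Follow(\E,f,k)$. No induction is needed for this direction.

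For the converse I would argue by structural induction on the linear expression $\E$, using Lemmas~\ref{firstComput} and~\ref{Computfollow} to translate the global conditions on $t$ into the conditions demanded of the immediate subexpressions. The genuine base case is $\E=a$: condition~1 forces $\rooot(t)=a$, and since $\Follow(a,\cdot,\cdot)=\emptyset$ condition~2 forbids any internal node, so $t=a\in\llbracket\E\rrbracket$. For $\E=f(\E_1,\ldots,\E_n)$, condition~1 and $\First(\E)=\{f\}$ give $t=f(s_1,\ldots,s_n)$; the clause $\Follow(\E,f,k)=\First(\E_k)$ yields $\rooot(s_k)\in\First(\E_k)$, and, using linearity together with the inclusions $\First(\E_k)\subseteq\Sigma_{\E_k}$ and $\Follow(\E_k,g,j)\subseteq\Sigma_{\E_k}$, a short downward induction on depth shows that every symbol occurring in $s_k$ lies in $\Sigma_{\E_k}$, so $\Follow(\E,g,j)=\Follow(\E_k,g,j)$ throughout $s_k$; the induction hypothesis then places each $s_k$ in $\llbracket\E_k\rrbracket$, hence $t\in\llbracket\E\rrbracket$. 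The sum case reduces directly through the corresponding clause of Lemma~\ref{Computfollow}.

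The crux is the $c$-product case $\E=\E_1\cdot_c\E_2$, together with the star case built on top of it. Here linearity is decisive: since each non-constant symbol occurs at most once, the non-constant symbols of $\E_1$ and $\E_2$ are disjoint, so every internal node of $t$ is attributable unambiguously to exactly one subexpression. Reading off the four clauses of $\Follow(\E_1\cdot_c\E_2,f,k)$, I would locate the frontier where control passes from $\E_1$ to $\E_2$: at a node labelled $f\in\Sigma_{\E_1}$ with $c\in\Follow(\E_1,f,k)$, its $k$-th child either continues in $\E_1$, with root in $\Follow(\E_1,f,k)\setminus\{c\}$, or opens an $\E_2$-subtree, with root in $\First(\E_2)$. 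Cutting $t$ along this frontier produces a tree $t'$ whose former $c$-positions now carry the hanging $\E_2$-subtrees; one then checks that $t'$ meets the two conditions relative to $\E_1$ and that each hanging subtree meets them relative to $\E_2$, so the induction hypothesis gives $t'\in\llbracket\E_1\rrbracket$ and each subtree in $\llbracket\E_2\rrbracket$, whence $t\in\llbracket\E_1\rrbracket\cdot_c\llbracket\E_2\rrbracket$. The star case $\E=\E_1^{*_c}$ is treated by applying the same cutting argument repeatedly, peeling off one copy of $\E_1$ at a time; as with condition (C3) in the proof of Lemma~\ref{firstComput}, an auxiliary induction on the number of $c$-product layers closes it. The main obstacle is precisely this decomposition: one must verify that the purely local $\Follow$-conditions on $t$ match, piece by piece, the $\First$- and $\Follow$-conditions required by the induction hypothesis on each fragment, with linearity guaranteeing that no symbol is spuriously shared across the cut.
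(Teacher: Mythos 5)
Your proposal is correct and follows essentially the same route as the paper's own proof: the forward implication is read off directly from the definitions of $\First$ and $\Follow$, and the converse is proved by structural induction on $\E$, where the decisive product and star cases use exactly the paper's device (its map $\phi$) of cutting off the maximal $\E_2$- (resp.\ $\E_1$-) subtrees, replacing them by $c$, and applying the induction hypothesis to the resulting skeleton and to the hanging subtrees, with the star case handled by iterating the cut. The only difference is presentational: you describe the cut via the dichotomy $\Follow(\E_1,f,k)\setminus\{c\}$ versus $\First(\E_2)$, while the paper defines $\phi$ by membership of the root in $\Sigma_{\E_1}$, but the construction and the overall argument are the same.
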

  \begin{proof}
    If $t$ belongs to $\llbracket \E\rrbracket$, it holds by definition that $\rooot(t)\in \First(\E)$ and that $\forall f(t_1,\ldots,t_m)\preccurlyeq t, \forall 1\leq k\leq m, \rooot(t_k)\in \Follow(\E,f,k)$.
    Let us show the reciprocal part by induction over the structure of $E$, \emph{i.e.} \textbf{(I)} $\rooot(t)\in \First(\E)$ and \textbf{(II)}$\forall f(t_1,\ldots,t_m)\preccurlyeq t, \forall 1\leq k\leq m, \rooot(t_k)\in \Follow(\E,f,k)$ $\Rightarrow$ $t\in\llbracket \E\rrbracket$.    
    \begin{enumerate}
      \item If $E=a$, $\mathrm{First}(E)=\{a\}$. 
        Hence $\rooot(t)\in \First(\E) \Rightarrow t=a \Rightarrow t\in \llbracket \E\rrbracket$.
        
      \item Let us consider that $E=g(E_1,\ldots,E_n)$. 
        In this case, $\First(\E)=\{g\}$.  
        From \textbf{(I)}, $\mathrm{root}(t)=g\in \First(\E)$,  and then $t=g(u_1,\ldots,u_n)$.
        Let $u_j\in\{u_1,\ldots,u_n\}$.
        From \textbf{(II)} $\rooot(u_j)\in \Follow(\E,f,j)=\mathrm{First}(E_j)$ \textbf{(A)}.
        Since $u_j$ is a subtree of $t$, Condition \textbf{(II)} also implies that $\forall f(t_1,\ldots,t_m)\preccurlyeq u_j, \forall 1\leq k\leq m, \rooot(t_k)\in \Follow(\E,f,k)$ \textbf{(B)}.
        Since $f$ appears in $u_j$, $f\in\Sigma_{E_j}$ and then $\Follow(\E,f,k)=\Follow(\E_j,f,k)$ \textbf{(C)}.
        By induction hypothesis, the conditions \textbf{(A)}, \textbf{(B)} and \textbf{(C)} implies that $u_j\in \llbracket \E_j\rrbracket$.
        Consequently $t=g(u_1,\ldots,u_n)\in \llbracket \E\rrbracket$.
        
      \item Let us suppose that $E=E_1+E_2$. Then:
      
        \centerline{
          \begin{tabular}{l@{\ }l}
            &  $\rooot(t)\in \First(\E),$\\
            &  $\forall f(t_1,\ldots,t_m)\preccurlyeq t, \forall 1\leq k\leq m, \rooot(t_k)\in \Follow(\E,f,k)$\\
            $\Rightarrow$ &  $\rooot(t)\in \First(\E_j),$\\
            & $\forall f(t_1,\ldots,t_m)\preccurlyeq t, \forall 1\leq k\leq m, \rooot(t_k)\in \Follow(\E_j,f,k), j\in\{1,2\}$\\
            $\Rightarrow$&  $t\in \llbracket \E_j\rrbracket, j\in\{1,2\}$\ \ \ \  \textbf{(Induction hypothesis)}\\
            $\Rightarrow$ &  $t\in \llbracket \E_1\rrbracket\cup \llbracket E_2\rrbracket$\\
          \end{tabular}
        }
        
      \item Let us consider that $E=E_1\cdot_c E_2$.
        Two cases have to be considered.
        \begin{enumerate}
          \item If $\rooot(t)$ is in $\First(\E_2)$, then $c$ is in $\llbracket \E_1\rrbracket$. 
            Consequently, any symbol appearing in $t$ belongs to $\Sigma_{E_2}$ by definition of $\mathrm{Follow}$ and following \textbf{(II)}. 
            Hence, for any subtree $f(t_1,\ldots,t_m)$ of $t$, for any integer $k$ in $\{1,\ldots,m\}$, the root of $t_k$ is in $\Follow(\E_2,f,k)$.
            By induction hypothesis, $t\in \llbracket \E_2\rrbracket$.
            Furthermore $\llbracket \E_2\rrbracket=c\cdot_c \llbracket \E_2\rrbracket\subset \llbracket \E_1\rrbracket\cdot_c \llbracket \E_2\rrbracket$.
            Consequently $t$ is in $\llbracket \E_1\rrbracket\cdot_c \llbracket \E_2\rrbracket$.
          \item Suppose that $\rooot(t)$ is in $\First(\E_1)$. 
            Necessarily, by definition of $\mathrm{First}$, the root of $t$ cannot be $c$.
            
            \noindent Let us construct the tree $u=\phi(t)$ where $\phi$ is inductively defined for any tree $t'$ as follows:
            
            \centerline{
              $\phi(t')=
                \left\{
                  \begin{array}{l@{\ }l}
                    f(\phi(t_1),\ldots,\phi(t_n)) & \text{ if } t'=f(t_1,\ldots,t_n)\\
                    & \ \ \wedge (f=\rooot(t)\vee f\in \Sigma_{E_1}),\\
                    c & \text{otherwise}.
                  \end{array}
                \right.$ 
            }
            
            \noindent Notice that a subtree $s$ of $t$ is substituted by $c$ when its root is in $\mathrm{First}(E_2)$.
            Therefore, if $s$ is the $k$-th child of a node $f$, then it holds $c\in\mathrm{Follow}(E_1,f,k)$.
            
            \noindent Since by construction, \textbf{(I)} implies $\rooot(u)=\rooot(t)$ is in $\First(\E_1)$ and \textbf{(II)} implies $\forall f(t_1,\ldots,t_m)\preccurlyeq u, \forall 1\leq k\leq m$, $\rooot(t_k)$ is in $\Follow(\E_1,f,k)$, it holds according to induction hypothesis that $u\in \llbracket \E_1\rrbracket$.
            Furthermore, let us consider the set $V=\{v\mid g(r_1,\ldots,v,\ldots,r_l) \preccurlyeq t \wedge \rooot(v)\in \Sigma_{E_2}\wedge g\in \Sigma_{E_1} \}$.
            Let $v$ be a tree in $V$.
            By construction, $v$ satisfies $\rooot(v)\in\First(\E_2)$ since it is in $\mathrm{Follow}(E,g,x)$ for some integer $x$.
            Moreover, \textbf{(II)} implies that $\forall f(t_1,\ldots,t_m)\preccurlyeq v, \forall 1\leq k\leq m$, $\rooot(t_k)$ is in $\Follow(\E,f,k)=\Follow(\E_2,f,k)$. 
            It holds according to induction hypothesis that $v\in \llbracket \E_2\rrbracket$.
            Finally, $t$ can be obtained from $u$ by substituting to the leaves labelled by $c$ a tree in $V$.
            Therefore $t$ is in $t'_{c\leftarrow V}\subset t'_{c\leftarrow \llbracket \E_2\rrbracket}\subset \llbracket \E_1\rrbracket\cdot_c \llbracket \E_2\rrbracket$.
        \end{enumerate}
        
      \item Let us suppose that $E=E_1^{*_c}$.
        If $\rooot(t)=c$ then $t=c$ and $t\in\llbracket \E\rrbracket $. 
        Otherwise, $\rooot(t)$ is in $\First(\E_1)$. 
        Furthermore, \textbf{(II)} implies that $\forall f(t_1,\ldots,t_m)\preccurlyeq t$, $\forall 1\leq k\leq m$, $\rooot(t_k)$ is either in $\mathrm{First}(E_1)$ if $c\in\mathrm{Follow}(E_1,f,k)$, or in $\Follow(\E_1,f,k)$.
        
        \noindent Let us construct the tree $u=\phi(t)$ where $\phi$ is inductively defined for any tree $t'$ as follows:
        
        \centerline{
          $\phi(t')=
            \left\{
              \begin{array}{l@{\ }l}
                f(\phi(t_1),\ldots,\phi(t_n)) & \text{ if } t'=f(t_1,\ldots,t_n)\\
                & \ \ \wedge (f=\rooot(t)\vee f\notin \mathrm{First}(E_1)),\\
                c & \text{otherwise}. 
              \end{array}
            \right.$      
        }
        
        \noindent Notice that a subtree $s$ is substituted by $c$ when its root is in $\mathrm{First}(E_1)$.
        Therefore, if $s$ is the $k$-th child of a node $f$, then it holds $c\in\mathrm{Follow}(E_1,f,k)$.
        
        Since by construction, \textbf{(I)} implies that $\rooot(u)=\rooot(t)$ is in $\First(\E_1)$, and since \textbf{(II)} implies that $\forall f(t_1,\ldots,t_m)\preccurlyeq u, \forall 1\leq k\leq m$, $\rooot(t_k)$ is in $\Follow(\E_1,f,k)$, it holds according to induction hypothesis that $u\in \llbracket \E_1\rrbracket$.
        Furthermore, let us consider the set $V=\{v\mid g(r_1,\ldots,v,\ldots,r_l) \preccurlyeq t \wedge \rooot(v)\in \mathrm{First}(E_1) \}\setminus\{t\}$.
        Let $v$ be a tree in $V$.
        By construction, $v$ satisfies $\rooot(v)\in\First(\E_1)$, and $\forall f(t_1,\ldots,t_m)\preccurlyeq v, \forall 1\leq k\leq m$,  either $\rooot(t_k)$ is in $\mathrm{First}(E_1)$ if $c\in\mathrm{Follow}(E_1,f,k)$, or in $\Follow(\E_1,f,k)$.
        
        \noindent The function $\phi$ can be applied over any tree $v$ in $V$ to produce another tree in $\llbracket \E_1\rrbracket$ and another set of smaller trees $V'$.
        By recurrence, it can be shown that this process halts and that any tree in $V$ belongs to $\llbracket \E_1\rrbracket^{n_c}$ for some integer $n$.
        As a direct consequence, since $t$ can be obtained from $u$ by substituting to the leaves labelled by $c$ a tree in $V$, the tree $t$ is in $u_{c\leftarrow V}\subset u_{c\leftarrow \llbracket \E_1\rrbracket^{n_c}}\subset \llbracket \E_1\rrbracket^{(n+1)_c}\subset \llbracket \E_1\rrbracket^{*_c}$.
    \end{enumerate}
  \end{proof}
  
  Let us show how to link the characterization in Proposition~\ref{proposition Le} with the transition sequences in ${\cal P_{\E}}$.
  
  \begin{proposition}\label{proposition Lpe}
    Let $\E$ be linear and ${\cal P_{\E}}=(Q,\Sigma,Q_T,\Delta)$.
    Let $t=f(t_1,\ldots,t_m)$ be a term in $T_{\Sigma}$.
    Then the two following propositions are equivalent:
    \begin{enumerate}
      \item $\forall g(s_1,\ldots,s_l)\preccurlyeq t$, $\forall p\leq l$, $\rooot(s_p)\in \Follow(\E,g,p)$,
      \item $\forall 1\leq k\leq m$, $f^k\in \Delta^*(t_k)$.
    \end{enumerate}
  \end{proposition}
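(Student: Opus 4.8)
The plan is to factor the whole statement through one self-contained lemma describing exactly when a state $f^k$ can sit at the root of a bottom-up run. Precisely, I would first prove, by structural induction on an arbitrary tree $s\in T_{\Sigma}$, that for every state $f^k\in Q$ one has $f^k\in\Delta^*(s)$ if and only if $\rooot(s)\in\Follow(\E,f,k)$ together with the condition that $\rooot(s_p)\in\Follow(\E,g,p)$ for every subtree $g(s_1,\ldots,s_l)\preccurlyeq s$ and every $p\leq l$. The first clause records the parent context $(f,k)$ in which $s$ is read, while the second clause records only the parent-child relations occurring strictly inside $s$; these two clauses never interfere.

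The engine of the induction is the following observation about ${\cal P_{\E}}$: from a source state $f^k$, reading a symbol $g\in\Sigma_l$, the transition relation $\Delta$ contains the single rule $(f^k,g,g^1,\ldots,g^l)$, and this rule is present exactly when $g\in\Follow(\E,f,k)$. Hence the run is forced: the children of a $g$-node read from $f^k$ must land in the states $g^1,\ldots,g^l$. In the base case $s=c\in\Sigma_0$ we have $f^k\in\Delta^*(c)=\Delta(c)$ iff $(f^k,c)\in\Delta$ iff $c\in\Follow(\E,f,k)$, and there are no parent-child pairs, so both sides agree. In the inductive case $s=g(s_1,\ldots,s_l)$ the observation gives $f^k\in\Delta^*(s)$ iff $g\in\Follow(\E,f,k)$ and $g^p\in\Delta^*(s_p)$ for all $p$; applying the induction hypothesis to each $g^p\in\Delta^*(s_p)$ rewrites it as ``$\rooot(s_p)\in\Follow(\E,g,p)$ and all parent-child relations inside $s_p$ hold'', and the union of the top relations $\rooot(s_p)\in\Follow(\E,g,p)$ with the relations inside the $s_p$ is exactly the set of all parent-child pairs of $s$.

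With the lemma in hand the proposition is immediate. For $(1)\Rightarrow(2)$, fix $k$: condition (1) applied to the subtree $t=f(t_1,\ldots,t_m)\preccurlyeq t$ with $p=k$ gives $\rooot(t_k)\in\Follow(\E,f,k)$, and since every subtree of $t_k$ is a subtree of $t$, condition (1) also provides all parent-child relations inside $t_k$; the backward direction of the lemma then yields $f^k\in\Delta^*(t_k)$. For $(2)\Rightarrow(1)$, take any subtree $g(s_1,\ldots,s_l)\preccurlyeq t$ and any $p\leq l$. If this subtree is $t$ itself then $g=f$, $s_p=t_p$, and $\rooot(t_p)\in\Follow(\E,f,p)$ follows from $f^p\in\Delta^*(t_p)$ by the forward direction of the lemma; otherwise the subtree lies inside some $t_k$, and the forward direction applied to $f^k\in\Delta^*(t_k)$ gives $\rooot(s_p)\in\Follow(\E,g,p)$.

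The main obstacle is getting the lemma to be a genuine equivalence rather than a one-sided implication: this hinges entirely on the forced-transition observation, which is what lets me invert the run and read the $\Follow$-conditions back out of $\Delta^*$. A secondary care-point is the reflexive reading of $\preccurlyeq$, since condition (1) silently includes the parent-child relations at the very root $f$ of $t$; these must be matched against the first clause $\rooot(s)\in\Follow(\E,f,k)$ of the lemma (one per child $k$) rather than against its internal-subtree clause, and keeping this bookkeeping straight is where an informal argument is most likely to slip.
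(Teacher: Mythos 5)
Your proof is correct, and it runs on the same engine as the paper's: a structural induction whose key step exploits the fact that in ${\cal P_{\E}}$ the transitions leaving a state $f^k$ on a symbol $g\in\Sigma_l$ are forced to be $(f^k,g,g^1,\ldots,g^l)$, and exist exactly when $g\in\Follow(\E,f,k)$. The organization, however, is genuinely different. The paper inducts directly on the statement of the proposition itself, i.e.\ on the simultaneous claim about $t=f(t_1,\ldots,t_m)$ and all of its children; this forces it into a base case where every child $t_j$ is a constant and an inductive case where every child is of the form $g_j(r_{j,1},\ldots,r_{j,n_j})$, leaving the mixed case (some children constants, some not) implicit. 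You instead factor everything through a one-state/one-tree characterization lemma --- $f^k\in \Delta^*(s)$ if and only if $\rooot(s)\in\Follow(\E,f,k)$ together with all parent-child conditions internal to $s$ --- proved by a uniform induction whose base case is simply $s=c\in\Sigma_0$, and then derive the proposition with no further induction. What your route buys is exactly this uniformity (constant and non-constant children are absorbed by the same clause of the lemma, so no case is glossed over), plus an explicit isolation of the forced-transition observation that the paper uses only silently in the last step of its chain of equivalences. Your closing bookkeeping --- matching the reflexive subtree $t\preccurlyeq t$ against the root clause of the lemma and the strict subtrees against its internal clause --- is also handled correctly, so the reduction from the proposition to the lemma is sound in both directions.
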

  \begin{proof}
    By induction over $t$.
    \begin{itemize}
      \item Let $t=f(a_1,\ldots,a_m)$ with $a_j\in\Sigma_0$ for any integer $1\leq j\leq m$. 
        Hence, by construction of ${\cal P}_{\E}$, for any integer $j\leq m$, $\rooot(a_j)\in \Follow(\E,f,j)$ $\Leftrightarrow$ $f^k\in\Delta(a_j)$.
      \item Suppose that $t=f(t_1,\ldots,t_m)$, with $t_j=g_j(r_{j,1},\ldots,r_{j,n_j})$ for any integer $1\leq j\leq m$.      
        Let $j$ be an integer in $\{1,\ldots,n\}$. 
        By induction hypothesis, the two following conditions are equivalent:
        \begin{enumerate}
          \item $\forall g(s_1,\ldots,s_l)\preccurlyeq t_j$, $\forall p\leq l$, $\rooot(s_p)\in \Follow(\E,g,p)$,
          \item $\forall 1\leq k\leq n_j$, $g^k_j\in \Delta^*(r_{j,k}$.
        \end{enumerate}
        
        \noindent Hence:
        
        \centerline{
          \begin{tabular}{l@{\ }l}
            & $\forall g(s_1,\ldots,s_l)\preccurlyeq t$, $\forall p\leq l$, $\rooot(s_p)\in \Follow(\E,g_i,p)$\\
            $\Leftrightarrow$ & $\forall 1\leq j\leq m$, $\forall g(s_1,\ldots,s_l)\preccurlyeq t_j$, $\forall p\leq l$, $\rooot(s_p)\in \Follow(\E,g,p)$\\
            & $\wedge$ $g_j=\rooot(t_j)\in \Follow(\E,f,j)$\\
            $\Leftrightarrow$ & $\forall 1\leq j\leq m$, $\forall 1\leq k\leq n_j$, $g^k_j\in \Delta^*(s_k)$ $\wedge$ $(f^j,g_j,g^1_j,\ldots,g^l_j)\in\Delta)$\\
            $\Leftrightarrow$ & $\forall 1\leq k\leq m$, $f^k\in \Delta^*(t_k)$.\\
          \end{tabular}
        }
    \end{itemize}
  \end{proof}
  
  As a direct consequence of the two previous propositions, it can be shown that the $K$-position automaton of $\E$ recognizes the language denoted by $E$.
  
  \begin{theorem}\label{thm lang pe eq e}
     If $\E$ is linear, then ${\cal L}({\cal P}_{\E})=\llbracket \E\rrbracket$.
  \end{theorem}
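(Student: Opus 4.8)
The plan is to reduce membership in ${\cal L}({\cal P}_{\E})$ to the combinatorial characterization of $\llbracket\E\rrbracket$ already established in Proposition~\ref{proposition Le}, using Proposition~\ref{proposition Lpe} as the bridge. By definition a tree $t$ is accepted by ${\cal P}_{\E}$ exactly when $\Delta^*(t)\cap Q_T\neq\emptyset$, and since $Q_T=\{\varepsilon^1\}$ this is the single condition $\varepsilon^1\in\Delta^*(t)$. So the whole proof amounts to showing that $\varepsilon^1\in\Delta^*(t)$ holds if and only if the two conditions of Proposition~\ref{proposition Le} hold for $t$.

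First I would observe, by inspecting the four families of transitions in Definition~\ref{def aut pos}, that $\varepsilon^1$ never occurs as one of the argument states $g^1,\ldots,g^n$ of any rule: it appears only as the \emph{head} of the rules $(\varepsilon^1,f,f^1,\ldots,f^m)$ and $(\varepsilon^1,c)$. Consequently $\varepsilon^1$ can only be produced at the outermost application of $\Delta^*$, never deeper inside the tree. Writing $t=f(t_1,\ldots,t_m)$ with $f\in\Sigma_m$ and $m\geq 1$, this gives: $\varepsilon^1\in\Delta^*(t)$ iff there is a rule $(\varepsilon^1,f,q_1,\ldots,q_m)\in\Delta$ with $q_k\in\Delta^*(t_k)$ for every $k$. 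Since the only such rule has $q_k=f^k$ and exists precisely when $f\in\First(\E)$, this reduces to the conjunction of $\rooot(t)=f\in\First(\E)$ and $f^k\in\Delta^*(t_k)$ for all $1\leq k\leq m$.

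Next I would invoke Proposition~\ref{proposition Lpe} to rewrite the family of conditions $f^k\in\Delta^*(t_k)$ for $1\leq k\leq m$ as the single statement that $\rooot(s_p)\in\Follow(\E,g,p)$ for every subtree $g(s_1,\ldots,s_l)\preccurlyeq t$ and every $p\leq l$, which is exactly condition~(2) of Proposition~\ref{proposition Le}. Together with $\rooot(t)\in\First(\E)$, which is condition~(1), Proposition~\ref{proposition Le} then yields the equivalence $\varepsilon^1\in\Delta^*(t)\Leftrightarrow t\in\llbracket\E\rrbracket$, completing the argument.

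The only point needing separate care is the base case where $t=c$ is a constant, since Proposition~\ref{proposition Lpe} is phrased for a tree $f(t_1,\ldots,t_m)$ with an explicit child list. Here I would argue directly: $\Delta^*(c)=\Delta(c)$, and the only rule with head $\varepsilon^1$ reading $c$ is $(\varepsilon^1,c)$, which belongs to $\Delta$ iff $c\in\First(\E)$; meanwhile condition~(2) of Proposition~\ref{proposition Le} is vacuous for $t=c$, so Proposition~\ref{proposition Le} gives $c\in\llbracket\E\rrbracket\Leftrightarrow c\in\First(\E)$, matching acceptance. I expect the main (though modest) obstacle to be this bookkeeping around the special state $\varepsilon^1$: one must justify that it can never be created at an internal node, so that acceptance genuinely factors as a root condition conjoined with the internal $\Follow$ conditions, rather than anything subtler.
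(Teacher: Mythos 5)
Your proof is correct and follows essentially the same route as the paper's: unfold acceptance of $t=f(t_1,\ldots,t_m)$ at the root into the condition $f\in\First(\E)$ together with $f^k\in\Delta^*(t_k)$ for all $k$, translate the latter via Proposition~\ref{proposition Lpe} into the $\Follow$ conditions, and conclude with Proposition~\ref{proposition Le}, handling constants as a separate base case exactly as the paper does. The only difference is presentational: you make explicit the (correct but not strictly needed) observation that $\varepsilon^1$ occurs only as the head of transitions, which the paper leaves implicit.
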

  \begin{proof}
    Let $t$ be a tree in $T_\Sigma$. 
    
    \noindent If $t=a\in\Sigma_0$, $a\in {\cal L}({\cal P}_{\E})$ $\Leftrightarrow$ $a\in\mathrm{First}(E)$ $\Leftrightarrow$ $a\in\llbracket \E\rrbracket$.

    \noindent If $t=f(t_1,\ldots,t_n)$:
    
    \centerline{
      \begin{tabular}{l@{\ }l}
        & $t\in {\cal L}({\cal P}_{\E})$\\
        $\Leftrightarrow$ & $\varepsilon^{1}\in \Delta^*(t)$\\
        $\Leftrightarrow$ & $(\varepsilon^{1},f,f^1,\ldots,f^n)\in\Delta$ $\wedge$ $\forall 1\leq j\leq n$, $f_j\in\Delta^{*}(t_j)$\\
        $\Leftrightarrow$ & $f\in\mathrm{First}(E)$\\
        & \ \ $\wedge$ $\forall g(s_1,\ldots,s_l)\preccurlyeq t$, $\forall p\leq l$, $\rooot(s_p)\in \Follow(\E,g,p)$ \ \ \ \ (Proposition~\ref{proposition Lpe})\\
        $\Leftrightarrow$ & $t\in \llbracket \E\rrbracket$ \ \ \ \ (Proposition~\ref{proposition Le})\\
      \end{tabular}
    }
  \end{proof}
  
  This construction can be extended 
to expressions that are not necessarily linear
  using the linearization and the mapping $h$.  
    The \emph{$K$-Position Automaton} ${\cal P_{\E}}$ associated with $\E$ is obtained by replacing each transition  $(f^k_j,g_i,g^1_{i},\dots,g^n_{i})$ of the tree automaton ${\cal P_{\b \E}}$ by $(f^k_j,h(g_i),g^1_{i},\dots,g^n_{i})$.

  \begin{corollary}
    
    %
      $h(\llbracket\b\E\rrbracket)=h({\cal L}({\cal  P_{\b\E}}))={\cal L}({\cal P_{\E}})=\llbracket\E\rrbracket$. 
  \end{corollary}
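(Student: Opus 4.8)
The plan is to establish the chain of equalities
$$h(\llbracket\b\E\rrbracket)=h({\cal L}({\cal P_{\b\E}}))={\cal L}({\cal P_{\E}})=\llbracket\E\rrbracket$$
by reading it as three separate claims and proving each in turn, leaning on Theorem~\ref{thm lang pe eq e} and on the two basic facts that $h$ commutes with the semantic bracket and that the unmarked automaton ${\cal P_{\E}}$ is, by definition, obtained from ${\cal P_{\b\E}}$ by relabelling each transition $(f^k_j,g_i,\ldots)$ into $(f^k_j,h(g_i),\ldots)$.

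First I would dispose of the two outer equalities, which are essentially definitional once the inner one is in hand. For the leftmost equality $h(\llbracket\b\E\rrbracket)=h({\cal L}({\cal P_{\b\E}}))$, observe that $\b\E$ is linear, so Theorem~\ref{thm lang pe eq e} applies directly and gives ${\cal L}({\cal P_{\b\E}})=\llbracket\b\E\rrbracket$; applying $h$ to both sides yields the claim with no further work. For the rightmost equality $h(\llbracket\b\E\rrbracket)=\llbracket\E\rrbracket$, I would argue by a straightforward structural induction on $\E$, using the inductive extension of $h$ to expressions defined in the preliminaries together with the inductive definition of $\llbracket\cdot\rrbracket$. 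The point is simply that $h$ is a rank-preserving alphabet morphism, so it distributes over $+$, over the $c$-product $\cdot_c$, over the $c$-closure $*_c$, and over symbol application $f(\cdots)$; since erasing the marks of $\b\E$ recovers $\E$, the languages match after applying $h$. This is routine and I would only sketch one representative case (say $\cdot_c$) and assert the rest.

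The substance of the proof, and the step I expect to be the main obstacle, is the central equality $h({\cal L}({\cal P_{\b\E}}))={\cal L}({\cal P_{\E}})$, because it is the one place where we must reconcile the nondeterminism introduced by collapsing distinct marked symbols $g_i,g_{i'}$ with the same image $g$ under $h$. The clean way to handle it is to prove, by induction on the structure of a tree $t$ over $\Sigma$, the equivalence
$$t\in{\cal L}({\cal P_{\E}})\iff \exists\,\b t\in\llbracket\b\E\rrbracket\text{ with }h(\b t)=t.$$
The forward and backward directions both reduce to tracking an accepting run. Because the state set of ${\cal P_{\E}}$ is literally that of ${\cal P_{\b\E}}$ and the only change is the transition labels (the target $g^1,\ldots,g^n$ states are untouched), every run of ${\cal P_{\E}}$ on $t$ lifts to a run of ${\cal P_{\b\E}}$ on some premimage $\b t$ obtained by choosing, at each node, a marked symbol $g_i$ with $h(g_i)=g$ for which the corresponding marked transition exists; conversely every run of ${\cal P_{\b\E}}$ projects down under $h$. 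The care needed is to ensure these local lifting choices assemble into a globally consistent tree $\b t$, which follows because the linearity of $\b\E$ forces each rank-$\geq 1$ marked symbol to occur uniquely, so the states $g_i^1,\ldots,g_i^n$ determine their successors unambiguously and no conflict can arise between sibling subtrees.

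Once this equivalence is proved, reading it as a set equality gives exactly ${\cal L}({\cal P_{\E}})=h({\cal L}({\cal P_{\b\E}}))$, closing the central gap. Combining the three pieces — the Theorem-driven left equality, the morphism-induction right equality, and the run-lifting central equality — yields the full chain and hence the corollary. I would present the central equivalence as the heart of the argument and treat the outer two equalities briefly, since they follow from results and definitions already established in the excerpt.
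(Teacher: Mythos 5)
Your proposal is correct and fills in exactly the argument the paper leaves implicit: the paper states this corollary without proof, treating it as immediate from Theorem~\ref{thm lang pe eq e} applied to the linear expression $\b\E$, the definition of ${\cal P_{\E}}$ as a relabeling of the transitions of ${\cal P_{\b\E}}$ by $h$, and the standard fact that linearization followed by $h$ preserves the denoted language. Your three-way decomposition, including the run-lifting argument for $h({\cal L}({\cal P_{\b\E}}))={\cal L}({\cal P_{\E}})$ (where, as you note, the marks retained on the states $g_i^1,\ldots,g_i^n$ make the lift unambiguous), is precisely the intended justification.
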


\begin{example}
Let $\E=(f(a)^{*_a}\cdot_a b+ h(b))^{*_b}+g(c,a)^{*_c}\cdot_c (f(a)^{*_a}\cdot_a b+ h(b))^{*_b}$ be the regular expression of 
Example~\ref{Pos Automat} and $\b\E=(f_1(a)^{*_a}\cdot_a b+ h_2(b))^{*_b}+g_3(c,a)^{*_c}\cdot_c (f_4(a)^{*_a}\cdot_a b+ h_5(b))^{*_b}$ its linearized form. The $k$-Position Automaton ${\cal  P}_{\b\E}$ associated with $\b\E$ is given in Figure~\ref{fig a t e2}.
%
The set of states is $Q=\{\varepsilon^1,f^1_1,h^1_2,g^1_3,g^2_3,f^1_4,h^1_5\}$. The set of final states is $Q_T=\{\varepsilon^1\}$. 
The set of transition rules $\Delta$ is 

\centerline{
    $f_1(f_1^1)\rightarrow \varepsilon^1$, $f_1(f_1^1)\rightarrow f_1^1$, $f_1(f_1^1)\rightarrow h_2^1$,
}

\centerline{
    $h_2(h_2^1)\rightarrow \varepsilon^1$, $h_2(h_2^1)\rightarrow f_1^1$, $h_2(h_2^1)\rightarrow h_2^1$,
}

\centerline{
    $g_3(g_3^1,g_3^2)\rightarrow g_3^1$, $g_3(g_3^1,g_3^2)\rightarrow \varepsilon^1$,
}

\centerline{
    $f_4(f_4^1)\rightarrow \varepsilon^1$, $f_4(f_4^1)\rightarrow g_3^1$, $f_4(f_4^1)\rightarrow f_4^1$, $f_4(f_4^1)\rightarrow h_5^1$,
}

\centerline{
    $h_5(h_5^1)\rightarrow \varepsilon^1$, $h_5(h_5^1)\rightarrow g_3^1$, $h_5(h_5^1)\rightarrow f_4^1$, $h_5(h_5^1)\rightarrow h_5^1$,
}

\centerline{
    $a\rightarrow g_3^2$,
}

\centerline{
    $b\rightarrow \varepsilon^1$, $b\rightarrow f_1^1$, $b\rightarrow h_2^1$, $b\rightarrow g_3^1$, $b\rightarrow f_4^1$, $b\rightarrow h_5^1$
}

\noindent The number of states is $|Q|=7$ and the number of transition rules is $|\Delta|=23$. 

\end{example}
\begin{figure}[H]
  \centerline{
	\begin{tikzpicture}[node distance=2.5cm,bend angle=30,transform shape,scale=1]
	  \node[accepting,state] (eps) {$\varepsilon^1$};
	  \node[state, above left of=eps] (f11) {$f^1_1$};	
	  \node[state, above right of=eps] (h12) {$h^1_2$}; 
      \node[state, below of=eps] (g13) {$g^1_3$};
      \node[state, right of=g13,node distance=3.5cm] (g23) {$g^2_3$};
	  \node[state, below left of=g13,node distance=3.5cm] (h15) {$h^1_5$};
      \node[state, below right of=g13,node distance=3.5cm] (f14) {$f^1_4$};
	  \draw (eps) ++(-1cm,0cm) node {$b$}  edge[->] (eps);  
	  \draw (f11) ++(-1cm,0cm) node {$b$}  edge[->] (f11);  
	  \draw (h12) ++(1cm,0cm) node {$b$}  edge[->] (h12); 
	  \draw (h15) ++(0cm,-1cm) node {$b$}  edge[->] (h15);  
	  \draw (g23) ++(1cm,0cm) node {$a$}  edge[->] (g23);  
	  \draw (g13) ++(-1cm,0cm) node {$b$}  edge[->] (g13);    
	  \draw (f14) ++(0cm,-1cm) node {$b$}  edge[->] (f14);
      \path[->]
        (f11) edge[->,below left] node {$f_1$} (eps)
		(f11) edge[->,loop,above] node {$f_1$} ()
		(h12) edge[->,bend right,above] node {$h_2$} (f11)
		%
		(h12) edge[->,loop,above] node {$h_2$} ()
		(h12) edge[->,below right] node {$h_2$} (eps)
		(f11) edge[->,bend right,above] node {$f_1$} (h12)
	    %
		(h15) edge[->, in=135,out=-135,loop,left] node {$h_5$} ()	
		(h15) edge[->,above left] node {$h_5$} (eps)		
		(h15) edge[->,above left] node {$h_5$} (g13)		
		(h15) edge[->,bend right,above] node {$h_5$} (f14)	  
		(f14) edge[->,in=45,out=-45,loop,right] node {$f_4$} ()	
		(f14) edge[->,bend right,above] node {$f_4$} (h15)		
		(f14) edge[->,above right] node {$f_4$} (eps)		
		(f14) edge[->,above right] node {$f_4$} (g13)
	  ;
      \draw (eps) ++(1.75cm,-0.75cm)  edge[->,in=0,out=90] node[above right,pos=0] {$g_3$} (eps) edge (g13) edge (g23);  
      \draw (g13) ++(1.5cm,1cm)  edge[->,in=90,out=145] node[above right,pos=0] {$g_3$} (g13) edge (g13) edge (g23);
    \end{tikzpicture}
  }
  \caption{The $k$-Position Automaton ${\cal P}_{\b\E}$.}
  \label{fig a t e2}
\end{figure}	

  \subsection{The Follow Tree Automaton}\label{subsec folltreeaut}
  
  In this section, we define the follow tree automaton which is a generalisation of the Follow automaton introduced by L.~Ilie and S.~Yu in~\cite{yu} in the case of words, and that it is a quotient of the $K$-position automaton, similarly to the case of 
  words. 
  Notice that in this automaton, states are no longer positions, but 
  sets
   of positions and that we extend the definition of the function $\Follow$ to the position $\varepsilon^1$ by $\mathrm{Follow}(E,\varepsilon^1,1)=\mathrm{First}(E)$. 
  
  \begin{definition}
    Let $\E$ be linear. 
    The \emph{Follow Automaton} of $\E$ is the tree automaton ${\cal F}_{\E}=(Q,\Sigma,Q_T,\Delta)$ defined as follows: 
    
      
      
      \centerline{$
      \begin{array}{r@{\ }c@{\ }l}
      Q=& &\{ \First(\E)\}\cup \bigcup_{f\in{{\Sigma_{E}}_m}} \{\Follow(\E,f,k) \mid 1\leq k\leq m \}\\
      Q_T=& &\{\First(\E)\}\\
      \Delta= &   &\lbrace(\Follow(\E,g,l),f,\Follow(\E,f,1) , \ldots, \Follow(\E,f,m) \mid f\in {{\Sigma_{E}}_m}\\
      & & \ \ \ \ \wedge f\in \Follow(\E,g,l)\wedge  g\in\Sigma_n\wedge l\leq n \}\\
              & & \cup\{ (I,c)\mid c\in I \wedge c\in\Sigma_0\rbrace\\
      \end{array}
              $}
  \end{definition}
  
  Let us show that ${\cal F_{\E}}$ is a quotient of ${\cal P_{\E}}$ w.r.t. a similarity relation ; since this kind of quotient preserves the language, this method is consequently a 
  proof
   of the fact that the language denoted by $E$ is recognized by ${\cal F}_{\E}$.
  
   A \emph{similarity relation} over an automaton 
   $A=(Q,\Sigma,Q_T,\Delta)$
    is 
    an equivalence relation
    $\sim$ over $Q$ such that for any two states $q$ and $q'$ in $Q$: $q\sim q'$ $\Rightarrow$ $\forall f\in \Sigma_n$, $\forall (q_1,\dots,q_n)\in Q^n$, $(q,f,q_1,\dots,q_n)\in \Delta$ $\Leftrightarrow$ $(q',f,q_1,\dots,q_n)\in \Delta$.
    In other words, two similar states admit the same predecessors w.r.t. any symbol.

	\begin{proposition}\label{similar relation}
	  Let $\cal A$ be an automaton and $\sim$ be a similarity relation over $\cal A$. 
	  Then $ {\cal L}({\cal A}_{/\sim})={\cal L}({\cal A})$.
	\end{proposition}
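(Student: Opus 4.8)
The plan is to prove the language equality by establishing a precise correspondence between the set of states reached by $\Delta^*$ in $\cal A$ and the set reached by $\Delta^*_{/\sim}$ in ${\cal A}_{/\sim}$ on every tree. Concretely, I would prove by structural induction on $t\in T_\Sigma$ that $\Delta^*_{/\sim}(t)=\{[q]\mid q\in\Delta^*(t)\}$, and then read off the statement from the final-state condition of acceptance.

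The key preliminary fact, which is exactly what the similarity hypothesis buys us, is a \emph{saturation} lemma: for every tree $t$ the set $\Delta^*(t)$ is a union of $\sim$-classes, i.e. $q\in\Delta^*(t)$ and $q\sim q'$ imply $q'\in\Delta^*(t)$. This follows immediately from the definition of similarity. Writing $\Delta^*(f(t_1,\dots,t_n))=\Delta(\Delta^*(t_1),\dots,\Delta^*(t_n),f)=\bigcup\Delta(u_1,\dots,u_n,f)$ over the $u_i\in\Delta^*(t_i)$ (and $\Delta^*(a)=\Delta(a)$ at leaves), each block $\Delta(u_1,\dots,u_n,f)=\{q\mid(q,f,u_1,\dots,u_n)\in\Delta\}$ is itself saturated because $q\sim q'$ forces the same incoming transitions; a union of saturated sets is saturated.

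With saturation in hand I would carry out the main induction. The inclusion $\{[q]\mid q\in\Delta^*(t)\}\subseteq\Delta^*_{/\sim}(t)$ is the easy direction and holds for an arbitrary equivalence relation: a transition $(q,f,u_1,\dots,u_n)\in\Delta$ with $u_i\in\Delta^*(t_i)$ directly witnesses $([q],f,[u_1],\dots,[u_n])\in\Delta_{/\sim}$, and the induction hypothesis places each $[u_i]$ in $\Delta^*_{/\sim}(t_i)$. The converse inclusion is where similarity is essential and is the main obstacle. Given $C\in\Delta^*_{/\sim}(t)$ for $t=f(t_1,\dots,t_n)$, unfolding $\Delta_{/\sim}$ yields classes $C_i\in\Delta^*_{/\sim}(t_i)$ together with representatives $r\in C$ and $r_i\in C_i$ such that $(r,f,r_1,\dots,r_n)\in\Delta$. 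A priori these chosen $r_i$ need not lie in $\Delta^*(t_i)$---the induction hypothesis only guarantees that \emph{some} member of $C_i$ does. This is precisely the gap closed by saturation: since $C_i\cap\Delta^*(t_i)\neq\emptyset$ and $\Delta^*(t_i)$ is saturated, in fact $C_i\subseteq\Delta^*(t_i)$, so the actual child states satisfy $r_i\in\Delta^*(t_i)$; hence $r\in\Delta^*(t)$ and $C=[r]\in\{[q]\mid q\in\Delta^*(t)\}$. The leaf case $t=a$ is handled identically using $\Delta^*(a)=\Delta(a)$.

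Finally I would deduce the statement. By the characterisation just proved, $\Delta^*_{/\sim}(t)\cap{Q_T}_{/\sim}\neq\emptyset$ holds iff there is a class $C$ meeting both $\Delta^*(t)$ and $Q_T$; invoking saturation once more ($C\cap\Delta^*(t)\neq\emptyset$ gives $C\subseteq\Delta^*(t)$) turns this into $\Delta^*(t)\cap Q_T\neq\emptyset$, the reverse implication being trivial by taking $C=[q]$ for any $q\in\Delta^*(t)\cap Q_T$. Thus $t\in{\cal L}({\cal A}_{/\sim})$ iff $t\in{\cal L}({\cal A})$, which is the claimed equality $ {\cal L}({\cal A}_{/\sim})={\cal L}({\cal A})$.
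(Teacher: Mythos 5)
Your proof is correct and follows essentially the same route as the paper's: a structural induction on $t$ relating $\Delta^*(t)$ in $\mathcal{A}$ to the reached classes in ${\cal A}_{/\sim}$, with the similarity hypothesis used exactly where representative choice matters, and the language equality then read off the final-state condition. The only difference is organizational: you factor out saturation of $\Delta^*(t)$ as an explicit preliminary lemma, whereas the paper's induction hypothesis (the biconditional $q\in\Delta^*(t)\Leftrightarrow[q]\in\Delta'^*(t)$, quantified over all $q$) contains that fact implicitly; your version also spells out the representative-mismatch gap and the final-state argument ($[q]\cap Q_T\neq\emptyset$ versus $q\in Q_T$) that the paper's terse chain of equivalences leaves to the reader.
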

	\begin{proof}  
	  Let us set ${\cal A}=(Q,\Sigma,Q_T,\Delta)$ and ${\cal A}_{/\sim}=(Q',\Sigma,Q'_T,\Delta')$.
    Let $t$ be tree in $T_\Sigma$.
    Let us show by induction over the structure of $t$ that $q\in\Delta^*(t) \Leftrightarrow [q]\in\Delta'^*(t)$.
    \begin{enumerate}
      \item Consider that $t=a\in\Sigma_0$.
        Then By construction of ${\cal A}_{/\sim}$, $(q,a)\in\Delta$ $\Leftrightarrow$ $([q],a)\in\Delta'$.  
      \item Let us consider that $t=f(t_1,\ldots,t_n)$ with $f$ in $\Sigma_n$ and $t_1,\ldots,t_n$ any $n$ trees in $T_\Sigma$.
        Then:
        
        \centerline{
          \begin{tabular}{l@{\ }l}
            & $q\in\Delta^*(t)$\\ 
            $\Leftrightarrow$ &  $q\in\Delta^*(f(t_1,\ldots,t_n))$\\
            $\Leftrightarrow$ &  $q\in \Delta(f,q_1,\ldots,q_n)$ with $q_j\in\Delta^*(q_j)$ and $1\leq j\leq n$\\
            $\Leftrightarrow$&  $q\in \Delta(f,q_1,\ldots,q_n)$ with $[q_j]\in\Delta'^*([q_j])$ and $1\leq j\leq n$\\
             & \textbf{(Induction hypothesis)}\\
            $\Leftrightarrow$&  $q\in \Delta'(f,[q_1],\ldots,[q_n])$ with $[q_j]\in\Delta'^*([q_j])$ and $1\leq j\leq n$\\
            & \textbf{(Construction of ${\cal A}_{/\sim}$)}\\
            $\Leftrightarrow $&  $[q]\in\Delta'^*(f(t_1,\ldots,t_n))$\\
            $\Leftrightarrow$ &  $[q]\in\Delta'^*(t)$\\
          \end{tabular}
        }
    \end{enumerate}
	\end{proof}
	
	The quotient from $\mathcal{P}_E$ to $\mathcal{F}_E$ is defined by the following similarity relation. 	
	  Let $\E$ be linear and ${\cal P}_{\E}=(Q,\Sigma,Q_T,\Delta)$.
	  The \emph{Follow Relation} is the relation $\sim_{\cal F}$ defined for any two states $f^k$ and $g^l$ in $Q$ by $f^k \sim_{\cal F} g^l$ $\Leftrightarrow$ $\Follow(\E,f,k)=\Follow(\E,g,l)$.

	\begin{proposition}\label{prop simF largest sim}
	  Let $E$ be linear.
	  The relation $\sim_{\cal F}$ is the largest similarity relation over $\mathcal{P}_E$.
	\end{proposition}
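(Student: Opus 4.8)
The plan is to prove two things: that $\sim_{\cal F}$ is indeed a similarity relation over $\mathcal{P}_E$, and that any other similarity relation $\sim$ over $\mathcal{P}_E$ is contained in $\sim_{\cal F}$ (i.e. $q \sim q'$ implies $q \sim_{\cal F} q'$). Together these establish that $\sim_{\cal F}$ is the largest such relation. I would first verify that $\sim_{\cal F}$ is an equivalence relation, which is immediate since it is defined by equality of the sets $\Follow(\E,f,k)$.

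To show $\sim_{\cal F}$ is a similarity relation, I would unwind the definition. Suppose $f^k \sim_{\cal F} g^l$, so $\Follow(\E,f,k)=\Follow(\E,g,l)$. I need that $f^k$ and $g^l$ admit exactly the same predecessors, that is, for every symbol $h\in\Sigma_n$ and every tuple $(q_1,\dots,q_n)\in Q^n$, the transition $(f^k,h,q_1,\dots,q_n)$ is in $\Delta$ if and only if $(g^l,h,q_1,\dots,q_n)$ is. Looking at Definition~\ref{def aut pos}, the transitions arriving at a state of the form $f^k$ are precisely those of shape $(f^k,h,h^1,\dots,h^n)$ with $h\in\Follow(\E,f,k)$, together with the constant transitions $(f^k,c)$ with $c\in\Follow(\E,f,k)$. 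In both cases the existence of the transition depends on the source state $f^k$ \emph{only through the set} $\Follow(\E,f,k)$: the target components $h^1,\dots,h^n$ are determined by $h$ alone, not by $f$ or $k$. Hence equality of $\Follow(\E,f,k)$ and $\Follow(\E,g,l)$ immediately yields the same incoming transitions, so $\sim_{\cal F}$ is a similarity relation.

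For maximality, I would take an arbitrary similarity relation $\sim$ over $\mathcal{P}_E$ and show $f^k \sim g^l \Rightarrow \Follow(\E,f,k)=\Follow(\E,g,l)$. By the definition of similarity, $f^k \sim g^l$ means $f^k$ and $g^l$ have the same predecessors under every symbol. The key observation is that the set of predecessor-symbols of $f^k$ recovers $\Follow(\E,f,k)$: a symbol $h\in\Sigma_n$ labels a transition into $f^k$ exactly when $h\in\Follow(\E,f,k)$ (for $n\geq 1$ via the rule $(f^k,h,h^1,\dots,h^n)$, and for a constant $c$ via $(f^k,c)$). So if $f^k$ and $g^l$ have identical incoming transitions, then the symbols labelling them coincide, giving $\Follow(\E,f,k)=\Follow(\E,g,l)$, i.e. $f^k \sim_{\cal F} g^l$. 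This shows every similarity relation refines $\sim_{\cal F}$.

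The only mild subtlety, and the step I would treat most carefully, is the handling of the unique final state $\varepsilon^1$ and of the constant transitions. The state $\varepsilon^1$ has no incoming transitions in $\mathcal{P}_E$ (it is only a source, via $\First(\E)$), so it is similar only to itself and cannot be conflated with any $f^k$; one should check that the definition of $\sim_{\cal F}$ is stated over states $f^k$ and implicitly keeps $\varepsilon^1$ in its own class, consistently with the quotient $\mathcal{F}_E$ where $\First(\E)$ is the distinguished final state. I expect the argument to reduce entirely to the bookkeeping that incoming transitions of $f^k$ are parametrized exactly by $\Follow(\E,f,k)$; once this correspondence is made precise, both inclusions follow directly from Definition~\ref{def aut pos}.
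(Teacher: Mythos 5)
Your core two-step argument is correct and is, in substance, the paper's own proof: the paper establishes the proposition by one chain of equivalences, $f^k\sim_{\cal F}g^l$ iff $\Follow(\E,f,k)=\Follow(\E,g,l)$ iff (by Definition~\ref{def aut pos}) the transitions having $f^k$ as first component coincide with those having $g^l$ as first component; read one way this says $\sim_{\cal F}$ is a similarity relation, read the other way it says every similarity relation refines $\sim_{\cal F}$. Your similarity and maximality steps rest on exactly the same observation, namely that the transitions $(f^k,h,h^1,\ldots,h^n)$ and $(f^k,c)$ depend on the state $f^k$ only through the set $\Follow(\E,f,k)$, the remaining components being determined by $h$ alone.

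The genuine error is your final paragraph on $\varepsilon^1$, and it is not a harmless subtlety. First, you have the direction backwards: in this bottom-up automaton a transition $(q,f,q_1,\ldots,q_n)$ has $q$ as its \emph{result} and $q_1,\ldots,q_n$ as its \emph{predecessors}, so the transitions relevant to similarity for $\varepsilon^1$ --- those with $\varepsilon^1$ as first component --- do exist and are parametrized by $\First(\E)$; what is true is that $\varepsilon^1$ never occurs among the $q_i$ (it is a sink, not a source). This also contradicts your own usage two paragraphs earlier, where you correctly counted $(f^k,h,h^1,\ldots,h^n)$ as a transition \emph{into} $f^k$. Second, the conclusion you draw --- that $\varepsilon^1$ ``is similar only to itself and cannot be conflated with any $f^k$,'' and that $\sim_{\cal F}$ should keep it in its own class --- is false and would defeat maximality. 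The paper extends $\Follow$ by $\Follow(\E,\varepsilon^1,1)=\First(\E)$ precisely so that $\varepsilon^1$ is treated uniformly: whenever $\First(\E)=\Follow(\E,f,k)$, the states $\varepsilon^1$ and $f^k$ have identical predecessor transitions, so the \emph{largest} similarity relation must merge them, and $\sim_{\cal F}$ (with the extension) does. This situation actually occurs, e.g.\ in the first example of Section~\ref{sec compar}, where $\First(\E)=\Follow(\E,f_i,1)$ for every $i$ and the follow automaton collapses to a single state. A relation keeping $\varepsilon^1$ isolated is still a similarity relation, but it is then strictly smaller than the one that also merges $\varepsilon^1$, so the proposition would fail for your variant. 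The repair is simple: delete that paragraph and run your (correct) two-step argument over all states of ${\cal P}_{\E}$, including $\varepsilon^1$ via $\Follow(\E,\varepsilon^1,1)=\First(\E)$.
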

	\begin{proof}  
	  Let us set ${\cal P}_E=(Q,\Sigma,Q_T,\Delta)$.
	  Let $f^k$ and $g^l$ be two states in $Q$.
	  Then:
    
    \centerline{
      \begin{tabular}{l@{\ }l}
        & $f^k \sim_{\cal F} g^l$\\ 
        $\Longleftrightarrow$ & $\Follow(\E,f,k)=\Follow(\E,g,l)$\\
        $\Longleftrightarrow$ & $\forall h \in\Sigma_m$, $(h\in \Follow(\E,f,k)\Leftrightarrow h\in \Follow(\E,g,l))$\\
        $\Longleftrightarrow$ & $\forall h \in\Sigma_m$, $((f^k,h,h^1,\ldots,h^m)\in\Delta \Leftrightarrow (g^l,h,h^1,\ldots,h^m)\in\Delta)$\\
        & \textbf{(Construction of ${\cal P}_E$)}\\
        $\Longleftrightarrow$ & $\forall h\in \Sigma_m$, $\forall (q_1,\dots,q_m)\in Q^m$, $(f^k,h,q_1,\dots,q_n)\in \Delta$ $\Leftrightarrow$ $(g^l,f,q_1,\dots,q_n)\in \Delta)$\\
        & \textbf{(Construction of ${\cal P}_E$)}\\
      \end{tabular}
    }
  \end{proof}
  
  \begin{proposition}\label{prop quot eq}
    Let $\E$ be linear.
    The finite tree automaton ${\cal P_{\E}}\diagup_{\sim_{\cal F}}$ is isomorphic to ${\cal F}_{\E}$.
  \end{proposition}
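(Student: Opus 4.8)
The plan is to exhibit an explicit state isomorphism $\psi$ from ${\cal P}_{\E}\diagup_{\sim_{\cal F}}$ to ${\cal F}_{\E}$ and to verify that it transports transition rules and final states in both directions. Recalling the extension $\Follow(\E,\varepsilon^1,1)=\First(\E)$, I would set $\psi([f^k])=\Follow(\E,f,k)$ for every position $f^k$ and $\psi([\varepsilon^1])=\First(\E)$. By Proposition~\ref{prop simF largest sim} the relation $\sim_{\cal F}$ is a similarity relation, so the quotient ${\cal P}_{\E}\diagup_{\sim_{\cal F}}$ is well defined and its states are exactly the $\sim_{\cal F}$-classes, each of which is pinned down by a single value of $\Follow$ (and the class of $\varepsilon^1$ by the value $\First(\E)$).

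First I would check that $\psi$ is a well-defined bijection preserving final states. Well-definedness and injectivity are two readings of the very definition of $\sim_{\cal F}$: two positions lie in the same class exactly when their $\Follow$ sets coincide, so $\psi([f^k])=\psi([g^l])$ iff $f^k\sim_{\cal F}g^l$; the same holds for $[\varepsilon^1]$ through the extension, which also disposes of the harmless coincidence $\First(\E)=\Follow(\E,f,k)$ (both sides then simply name one and the same state in each automaton). Surjectivity is immediate, since by construction every state of ${\cal F}_{\E}$ is either $\First(\E)$ or some $\Follow(\E,f,k)$. For final states, the only class meeting $Q_T=\{\varepsilon^1\}$ is $[\varepsilon^1]$, whose image $\First(\E)$ is precisely the unique final state of ${\cal F}_{\E}$.

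The core step is the transition correspondence. Writing $\Delta'$ for the transition set of the quotient, I would exploit the rigid shape of the rules of ${\cal P}_{\E}$: every rule reading a symbol $g\in\Sigma_n$ has the \emph{fixed} right-hand side $(g^1,\ldots,g^n)$, namely the positions of $g$, only the source varying. Unwinding the quotient definition $([q]\times\{g\}\times[q_1]\times\cdots\times[q_n])\cap\Delta\neq\emptyset$, a rule $([f^k],g,[q_1],\ldots,[q_n])\in\Delta'$ can therefore occur only with $[q_i]=[g^i]$ for all $i$, and it occurs iff some representative $p\in[f^k]$ satisfies $g\in\Follow(\E,p)$; since $\sim_{\cal F}$ equalises $\Follow$ along a class, this reduces to $g\in\Follow(\E,f,k)$. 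Applying $\psi$ turns this rule into $(\Follow(\E,f,k),g,\Follow(\E,g,1),\ldots,\Follow(\E,g,n))$ subject to $g\in\Follow(\E,f,k)$, which is exactly a defining rule of ${\cal F}_{\E}$, and conversely every rule of ${\cal F}_{\E}$ arises this way. The constant rules are handled identically: $([f^k],c)\in\Delta'$ iff $c\in\Follow(\E,f,k)$ iff $(\Follow(\E,f,k),c)$ is a rule of ${\cal F}_{\E}$, and likewise for $[\varepsilon^1]$ and $\First(\E)$.

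I expect the main obstacle to be the careful bookkeeping around the state $\varepsilon^1$ rather than any deep argument. Two points need attention: reconciling the clause $g\in\Sigma_n$ in the definition of the transitions of ${\cal F}_{\E}$ with the source $\First(\E)=\Follow(\E,\varepsilon^1,1)$, which is legitimised precisely by the agreed extension of $\Follow$ to $\varepsilon^1$; and confirming that the quotient cannot manufacture spurious transitions, i.e. that the right-hand sides are genuinely forced to be the classes $[g^i]$ by the rigidity noted above. Once these are settled, $\psi$ is a bijection commuting with the transition relations and respecting final states, hence an isomorphism, which is the assertion of the proposition.
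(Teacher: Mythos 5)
Your proposal is correct and follows essentially the same route as the paper: both define the map $[f^k]\mapsto\Follow(\E,f,k)$ (with $[\varepsilon^1]\mapsto\First(\E)$) and reduce the transition correspondence to the equivalence $([f^k],g,[g^1],\ldots,[g^m])\in\Delta \Leftrightarrow g\in\Follow(\E,f,k) \Leftrightarrow (\Follow(\E,f,k),g,\Follow(\E,g,1),\ldots,\Follow(\E,g,m))\in\Delta'$. If anything, you are more thorough than the paper's own proof, which leaves implicit the points you verify explicitly (well-definedness and bijectivity of the map, the constant-symbol rules, final states, and the ``rigidity'' argument that the quotient cannot create transitions whose right-hand sides are not the classes $[g^1],\ldots,[g^m]$).
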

  \begin{proof} 
	  Let us set ${\cal P_{\E}}\diagup_{\sim_{\cal F}}=(Q,\Sigma,Q_T,\Delta)$ and ${\cal F}_{\E}=(Q',\Sigma,Q'_T,\Delta')$.
	  Let us consider the mapping $\phi$ defined from $Q$ to $Q'$ by $\phi([f^k])=\mathrm{Follow}(E,f,k)$ ($\phi([\varepsilon^1])=\mathrm{Follow}(E,\varepsilon^1,1)=First(E)$).
	  Let us show that $([f^k],g,[g^1],\ldots,[g^m])$ is a transition in $\Delta$ if and only if $(\phi([f^k]),g,\phi([g^1]),\ldots,\phi([g^m]))$ is a transition in $\Delta'$.
	  Let $f^k$ and $g^l$ be two states in $Q$.
	  Then:
	  
	  \centerline{
	    \begin{tabular}{l@{\ }l}
	      & $([f^k],g,[g^1],\ldots,[g^m])\in\Delta$\\
	      $\Leftrightarrow$ & $g\in\mathrm{Follow}(E,f,k)$\ \ \ \ \textbf{(Definition of $\sim_{\cal F}$)}\\
	      $\Leftrightarrow$ & $(\mathrm{Follow}(E,f,k),g,\mathrm{Follow}(E,g,1),\ldots,\mathrm{Follow}(E,g,m))\in\Delta$\\
	      & \textbf{(Definition of ${\cal F}_{\E}$)}\\
	      $\Leftrightarrow$ & $(\phi([f^k]),g,\phi([g^1]),\ldots,\phi([g^m]))\in\Delta'$\\
	    \end{tabular}
	  }
	\end{proof}
	
	As a direct consequence of the previous results, the following theorem can be shown.
	
	\begin{theorem}\label{thm LFE eq LE}
	  Let $\E$ be linear. 
	  Then ${\cal L}({\cal F}_{\E})=\llbracket \E\rrbracket$.
	\end{theorem}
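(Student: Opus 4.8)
The plan is to obtain the language equality purely by chaining the four results already established in this subsection, so that no fresh induction over trees is required. The guiding observation is that ${\cal F}_{\E}$ has been engineered precisely so as to coincide, up to isomorphism, with a quotient of the $K$-position automaton by the Follow relation; since both quotienting-by-a-similarity-relation and isomorphism are language-preserving, the target equality follows by transitivity.

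First I would invoke Proposition~\ref{prop quot eq}, which states that ${\cal P_{\E}}\diagup_{\sim_{\cal F}}$ and ${\cal F}_{\E}$ are isomorphic. Because isomorphic tree automata recognize the same language — the isomorphism transports every accepting run bijectively and sends final states to final states — this yields ${\cal L}({\cal F}_{\E})={\cal L}({\cal P_{\E}}\diagup_{\sim_{\cal F}})$. Next I would discharge the hypothesis of Proposition~\ref{similar relation} by appealing to Proposition~\ref{prop simF largest sim}: the Follow relation $\sim_{\cal F}$ is the largest similarity relation over ${\cal P}_{\E}$, hence in particular \emph{a} similarity relation. Proposition~\ref{similar relation} then gives ${\cal L}({\cal P_{\E}}\diagup_{\sim_{\cal F}})={\cal L}({\cal P}_{\E})$.

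Finally, Theorem~\ref{thm lang pe eq e} supplies ${\cal L}({\cal P}_{\E})=\llbracket\E\rrbracket$ for linear $\E$. Concatenating the three equalities produces ${\cal L}({\cal F}_{\E})=\llbracket\E\rrbracket$, which is exactly the statement to be proved.

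I do not expect any genuine obstacle here, since the substantive work has been pre-distributed into the preceding propositions (the tree-induction content living in Theorem~\ref{thm lang pe eq e} and in the verification that $\sim_{\cal F}$ is a similarity relation). The only point deserving an explicit word rather than a silent pass is the language-preservation under isomorphism used in the first step; it is immediate, but I would state it rather than assume it, so that the chain of three equalities reads cleanly from ${\cal L}({\cal F}_{\E})$ down to $\llbracket\E\rrbracket$.
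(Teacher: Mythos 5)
Your proposal is correct and follows exactly the same route as the paper: chain Proposition~\ref{prop quot eq} (isomorphism of ${\cal F}_{\E}$ with ${\cal P_{\E}}\diagup_{\sim_{\cal F}}$), Proposition~\ref{prop simF largest sim} (so $\sim_{\cal F}$ is a similarity relation), Proposition~\ref{similar relation} (language preservation under quotient), and Theorem~\ref{thm lang pe eq e}. Your extra remark about making the language-preservation-under-isomorphism step explicit is a minor presentational improvement over the paper, which leaves it implicit.
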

	\begin{proof}
	  According to Theorem~\ref{thm lang pe eq e}, $\llbracket \E\rrbracket={\cal L}({\cal P}_{\E})$.
	  According to Proposition~\ref{prop quot eq}, ${\cal F}_{\E}$ is a quotient of ${\cal P}_{\E}$ w.r.t. $\sim_{\mathcal{F}}$, which is a similarity relation following Proposition~\ref{prop simF largest sim}.
	  Consequently, Proposition~\ref{similar relation} asserts that ${\cal L}({\cal P}_{\E})={\cal L}({\cal F}_{\E})$.
	  Therefore ${\cal L}({\cal F}_{\E})=\llbracket \E\rrbracket$.
	\end{proof}	
	
  \noindent Finally, this method can be extended to 
  expressions that are not necessarily linear  
  as follows.	
  The \emph{Follow Automaton} ${\cal F_{\E}}$ associated with $\E$ is obtained by replacing each transition  $(I,f_j,\Follow(\E,f_j,1) , \ldots,$ $\Follow(\E,f_j,m))$ of ${\cal F_{\b \E}}$ by $(I,h(f_j),\Follow(\E,f_j,1) , \ldots, \Follow(\E,f_j,m))$.
  
	\begin{corollary}
	${\cal L}({\cal F}_{\E})=\llbracket \E\rrbracket$.	
	\end{corollary}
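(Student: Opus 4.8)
The plan is to mirror the extension already carried out for the $K$-position automaton and to reduce everything to the linear case settled in Theorem~\ref{thm LFE eq LE}. Concretely, I would establish the chain of equalities
\[
{\cal L}({\cal F}_{\E}) = h({\cal L}({\cal F}_{\b\E})) = h(\llbracket\b\E\rrbracket) = \llbracket\E\rrbracket ,
\]
reading it from left to right. The middle equality is immediate: since $\b\E$ is linear, Theorem~\ref{thm LFE eq LE} gives ${\cal L}({\cal F}_{\b\E}) = \llbracket\b\E\rrbracket$, and applying $h$ to both sides yields the claim. The rightmost equality $h(\llbracket\b\E\rrbracket) = \llbracket\E\rrbracket$ is the standard property of linearization; I would prove it by structural induction on $\E$, using that $h$ commutes with each language-forming operation exactly as the syntactic extension $h(\E_1 + \E_2) = h(\E_1) + h(\E_2)$, $h(\E_1 \cdot_c \E_2) = h(\E_1)\cdot_c h(\E_2)$, $h(\E_1^{*_c}) = h(\E_1)^{*_c}$ and $h(f_j(\E_1, \ldots, \E_n)) = f(h(\E_1), \ldots, h(\E_n))$ was set up in Section~\ref{sec prelim}, together with the semantic clauses defining $\llbracket\cdot\rrbracket$. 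This equality already appeared implicitly in the analogous corollary for ${\cal P}_{\E}$.

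The real content is the leftmost equality ${\cal L}({\cal F}_{\E}) = h({\cal L}({\cal F}_{\b\E}))$, which expresses that relabelling each transition through $h$ turns the recognised language into its $h$-image. I would isolate this as a general fact about tree automata: if ${\cal B} = (Q, \Po{\E}{\E}, Q_T, \Delta)$ is any tree automaton over the position alphabet and ${\cal B}^h = (Q, \Sigma, Q_T, \Delta^h)$ has the same states and final states and transitions $\Delta^h = \{(q, h(a), q_1, \ldots, q_n) \mid (q, a, q_1, \ldots, q_n) \in \Delta\}$, then ${\cal L}({\cal B}^h) = h({\cal L}({\cal B}))$, where $h$ is extended homomorphically to a map $T_{\Po{\E}{\E}} \to T_\Sigma$. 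By the very definition of ${\cal F}_{\E}$ from ${\cal F}_{\b\E}$ (only the transition labels are changed; the state set, consisting of $\First(\b\E)$ and the sets $\Follow(\b\E,f,k)$, and the unique final state $\First(\b\E)$ are left untouched), the automaton ${\cal F}_{\E}$ is exactly $({\cal F}_{\b\E})^h$, so this fact applies. Note that on constants $h$ is the identity, so the transitions $(I, c)$ are preserved verbatim and cause no difficulty.

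The general fact itself I would prove by induction on the structure of a tree, showing for every $t \in T_\Sigma$ the equality of state sets
\[
(\Delta^h)^*(t) = \bigcup_{\b t \in T_{\Po{\E}{\E}},\ h(\b t) = t} \Delta^*(\b t);
\]
the accepting condition $(\Delta^h)^*(t) \cap Q_T \neq \emptyset$ then holds exactly when some preimage $\b t$ is accepted by ${\cal B}$, that is, when $t \in h({\cal L}({\cal B}))$. The main point to watch is that several positions $f_j$ may share a single image $f = h(f_j)$: the definition of $\Delta^h$ takes the union of the corresponding transition sets, which is precisely what makes every preimage labelling of a node available during a run, so that the displayed union is reproduced at each inductive step. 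This is the step I expect to require the most care, though it is ultimately routine once the bookkeeping over preimages is in place; everything else is a direct appeal to Theorem~\ref{thm LFE eq LE} and to the inductive commutation of $h$ with the language operations.
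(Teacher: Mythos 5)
Your proposal is correct and follows exactly the route the paper intends: the paper states this corollary without proof, as an immediate consequence of Theorem~\ref{thm LFE eq LE} applied to the linear expression $\b\E$ together with the relabelling of ${\cal F}_{\b\E}$ through $h$, which is precisely the chain ${\cal L}({\cal F}_{\E}) = h({\cal L}({\cal F}_{\b\E})) = h(\llbracket\b\E\rrbracket) = \llbracket\E\rrbracket$ you establish (mirroring the corresponding corollary for ${\cal P}_{\E}$). Your explicit induction proving that relabelling transitions through $h$ yields $h$ of the recognized language, and that $h(\llbracket\b\E\rrbracket)=\llbracket\E\rrbracket$, simply supplies the details the paper leaves implicit.
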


\begin{example} \label{exp follow automaton}
The Follow Automaton ${\cal F}_{\E}$ associated with  the tree expression $\E=(f(a)^{*_a}\cdot_a b+ h(b))^{*_b}+g(c,a)^{*_c}\cdot_c (f(a)^{*_a}\cdot_a b+ h(b))^{*_b}$ of 
 Example~\ref{Pos Automat} is given in Figure~\ref{fig r t e3}.

The set of states is $Q=\{\{a\},\{b,f_1,h_2\},\{b,f_1,h_2,g_3,f_4,h_5\},\{b,g_3,f_4,h_5\},$ $\{b,f_4,h_5\}\}$ and $Q_T=\{\{b,f_1,h_2,g_3,f_4,h_5\}\}$. 
 The set of transition rules $\Delta$ is

%
\centerline{
  $f(\{b,f_1,h_2\})\rightarrow \{b,f_1,h_2,g_3,f_4,h_5\}$, $f(\{b,f_1,h_2\})\rightarrow \{b,f_1,h_2\}$
}

\centerline{
  $h(\{b,f_1,h_2\})\rightarrow \{b,f_1,h_2,g_3,f_4,h_5\}$,
  $h(\{b,f_1,h_2\})\rightarrow \{b,f_1,h_2\}$,
}

\centerline{
  $g(\{b,g_3,f_4,h_5\},g_3^2)\rightarrow \{b,g_3,f_4,h_5\}$,
}

\centerline{
  $g(\{b,g_3,f_4,h_5\},g_3^2)\rightarrow \{b,f_1,h_2,g_3,f_4,h_5\}$,
}

\centerline{
  $f(\{b,f_4,h_5\})\rightarrow \{b,f_1,h_2,g_3,f_4,h_5\}$,
}

\centerline{
  $f(\{b,f_4,h_5\})\rightarrow \{b,g_3,f_4,h_5\}$,
}

\centerline{
  $f(\{b,f_4,h_5\})\rightarrow \{b,f_4,h_5\}$,
}

\centerline{
  $h(\{b,f_4,h_5\})\rightarrow \{b,f_1,h_2,g_3,f_4,h_5\}$,
}
  
\centerline{
  $h(\{b,f_4,h_5\})\rightarrow \{b,g_3,f_4,h_5\}$,
}
  
\centerline{
  $h(\{b,f_4,h_5\})\rightarrow \{b,f_4,h_5\}$,
}

\centerline{$a\rightarrow \{a\}$,}

\centerline{
  $b\rightarrow \{b,f_1,h_2,g_3,f_4,h_5\}$,
  $b\rightarrow \{b,f_1,h_2\}$,
  $b\rightarrow \{b,g_3,f_4,h_5\}$,
  $b\rightarrow \{b,f_4,h_5\}$
}

\noindent The number of states is $|Q|=5$ and the number of transition rules is $|\Delta|=17$. 
\end{example}

\begin{figure}[H]
  \centerline{
	\begin{tikzpicture}[node distance=2.5cm,bend angle=30,transform shape,scale=1]
	  \node[accepting,state,rounded rectangle] (eps) {$\{b,f_1,f_4,g_3,h_2,h_5\}$};
	  \node[state, above of=eps,rounded rectangle,node distance=1.5cm] (h12) {$\{b,f_1,h_2\}$}; 
      \node[state, below of=eps,rounded rectangle] (g13) {$\{b,f_4,g_3,h_5\}$};
      \node[state, right of=g13,rounded rectangle] (g23) {$\{a\}$};
	  \node[state, left of=g13,rounded rectangle] (h15) {$\{b,f_4,h_5\}$};
	  \draw (eps) ++(-2cm,0cm) node {$b$}  edge[->] (eps);  
	  \draw (h12) ++(2cm,0cm) node {$b$}  edge[->] (h12); 
	  \draw (h15) ++(0cm,-1cm) node {$b$}  edge[->] (h15);  
	  \draw (g23) ++(1cm,0cm) node {$a$}  edge[->] (g23);  
	  \draw (g13) ++(0cm,-1cm) node {$b$}  edge[->] (g13);    
      \path[->]
		%
		(h12) edge[->,loop,above] node {$f,h$} ()
		(h12) edge[->,below right] node {$f,h$} (eps)
	    %
		(h15) edge[->, in=135,out=-135,loop,left] node {$f,h$} ()	
		(h15) edge[->,above left] node {$f,h$} (eps)		
		(h15) edge[->,above left] node[pos=0.98] {$f,h$} (g13)		
		%
	  ;
      \draw (eps) ++(1.75cm,-0.75cm)  edge[->,in=0,out=90] node[above right,pos=0] {$g$} (eps) edge (g13) edge (g23);  
      \draw (g13) ++(1.5cm,1cm)  edge[->,in=90,out=145] node[above right,pos=0] {$g$} (g13) edge (g13) edge (g23);
    \end{tikzpicture}
  }
  \caption{The Follow Automaton ${\cal F}_{\E}$.}
  \label{fig r t e3}
\end{figure}

\subsection{The Equation Tree Automaton}

  In \cite{automate2}, Kuske and Meinecke extend the notion of word partial derivatives \cite{antimirov} to tree partial derivatives in order to compute from $\E$ a tree automaton recognizing $\llbracket \E\rrbracket$. 
  Due to the notion of  ranked alphabet, partial derivatives are no longer sets of expressions, but sets of tuples of expressions.
  
  \noindent Let ${\cal N}=(\E_1,\ldots,\E_n)$ be a tuple of regular expressions,
  $\f$ and $\G$ be some regular expressions and $c\in\Sigma_0$. 
  Then ${\cal N}\cdot_c\f$ is the tuple $(\E_1\cdot_c\f,\dots,\E_n\cdot_c\f)$. 
  For  a set ${\cal S}$ of tuples
   of regular expressions, ${\cal S}\cdot_c \f$ is the set ${\cal S}\cdot_c \f=\{{\cal N}\cdot_c\f\mid {\cal N}\in {\cal S}\}$. 
  Finally, $\mathrm{SET}({\cal N})=\{\E_1, \cdots,\E_m\}$ and $\mathrm{SET}({\cal S})=\bigcup_{{\cal N}\in{\cal S}}\mathrm{SET}({\cal N})$.
  
  \noindent Let $f$ be a symbol in $\Sigma_{>0}$. The set $f^{-1}(\E)$ of tuples of regular expressions is defined as follows:
  
  \centerline{ $f^{-1}(0)$ $=\emptyset$,}
  
   \centerline{
   $f^{-1}(F+G)$ $= f^{-1}(\f) \cup  f^{-1}(\G)$,
   }
   
  \centerline{
  $f^{-1}({\f}^{*_c})$ $= f^{-1}({\f})\cdot_{c} {\f}^{*_c}$, }
   
	\centerline{
     $f^{-1}(g(\E_1, \cdots,\E_n))$
     $=\left\{
       \begin{array}{l@{\ }l}
         \{(\E_1, \cdots,\E_n)\} &\text { if } f=g,\\
         \emptyset &\text { otherwise,} 
       \end{array}
     \right.$} 
     
	\centerline{     
     $f^{-1}(\f\cdot_c \G)$
     $=\left\{
      \begin{array}{l@{\ }l}
        f^{-1}(\f)\cdot_c \G&\text { if } c\notin\llbracket\f\rrbracket\\
        f^{-1}(\f)\cdot_c \G\cup f^{-1}(\G) &\text{ otherwise.} 
      \end{array}
     \right.$ 
  }
   
  \noindent The function $f^{-1}$ is extended to any set $S$ of regular expressions by  $f^{-1}(S)=\bigcup_{\E\in S}f^{-1}(\E)$.
  
  \noindent The \emph{partial derivative} of $\E$ w.r.t. a word $w\in\Sigma_{\geq 1}^*$, denoted by $\partial_w(\E)$, is the set of regular expressions inductively defined by: 
  
  \centerline{
   $\partial_w(\E)=
    \left\{
      \begin{array}{l@{\ }l}
        \{\E\} &\text{ if } w=\varepsilon,\\
        \mathrm{SET}(f^{-1}(\partial_{u}(\E)))&\text{ if } w=uf, f\in \Sigma_{\geq 1},u\in\Sigma_{\geq 1}^*, f^{-1}(\partial_{u}(\E))\neq \emptyset,\\
        \{0\}&\text{ if } w=uf, f\in \Sigma_{\geq 1},u\in\Sigma_{\geq 1}^*,f^{-1}(\partial_{u}(\E))=\emptyset.
      \end{array}
    \right.$
}

  
  \noindent The \emph{Equation  Automaton} of $\E$ is the tree automaton ${\cal A_{\E}}=(Q,\Sigma,Q_T,\Delta)$ defined by $Q= \{\partial_{w}(\E)\mid w\in \Sigma^*_{\geq 1}\}$, $Q_T=\{\E\}$, and
  
    
  \centerline{$\begin{array}{r@{\ }c@{\ }l}
    \Delta= & & \{(\f,f,\G_1,\dots,\G_m)\mid \f\in Q,f\in\Sigma_m, {m\geq 1},(\G_1,\dots,\G_m)\in f^{-1}(\f)\}\\
    & \cup & \{(\f,c)\mid~\f\in Q\wedge  c\in(\llbracket \f\rrbracket\cap\Sigma_0)\}\\
      \end{array}$}
    
\begin{example}\label{exp equation automaton}

Let $\E=\underbrace{(f(a)^{*_a}\cdot_a b+ h(b))^{*_b}}_{\f}+\underbrace{g(c,a)^{*_c}}_{\G}\cdot_c \underbrace{(f(a)^{*_a}\cdot_a b+ h(b))^{*_b}}_{\f}$ (Example~\ref{Pos Automat}). 

\centerline{
        $\partial_h(\E)=\{b\cdot_b \f\}$,
        $\partial_f(\E)=\{((a\cdot_af(a)^{*_a})\cdot_a b)\cdot_b \f\}$,
}

\centerline{
  $\partial_{ff}(\E)=\{((a\cdot_af(a)^{*_a})\cdot_a b)\cdot_b \f\}$,
  $\partial_{fh}(\E)=\{b\cdot_b \f\}$
}

\centerline{
  $\partial_g(\E)=\{(a\cdot_c \G)\cdot_c \f, ~(c\cdot_c \G)\cdot_c \f\}$,
  $\partial_{hf}(\E)=\{((a\cdot_af(a)^{*_a})\cdot_a b)\cdot_b \f\}$,
}

\centerline{
  $\partial_{gh}(\E)=\{b\cdot_b \f\}$,
  $\partial_{hh}(\E)=\{((a\cdot_af(a)^{*_a})\cdot_a b)\cdot_b \f\}$
}

\centerline{ 
  $\partial_{gf}(\E)=\{((a\cdot_af(a)^{*_a})\cdot_a b)\cdot_b \f\}$, 
  $\partial_{gg}(\E)=\{(a\cdot_c \G)\cdot_c \f, ~(c\cdot_c \G)\cdot_c \f\}$,
}
 
 The set of states $Q$ is $q_0=\E$, $q_1=((a\cdot_af(a)^{*_a})\cdot_a b)\cdot_b \f$, $q_2=b\cdot_b \f$, $q_3=(c\cdot_c \G)\cdot_c \f$, $q_4=(a\cdot_c \G)\cdot_c \f$. The set of final states is $Q_T=\{q_0\}$. The set of transition rules is
 
 \centerline{
   $b\rightarrow q_0$, $b\rightarrow q_1$, $b\rightarrow q_3$, $b\rightarrow q_2$,
 }
 
 \centerline{$a\rightarrow q_4$,}
 
 \centerline{
   $f(q_1)\rightarrow q_0$,
   $f(q_1)\rightarrow q_1$,
   $f(q_1)\rightarrow q_2$,
   $f(q_1)\rightarrow q_4$,
}

\centerline{
   $h(q_2)\rightarrow q_0$,
   $h(q_2)\rightarrow q_1$,
   $h(q_2)\rightarrow q_2$,
   $h(q_2)\rightarrow q_4$, 
}

\centerline{
   $g(q_3,q_4)\rightarrow q_0$,
   $g(q_3,q_4)\rightarrow q_4$,
}

The number of states is $|Q|=5$ and the number of transition rules is $|\Delta|=15$. 
The Equation Automaton associated with $\E$ is given in Figure~\ref{fig r t e3 eq}.

\end{example}	
	
\begin{figure}[H]
  \centerline{
	\begin{tikzpicture}[node distance=2.5cm,bend angle=30,transform shape,scale=1]
	  \node[accepting,state] (eps) {$q_0$};
      \node[state, below of=eps] (g13) {$q_3$};
      \node[state, right of=g13,node distance=3.5cm] (g23) {$q_4$};
	  \node[state, below left of=g13,node distance=3.5cm] (h15) {$q_2$};
      \node[state, below right of=g13,node distance=3.5cm] (f14) {$q_1$};
	  \draw (eps) ++(-1cm,0cm) node {$b$}  edge[->] (eps);  
	  \draw (h15) ++(0cm,-1cm) node {$b$}  edge[->] (h15);  
	  \draw (g23) ++(1cm,0cm) node {$a$}  edge[->] (g23);  
	  \draw (g13) ++(-1cm,0cm) node {$b$}  edge[->] (g13);    
	  \draw (f14) ++(0cm,-1cm) node {$b$}  edge[->] (f14);
      \path[->]
		(h15) edge[->, in=135,out=-135,loop,left] node {$h$} ()	
		(h15) edge[->,above left] node {$h$} (eps)		
		(h15) edge[->,above left] node {$h$} (g13)		
		(h15) edge[->,bend right,above] node {$h$} (f14)	  
		(f14) edge[->,in=45,out=-45,loop,right] node {$f$} ()	
		(f14) edge[->,bend right,above] node {$f$} (h15)		
		(f14) edge[->,above right] node {$f$} (eps)		
		(f14) edge[->,above right] node {$f$} (g13)
	  ;
      \draw (eps) ++(1.75cm,-0.75cm)  edge[->,in=0,out=90] node[above right,pos=0] {$g$} (eps) edge[] (g13) edge[] (g23);  
      \draw (g13) ++(1.5cm,1cm)  edge[->,in=90,out=145] node[above right,pos=0] {$g$} (g13) edge[] (g13) edge[] (g23);
    \end{tikzpicture}
  }
  \caption{The Equation Automaton ${\cal A}_{\E}$.}
  \label{fig r t e3 eq}
\end{figure}

\subsection{The $k$-C-Continuation Tree Automaton}

In~\cite{automate2}, Kuske and Meinecke show how to efficiently compute the equation tree automaton of a regular expression \emph{via} an extension of Champarnaud and Ziadi's C-Continuation~\cite{ZPC1,ZPC2,khorsi}. 
  In~\cite{cie}, we show how to inductively compute them. 
  We also show how to efficiently compute the $k$-C-Continuation tree automaton associated with a regular expression.
  In this section, we prove that this automaton is isomorphic to the $k$-position tree automaton, similarly to the case of words.

   In this section we consider the following quotient: $0\cdot_c\E=0$.  As we consider only regular expressions without $0$ or reduced to $0$ then if after the computation of $k$-C-Continuation we obtain expression of the form $0\cdot_c\E$ we reduce it to $0$.

  \begin{definition}[\cite{cie}]\label{def1}
   Let $\E\neq 0$ be linear. 
   Let $k$ and $m$ be two integers such that $1\leq k\leq m$. 
   Let $f$ be in $(\Sigma_{\E}\cap\Sigma_{m})$. 
   The \emph{$k$-C-continuation} $C_{f^k}(\E)$ of $f$ in $\E$ is the regular expression defined by:
    
\centerline{
  \begin{tabular}{r@{\ }l}
    $C_{f^k}(g(\E_1, \cdots,\E_m))$ & 
      $=\left\{
        \begin{array}{l@{\ \  \  \ \  \  \  }l}
          \E_k & \text { if } f=g\\
          C_{f^k}(\E_j)& \text{ if } f\in \Sigma_{\E_j}\\
        \end{array}
      \right.$\\
    $C_{f^k}(\E_1+\E_2)$ & 
    $=\left\{
      \begin{array}{l@{\ \  \  \ \  \  \  }l}
        C_{f^k}(\E_1)& \text{ if } f\in \Sigma_{\E_1}\\
        C_{f^k}(\E_2)& \text{ if } f\in \Sigma_{\E_2}\\
      \end{array}
    \right.$\\
    $C_{f^k}(\E_1\cdot_c \E_2)$ & 
    $=\left\{ 
      \begin{array}{l@{\ }l}
        C_{f^k}(\E_1)\cdot_{c} \E_2 &\text{ if } f\in \Sigma_{\E_1}\\
        C_{f^k}(\E_2)             &\text{ if } f\in \Sigma_{\E_2} \\
        & \text{ and } c\in \Last(E_1)\\
        0 					   &\text{ otherwise }\\
      \end{array}
    \right.$\\
    $C_{f^k}({\E_1}^{*_c})$ & $=C_{f^k}({\E_1})\cdot_{c} {\E_1}^{*_c}$\\
  \end{tabular}
}

     By convention, we set $C_{\varepsilon^1}(\E)=\E$. 
  \end{definition}
\begin{lemma}
Let $\E$ be a regular expression without occurrences of $0$ or reduced to $0$. Then, $C_{f^k}({\E})$ is a regular expression without occurrences of $0$ or reduced to $0$.  
\end{lemma}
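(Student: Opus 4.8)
The plan is to prove this by structural induction on $\E$, following exactly the case analysis in Definition~\ref{def1}. The statement to establish is that if $\E$ contains no occurrence of $0$ (or is reduced to $0$), then for every valid $f^k$, the $k$-C-continuation $C_{f^k}(\E)$ is also free of $0$-occurrences (or reduced to $0$). Note that the definition of $C_{f^k}(\E)$ is only invoked when $f\in\Sigma_{\E}$, so in each recursive call we only descend into the subexpression actually containing $f$; this is the key structural fact that keeps the induction clean, since $f$ appears in exactly one place by linearity.

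First I would handle the base-like cases built from the definition. For $\E=g(\E_1,\ldots,\E_m)$, either $f=g$ and $C_{f^k}(\E)=\E_k$, which is $0$-free by hypothesis since it is a subexpression of the $0$-free $\E$; or $f\in\Sigma_{\E_j}$ and $C_{f^k}(\E)=C_{f^k}(\E_j)$, which is $0$-free by the induction hypothesis applied to $\E_j$. The sum case $\E=\E_1+\E_2$ is identical: $C_{f^k}(\E)$ equals $C_{f^k}(\E_i)$ for the unique $i$ with $f\in\Sigma_{\E_i}$, and the induction hypothesis applies directly. For the star case $\E=\E_1^{*_c}$, we have $C_{f^k}(\E)=C_{f^k}(\E_1)\cdot_c \E_1^{*_c}$; here $C_{f^k}(\E_1)$ is $0$-free by induction and $\E_1^{*_c}$ is $0$-free by hypothesis, and since neither factor is $0$, the product $\cdot_c$ introduces no $0$ (the reduction $0\cdot_c\E=0$ is never triggered).

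The delicate case is the concatenation $\E=\E_1\cdot_c\E_2$, which is where the explicit $0$-reduction convention matters. When $f\in\Sigma_{\E_1}$, we get $C_{f^k}(\E_1)\cdot_c\E_2$, and I would argue that $C_{f^k}(\E_1)$ is $0$-free by induction; the only way a $0$ could appear is through the product being $0\cdot_c\E_2$, but this arises precisely when $C_{f^k}(\E_1)$ reduces to $0$, in which case the stated convention reduces the whole product to the single symbol $0$, matching the "reduced to $0$" alternative of the conclusion. When $f\in\Sigma_{\E_2}$ (and $c\in\Last(\E_1)$), we get $C_{f^k}(\E_2)$, $0$-free by induction. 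The "otherwise" branch yields $0$ literally, which satisfies the "reduced to $0$" clause of the statement trivially.

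The main obstacle I anticipate is bookkeeping around the "reduced to $0$" disjunct rather than any deep mathematical difficulty: I must be careful to phrase the invariant so that "$0$-free \emph{or} reduced to $0$" is preserved, because the concatenation case can legitimately produce a bare $0$ via the reduction $0\cdot_c\E=0$. The cleanest formulation is to prove that $C_{f^k}(\E)$ is either a $0$-free expression or is exactly the expression $0$, and to verify at each inductive step that the operations applied to $0$-free arguments (namely $\cdot_c$ and $^{*_c}$) never \emph{insert} a $0$ into an otherwise $0$-free expression except in the one product branch already covered by the convention. Once this invariant is stated precisely, every case is a one-line appeal to the induction hypothesis.
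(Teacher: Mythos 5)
Your proposal is correct and follows essentially the same route as the paper's own proof: a structural induction tracking the case analysis of Definition~\ref{def1}, with the convention $0\cdot_c\E=0$ absorbing the bare $0$ produced by the \emph{otherwise} branch of the concatenation case, and your closing invariant (``either $0$-free or exactly the expression $0$'') is precisely what the paper verifies case by case. One small slip to fix when writing it up: in the star case you claim $C_{f^k}(\E_1)$ is $0$-free ``by induction'' so the reduction is never triggered, but the induction hypothesis only yields ``$0$-free \emph{or} reduced to $0$''; if $C_{f^k}(\E_1)=0$ (which can happen, e.g.\ via the concatenation \emph{otherwise} branch inside $\E_1$), the same convention you already invoke for $\cdot_c$ reduces $C_{f^k}(\E_1)\cdot_c\E_1^{*_c}$ to $0$, so the conclusion stands.
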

\begin{proof}
\begin{sloppy}
Let us show by induction over $\E$ that if $\E$ is a regular expression without occurrences of $0$ or reduced to $0$ then, $C_{f^k}({\E})$ is a regular expression without occurrences of $0$ or reduced to $0$.     
 Suppose that $\E=g(\E_1,\dots,\E_m)$. Hence by definition $C_{f^k}(\E)=\E_k$ and $\E$ is a regular expression without occurrences of $0$ or reduced to $0$. The property is true for the base case.

  Assuming that the property holds for the subexpressions of $\E$.
  \begin{enumerate}
\item Consider that $\E=g(\E_1,\dots,\E_m)$ with $f\neq g$. Then by definition $C_{f^k}(\E)=\E_j$ with $f\in\Sigma_{\E_j}$. By induction hypothesis, $C_{f^k}(\E)=\E_j$ is a regular expression without occurrences of $0$ or reduced to $0$.
\item Let us consider that $\E=\E_1+\E_2$. Suppose that $f\in\Sigma_{\E_p}$ with $p\in\{1,2\}$. Hence From Definition~\ref{def1},$C_{f^k}(\E_1+\E_2)=C_{f^k}(\E_p)$. By induction hypothesis, $C_{f^k}(\E_1+\E_2)$ is a regular expression without occurrences of $0$ or reduced to $0$.   
  \item Consider that $\E=\E_1 \cdot_c \E_2$. Three cases may occur.
   \begin{enumerate}
  \item Suppose that $f\in \Sigma_{\E_1}$. Then by Definition~\ref{def1}, $C_{f^k}(\E_1\cdot_c \E_2)=C_{f^k}(\E_1)\cdot_{c} \E_2$. By induction hypothesis, $C_{f^k}(\E_1)$ and $\E_2$ are regular expressions without occurrences of $0$ or reduced to $0$.  Therefore, $C_{f^k}(\E_1\cdot_c \E_2)=C_{f^k}(\E_1)\cdot_{c} \E_2$ is a regular expression without occurrences of $0$ or reduced to $0$.
\item Consider that $f\in \Sigma_{\E_2}$ and $c\in \Last(E_1)$. In this case $C_{f^k}(\E_1\cdot_c \E_2)=C_{f^k}(\E_2)$. By induction hypothesis, $C_{f^k}(\E_1\cdot_c \E_2)=C_{f^k}(\E_2)$ is a regular expression without occurrences of $0$ or reduced to $0$.   
      \item Consider that $f\in \Sigma_{\E_2}$ and $c\notin \Last(E_1)$. In this case $C_{f^k}(\E_1\cdot_c \E_2)=0$. The property is verified.
      \end{enumerate}
   \item Consider that $\E={\E_1}^{*_c}$. By Definition~\ref{def1}, $C_{f^k_j}(\E_1^{*_c})=C_{f^k_j}(\E_1)\cdot_c \E_2$. By induction hypothesis, $C_{f^k_j}(\E_1)$ and $\E_2$ are regular expressions without occurrences of $0$ or reduced to $0$. Therefore, $C_{f^k_j}(\E_1^{*_c})=C_{f^k_j}(\E_1)\cdot_c \E_2$ is a regular expression without occurrences of $0$ or reduced to $0$. 
  \end{enumerate}
  \end{sloppy} 
 \end{proof}
 
  Let us now show how to compute the $k$-C-Continuation tree automaton.

\begin{definition}[\cite{cie}]\label{def aut c cont lin}
  Let $\E\neq 0$ be linear. 
 The automaton ${\cal  C_{\E}}=(Q_{{\cal C}},\Sigma_E,\{C_{{\varepsilon}^1}(\E)\},$ $\Delta_{{\cal C}})$ is defined by  
 
 \begin{tabular}{l@{\ }l}
 $Q_{{\cal C}}=$ & $\{(f^k,C_{f^k}(\E))\mid f\in \Sigma_m,1\leq k\leq m\}\cup\{({\varepsilon}^1,C_{{\varepsilon}^1}(\E))\}$,\\
  $\Delta_{{\cal C}}=$ & $\{((x,C_{x}(\E)),g,((g^1,C_{g^1}(\E)),\dots,(g^m,C_{g^m}(\E))))\mid  g\in{\Sigma_E}_m,$\\
      & $m\geq 1, (C_{g^1}(\E),\dots,C_{g^m}(\E))\in {g}^{-1}(C_{x}(\E))\}$\\
      & $\cup \{((x,C_{x}(\E)),c)\mid, c\in\llbracket C_{x}(\E)\rrbracket\cap\Sigma_0\}$\\
    \end{tabular}
\end{definition}

The \emph{C-Continuation tree automaton} ${\cal C_{\E}}$ associated with $\E$ is obtained by relabelling the transitions of ${\cal  C_{\b\E}}$ using the mapping $h$.

\begin{theorem}[\cite{cie}]
 The automaton ${\cal  C_{\E}}$ accepts $\llbracket \E\rrbracket$.
\end{theorem}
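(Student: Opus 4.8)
The plan is to show that, for a linear expression $\E$, the automaton ${\cal C_{\E}}$ is isomorphic to the $K$-position automaton ${\cal P_{\E}}$ of Definition~\ref{def aut pos}; since Theorem~\ref{thm lang pe eq e} already gives ${\cal L}({\cal P_{\E}})=\llbracket\E\rrbracket$, the isomorphism transfers the language to ${\cal C_{\E}}$, and the general case follows by relabelling ${\cal C_{\b\E}}$ via $h$ exactly as in the corollary to Theorem~\ref{thm lang pe eq e}. First I would introduce the candidate bijection $\phi$ from the states of ${\cal C_{\E}}$ to those of ${\cal P_{\E}}$ by $\phi((x,C_{x}(\E)))=x$, where $x$ ranges over the positions $f^k$ together with $\varepsilon^1$. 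Because the second component $C_{x}(\E)$ of each state of ${\cal C_{\E}}$ is completely determined by the first component $x$, the map $\phi$ is a bijection, and it sends the unique final state $(\varepsilon^1,C_{\varepsilon^1}(\E))$ to $\varepsilon^1$, matching the final states on both sides; it then remains only to verify that $\phi$ preserves transitions in both directions.

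The heart of the argument is a key lemma proved by structural induction on $\E$ (using the conventions $C_{\varepsilon^1}(\E)=\E$ and $\Follow(\E,\varepsilon^1,1)=\First(\E)$): for every position $f^k$ and every symbol $g$, one has (a) $g\in\First(C_{f^k}(\E))$ iff $g\in\Follow(\E,f,k)$, and (b) if moreover $g\in\Sigma_m$ with $m\geq 1$ and $g\in\Follow(\E,f,k)$, then $g^{-1}(C_{f^k}(\E))=\{(C_{g^1}(\E),\ldots,C_{g^m}(\E))\}$. The induction runs through the recursive clauses of $C_{f^k}$ (Definition~\ref{def1}), $\First$ (Lemma~\ref{firstComput}), $\Follow$ (Lemma~\ref{Computfollow}) and $g^{-1}$ in parallel. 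For instance, in the product case $\E=\E_1\cdot_c\E_2$ with $f\in\Sigma_{\E_1}$ one has $C_{f^k}(\E)=C_{f^k}(\E_1)\cdot_c\E_2$, and the unfolding of $g^{-1}$ on a $c$-product branches on whether $c\in\llbracket C_{f^k}(\E_1)\rrbracket$, which by part (a) applied to the constant $c$ is precisely the branch $c\in\Follow(\E_1,f,k)$ of the $\Follow$ recursion; the two sides therefore coincide. The closure case is analogous, using $C_{f^k}(\E_1^{*_c})=C_{f^k}(\E_1)\cdot_c\E_1^{*_c}$.

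Granting the key lemma, I would conclude as follows. A transition $((f^k,C_{f^k}(\E)),g,(g^1,C_{g^1}(\E)),\ldots,(g^m,C_{g^m}(\E)))$ lies in $\Delta_{\cal C}$ iff $(C_{g^1}(\E),\ldots,C_{g^m}(\E))\in g^{-1}(C_{f^k}(\E))$. Since $g^{-1}(C_{f^k}(\E))$ is nonempty iff $g\in\First(C_{f^k}(\E))$ (a direct consequence of the definition of $g^{-1}$), and in that case it equals, by linearity and part (b), the singleton $\{(C_{g^1}(\E),\ldots,C_{g^m}(\E))\}$, this membership is equivalent to $g\in\Follow(\E,f,k)$, i.e. to $(f^k,g,g^1,\ldots,g^m)\in\Delta_{\cal P}$. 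The constant transitions match through part (a), as $c\in\llbracket C_{f^k}(\E)\rrbracket\cap\Sigma_0$ iff $c\in\First(C_{f^k}(\E))$ iff $c\in\Follow(\E,f,k)$. Hence $\phi$ is an isomorphism and ${\cal L}({\cal C_{\E}})=\llbracket\E\rrbracket$ for linear $\E$; relabelling ${\cal C_{\b\E}}$ by $h$, as in the definition of ${\cal C_{\E}}$, then yields the claim for arbitrary $\E$.

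The main obstacle I expect is part (b) of the key lemma: verifying that the partial derivative $g^{-1}(C_{f^k}(\E))$ collapses to the single tuple of child-continuations. This requires carefully tracking how $g^{-1}$ interacts with the trailing factors $\cdot_c\E_2$ and $\cdot_c\E_1^{*_c}$ that the product and closure clauses of the $k$-C-continuation append, and linearity of $\E$ is essential both to force the derivative to be a singleton and to guarantee that each $C_{g^j}(\E)$ is unambiguously defined.
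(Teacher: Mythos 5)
Your proposal is correct and follows essentially the same route as the paper: the paper likewise establishes that ${\cal C}_{\E}$ is isomorphic to ${\cal P}_{\E}$ (Proposition~\ref{prop c cont aut iso position}, built on Proposition~\ref{prop CfkFollow} and Lemmas~\ref{lemma FirstContinuation}--\ref{lemma FollowContinuation}) and then transfers the language through Theorem~\ref{thm lang pe eq e}. Your part (b) --- that a nonempty $g^{-1}(C_{f^k}(\E))$ equals the singleton $\{(C_{g^1}(\E),\ldots,C_{g^m}(\E))\}$ --- is exactly the step the paper's isomorphism proof passes over silently (the backward direction of ``membership of the specific tuple iff nonemptiness''), so making it explicit is a sound refinement of the same argument rather than a different approach.
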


\begin{example}
Let $\E=(f(a)^{*_a}\cdot_a b+ h(b))^{*_b}+g(c,a)^{*_c}\cdot_c (f(a)^{*_a}\cdot_a b+ h(b))^{*_b}$ defined in Example~\ref{Pos Automat} and 
$\b\E=\underbrace{(f_1(a)^{*_a}\cdot_a b+ h_2(b))^{*_b}}_{\f_1}+\underbrace{g_3(c,a)^{*_c}}_{\G_2}\cdot_c \underbrace{(f_4(a)^{*_a}\cdot_a b+ h_5(b))^{*_b}}_{\f_3}$. 

The computation of the $k$-C-Continuations of $\E$ using the Definition~\ref{def1} is given in Table~\ref{tab c-cont}.


\begin{table}[H]
\centerline{
    \begin{tabular}{l@{\ }ll@{\ }l} 
$C_{f^1_1}(\b\E)=$ & $((a\cdot_a f_1(a)^{*_a})\cdot_a b)\cdot_b \f_1$ & $h(C_{f^1_1}(\b\E))=$  & $((a\cdot_a f(a)^{*_a})\cdot_a b)\cdot_b \f$,\\
$C_{h^1_2}(\b\E)=$ & $b\cdot_b \f_1$ & $h(C_{h^1_2}(\b\E))=$  & $b\cdot_b \f$, \\
$C_{g^1_3}(\b\E)=$ & $(c\cdot_c g_3(c,a)^{*_c})\cdot_c \f_3$ &  $h(C_{g^1_3}(\b\E))=$ & $(c\cdot_c g(c,a)^{*_c})\cdot_c \f$,\\
$C_{g^2_3}(\b\E)=$ & $(a\cdot_c g_3(c,a))^{*_c})\cdot_c \f_3$ & $h(C_{g^2_3}(\b\E))=$ &  $(a\cdot_c g(c,a))^{*_c})\cdot_c \f$,\\
$C_{f^1_4}(\b\E)=$ & $((a\cdot_a f_4(a)^{*_a})\cdot_a b)\cdot_b \f_3$ & $h(C_{f^1_4}(\b\E))=$ & $((a\cdot_a f(a)^{*_a})\cdot_a b)\cdot_b \f$,\\ 
$C_{h^1_5}(\b\E)=$ & $b\cdot_b \f_3$ &	$h(C_{h^1_5}(\b\E))=$  & $b\cdot_b \f$.\\
   \end{tabular}
  }
  \caption{The $k$-C-Continuations of $\b\E$.}
  \label{tab c-cont}
\end{table} 

The set of states of the automaton ${\cal C_{\E}}$ is $Q=\{(\varepsilon^1,C_{\varepsilon^1}(\b\E)),(f^1_1,C_{f^1_1}(\b\E)),(h^1_2,C_{h^1_2}(\b\E)),\\(g^1_3,C_{g^1_3}(\b\E)),(g^2_3,C_{g^2_3}(\b\E)),
(f^1_4,C_{f^1_4}(\b\E)),(h^1_5,C_{h^1_5}(\b\E))\}$.

 The set of transition rules $\Delta$ is
 
 \centerline{
   $a \rightarrow  (g^2_3,C_{g^2_3}(\b\E))$,
}

 \centerline{
   $b\rightarrow (f^1_1,C_{f^1_1}(\b\E))$,\ \ 
   $b\rightarrow (g^1_3,C_{g^1_3}(\b\E))$,\ \ 
   $b\rightarrow  (\varepsilon^1, C_{\varepsilon^1}(\E))$,
}

 \centerline{
   $b\rightarrow (h^1_2,C_{h^1_2}(\b\E))$,\ \ 
   $b\rightarrow (h^1_5,C_{h^1_5}(\b\E))$,\ \ 
   $b\rightarrow (f^1_4,C_{f^1_4}(\b\E))$,
}

 \centerline{
   $f( (f^1_4,C_{f^1_4}(\b\E)) )\rightarrow (g^1_3,C_{g^1_3}(\b\E))$,\ \ 
   $f((f^1_4,C_{f^1_4}(\b\E)) )\rightarrow(f^1_4,C_{f^1_4}(\b\E))$,
}

 \centerline{
   $f((f^1_4,C_{f^1_4}(\b\E)) )\rightarrow(h^1_5,C_{h^1_5}(\b\E))$,\ \ 
   $f( (f^1_1,C_{f^1_1}(\b\E)) )\rightarrow(h^1_2,C_{h^1_2}(\b\E))$,
}

 \centerline{
   $f((f^1_1,C_{f^1_1}(\b\E)) )\rightarrow(\varepsilon^1,C_{\varepsilon^1}(\E))$,\ \ 
   $f( (f^1_1,C_{f^1_1}(\b\E)) )\rightarrow (f^1_1,C_{f^1_1}(\b\E))$,
}

 \centerline{
   $f((f^1_4,C_{f^1_4}(\b\E))) \rightarrow (\varepsilon^1,C_{\varepsilon^1}(\E))$,
}

 \centerline{
   $g((g^1_3,C_{g^1_3}(\b\E)),(g^2_3,C_{g^2_3}(\b\E))) \rightarrow  (\varepsilon^1, C_{\varepsilon^1}(\E))$,
}

 \centerline{
   $g((g^1_3,C_{g^1_3}(\b\E)), (g^2_3,C_{g^2_3}(\b\E)) ) \rightarrow (g^1_3,C_{g^1_3}(\b\E))$,
}

 \centerline{
   $h((h^1_5,C_{h^1_5}(\b\E)) )\rightarrow (\varepsilon^1, C_{\varepsilon^1}(\E)$,\ \ 
   $h( (h^1_2,C_{h^1_2}(\b\E)) )\rightarrow(h^1_2,C_{h^1_2}(\b\E))$,
}

 \centerline{
   $h((h^1_2,C_{h^1_2}(\b\E)))\rightarrow(\varepsilon^1,C_{\varepsilon^1}(\E))$,\ \ 
   $h( (h^1_2,C_{h^1_2}(\b\E)) )\rightarrow (f^1_1,C_{f^1_1}(\b\E))$,
}

 \centerline{
   $h((h^1_5,C_{h^1_5}(\b\E)) )\rightarrow (h^1_5,C_{h^1_5}(\b\E))$,\ \ 
   $h((h^1_5,C_{h^1_5}(\b\E)) )\rightarrow (f^1_4,C_{f^1_4}(\b\E))$,
}

 \centerline{
   $h( (h^1_5,C_{h^1_5}(\b\E)) ) \rightarrow (g^1_3,C_{g^1_3}(\b\E))$
}

\noindent The number of states is $|Q|=7$ and the number of transition rules is $|\Delta|=23$.  
The $k$-C-Continuation Automaton associated with $\E$ is given in Figure~\ref{fig Cfk e}.  
\end{example}

\begin{figure}[H]
  \centerline{
	\begin{tikzpicture}[node distance=2.5cm,bend angle=30,transform shape,scale=1]
	  \node[accepting,state,rounded rectangle] (eps) {$(\varepsilon^1,C_{\varepsilon^1}(\b E))$};
	  \node[state, above left of=eps,rounded rectangle] (f11) {$(f^1_1,C_{f^1_1}(\b E))$};	
	  \node[state, above right of=eps,rounded rectangle] (h12) {$(h^1_2,C_{h^1_2}(\b E))$}; 
      \node[state, below of=eps,rounded rectangle] (g13) {$(g^1_3,C_{g^1_3}(\b E))$};
      \node[state, right of=g13,node distance=3.5cm,rounded rectangle] (g23) {$(g^2_3,C_{g^2_3}(\b E))$};
	  \node[state, below left of=g13,node distance=3.5cm,rounded rectangle] (h15) {$(h^1_5,C_{h^1_5}(\b E))$};
      \node[state, below right of=g13,node distance=3.5cm,rounded rectangle] (f14) {$(f^1_4,C_{f^1_4}(\b E))$};
	  \draw (eps) ++(-2cm,0cm) node {$b$}  edge[->] (eps);  
	  \draw (f11) ++(-2cm,0cm) node {$b$}  edge[->] (f11);  
	  \draw (h12) ++(2cm,0cm) node {$b$}  edge[->] (h12); 
	  \draw (h15) ++(0cm,-1cm) node {$b$}  edge[->] (h15);  
	  \draw (g23) ++(2cm,0cm) node {$a$}  edge[->] (g23);  
	  \draw (g13) ++(-2cm,0cm) node {$b$}  edge[->] (g13);    
	  \draw (f14) ++(0cm,-1cm) node {$b$}  edge[->] (f14);
      \path[->]
        (f11) edge[->,below left] node {$f_1$} (eps)
		(f11) edge[->,loop,above] node {$f_1$} ()
		(h12) edge[->,bend right,above] node {$h_2$} (f11)
		%
		(h12) edge[->,loop,above] node {$h_2$} ()
		(h12) edge[->,below right] node {$h_2$} (eps)
		(f11) edge[->,bend right,above] node {$f_1$} (h12)
	    %
		(h15) edge[->, in=135,out=-135,loop,left] node {$h_5$} ()	
		(h15) edge[->,above left] node {$h_5$} (eps)		
		(h15) edge[->,above left] node {$h_5$} (g13)		
		(h15) edge[->,bend right,above] node {$h_5$} (f14)	  
		(f14) edge[->,in=45,out=-45,loop,right] node {$f_4$} ()	
		(f14) edge[->,bend right,above] node {$f_4$} (h15)		
		(f14) edge[->,above right] node {$f_4$} (eps)		
		(f14) edge[->,above right] node {$f_4$} (g13)
	  ;
      \draw (eps) ++(1.75cm,-0.75cm)  edge[->,in=0,out=90] node[above right,pos=0] {$g_3$} (eps) edge[] (g13) edge[] (g23);  
      \draw (g13) ++(1.5cm,1cm)  edge[->,in=90,out=145] node[above right,pos=0] {$g_3$} (g13) edge[] (g13) edge[] (g23);
    \end{tikzpicture}
  }
  \caption{The $k$-C-Continuation Automaton ${\cal C}_{\E}$.}
  \label{fig Cfk e}
\end{figure}

	Let $\sim_e$ be the equivalence relation over the set of states of ${\cal  C_{\E}}$ defined for any two states $(f^k_j,C_{f^k_j}(\b\E))$ and $(g^p_i,C_{g^p_i}(\b\E))$ by $(f^k_j,C_{f^k_j}(\b\E)) \sim_e (g^p_i,C_{g^p_i}(\b\E)) \Leftrightarrow h(C_{f^k_j}(\b\E))=h(C_{g^p_i}(\b\E))$.
	
	\begin{proposition}[\cite{cie}]
	  The automaton ${\cal C_{\E}}\diagup_{\sim_e}$ is isomorphic to ${\cal A_{\E}}$.
	\end{proposition}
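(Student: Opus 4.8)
The plan is to exhibit an explicit isomorphism $\psi$ from $\mathcal{C}_\E\diagup_{\sim_e}$ to $\mathcal{A}_\E$ sending the class of a state to the $h$-image of its continuation, namely $\psi([(x,C_x(\b\E))])=h(C_x(\b\E))$. The definition of $\sim_e$ makes $\psi$ both well defined and injective for free: two states lie in the same $\sim_e$-class exactly when their continuations share the same $h$-image, so distinct classes are sent to distinct expressions. In particular $\psi([(\varepsilon^1,C_{\varepsilon^1}(\b\E))])=h(\b\E)=\E$, which is the unique final state of $\mathcal{A}_\E$, so the final states correspond under $\psi$.

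The two substantive ingredients are a determinism lemma and a commutation lemma. First I would recall, from~\cite{cie}, that because $\b\E$ is linear every non-constant position $g_i$ occurs exactly once; hence for any state $(x,C_x(\b\E))$ the set $g_i^{-1}(C_x(\b\E))$ is either empty or reduced to the single tuple $(C_{g_i^1}(\b\E),\ldots,C_{g_i^m}(\b\E))$. This is precisely what makes each continuation a legitimate single representative of a derivation and identifies the transitions of $\mathcal{C}_{\b\E}$ with marked partial derivatives. Second, I would prove by structural induction on a linear expression $F$, following in parallel the inductive clauses defining $f^{-1}$, $\cdot_c$ and $h$, the commutation identity
\[
g^{-1}(h(F))=\bigcup_{g_i\,:\,h(g_i)=g} h\bigl(g_i^{-1}(F)\bigr),
\]
where $h$ is applied componentwise to tuples. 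The base cases ($0$, constants, $g(F_1,\ldots,F_n)$) are immediate, the sum case is a disjoint union of occurrences, and the product and star cases reuse the clauses $h(F_1\cdot_c F_2)=h(F_1)\cdot_c h(F_2)$ and $h(F_1^{*_c})=h(F_1)^{*_c}$ together with the induction hypothesis.

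Surjectivity of $\psi$ onto the derived terms then follows by induction on the length of $w$ in $\partial_w(\E)$: using the commutation identity, every element of $\mathrm{SET}(g^{-1}(\partial_u(\E)))$ is the $h$-image of some $C_{g_i^k}(\b\E)$, so each derived term equals $h(C_x(\b\E))$ for an appropriate position $x$. For the transition part, a rule $([(x,C_x(\b\E))],h(g_i),[(g_i^1,\ldots)],\ldots)$ survives in the quotient iff, in $\mathcal{C}_{\b\E}$, $(C_{g_i^1}(\b\E),\ldots,C_{g_i^m}(\b\E))\in g_i^{-1}(C_x(\b\E))$; applying $\psi$ and the commutation identity turns this into $(h(C_{g_i^1}(\b\E)),\ldots)\in g^{-1}(h(C_x(\b\E)))$ with $g=h(g_i)$, which is exactly the condition for the matching rule to belong to $\mathcal{A}_\E$, while the constant rules correspond via $c\in\llbracket C_x(\b\E)\rrbracket\cap\Sigma_0 \Leftrightarrow c\in\llbracket h(C_x(\b\E))\rrbracket\cap\Sigma_0$. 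I expect the main obstacle to be the commutation lemma, and in particular the set/tuple bookkeeping in the $c$-product case, where $f^{-1}(\f\cdot_c\G)$ branches on whether $c\in\llbracket\f\rrbracket$: one must check that $h$ preserves nullability, \emph{i.e.} $c\in\llbracket F\rrbracket \Leftrightarrow c\in\llbracket h(F)\rrbracket$, so that the marked and unmarked derivations always take matching branches.
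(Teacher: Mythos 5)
A preliminary remark: this proposition is quoted from \cite{cie} and the present paper gives no proof of it (only the analogous statements for the $K$-position and follow automata are proved here), so your argument can only be judged on its own merits. Your overall route is the expected one --- the tree analogue of the Champarnaud--Ziadi correspondence --- and its two pillars are correctly identified and correctly handled: determinism of derivation on linear expressions (each $g_i^{-1}(C_x(\b\E))$ is empty or the single tuple of continuations), and the commutation identity $g^{-1}(h(F))=\bigcup_{g_i:\,h(g_i)=g}h\bigl(g_i^{-1}(F)\bigr)$, whose $\cdot_c$ and $*_c$ cases do reduce, as you say, to the fact that the relabelling $h$ preserves nullability, $c\in\llbracket F\rrbracket\Leftrightarrow c\in\llbracket h(F)\rrbracket$. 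The injectivity of $\psi$, the matching of final states, and the two-way translation of transitions (whose converse direction correctly invokes the determinism lemma) are sound.

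There is, however, a genuine gap at the level of state sets: you prove that $\psi$ is injective and that every derived term of ${\cal A}_{\E}$ is of the form $h(C_x(\b\E))$, but you never prove the converse inclusion, namely that every class $[(x,C_x(\b\E))]$ of ${\cal C}_{\E}\diagup_{\sim_e}$ --- and ${\cal C}_{\E}$ has a state for \emph{every} position of $\b\E$ --- is sent by $\psi$ to an expression that actually occurs in some $\partial_w(\E)$. Without this inclusion, $\psi$ need not be a bijection between the two state sets. The point is not a formality: in the word case it follows from the fact that every position of a $0$-free expression is reachable, but for trees this fails. Take $\E=(a\cdot_b f(b))\cdot_a f(a)^{*_a}$, so $\b\E=(a\cdot_b f_1(b))\cdot_a f_2(a)^{*_a}$: since $b\notin\Last(a)$ one gets $C_{f_1^1}(\b\E)=0$, so ${\cal C}_{\E}\diagup_{\sim_e}$ has three states (with $h$-images $\E$, $0$ and $a\cdot_a f(a)^{*_a}$), whereas derivation of $\E$ never produces $0$ and ${\cal A}_{\E}$ has only the two states $\E$ and $a\cdot_a f(a)^{*_a}$. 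So to complete the proof you would need, besides your two lemmas, an argument that every position whose continuation is nonzero is reached by some word $w$ (an induction dual to your surjectivity induction), together with an explicit treatment of the $0$-continuation classes and of the $\{0\}$ clause in the definition of $\partial_w$ --- an issue on which the statement itself, under the conventions of this paper, is delicate.
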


\begin{example}
Using the equivalence-relation $\sim_e$ over the set of states of $k$-C-Continuation Automaton ${\cal C}_{\E}$ (Figure~\ref{fig Cfk e}) we see that $h(C_{f^1_1}(\b\E))=h(C_{f^1_4}(\b\E))$ and $h(C_{h^1_2}(\b\E))=h(C_{h^1_5}(\b\E))$. The automaton ${\cal C_{\E}}\diagup_{\sim_e}$	 is given in Figure~\ref{fig r t e31}. The number of states is $|Q|=5$ and the number of transition rules is $|\Delta|=15$.

\begin{figure}[H]
  \centerline{
	\begin{tikzpicture}[node distance=2.5cm,bend angle=30,transform shape,scale=1]
	  \node[accepting,state,rounded rectangle] (eps) {$\{h(C_{\varepsilon^1}(E))\}$};
      \node[state, below of=eps,rounded rectangle] (g13) {$\{h(C_{g_3^1}(E))\}$};
      \node[state, right of=g13,node distance=3.5cm,rounded rectangle] (g23) {$\{h(C_{g_3^2}(E))\}$};
	  \node[state, below left of=g13,node distance=3.5cm,rounded rectangle] (h15) {$\{h(C_{h_2^1}(E))\}$};
      \node[state, below right of=g13,node distance=3.5cm,rounded rectangle] (f14) {$\{h(C_{f_1^1}(E))\}$};
	  \draw (eps) ++(-2cm,0cm) node {$b$}  edge[->] (eps);  
	  \draw (h15) ++(0cm,-1cm) node {$b$}  edge[->] (h15);  
	  \draw (g23) ++(2cm,0cm) node {$a$}  edge[->] (g23);  
	  \draw (g13) ++(-2cm,0cm) node {$b$}  edge[->] (g13);    
	  \draw (f14) ++(0cm,-1cm) node {$b$}  edge[->] (f14);
      \path[->]
		(h15) edge[->, in=135,out=-135,loop,left] node {$h$} ()	
		(h15) edge[->,above left] node {$h$} (eps)		
		(h15) edge[->,above left] node {$h$} (g13)		
		(h15) edge[->,bend right,above] node {$h$} (f14)	  
		(f14) edge[->,in=45,out=-45,loop,right] node {$f$} ()	
		(f14) edge[->,bend right,above] node {$f$} (h15)		
		(f14) edge[->,above right] node {$f$} (eps)		
		(f14) edge[->,above right] node {$f$} (g13)
	  ;
      \draw (eps) ++(1.75cm,-0.75cm)  edge[->,in=0,out=90] node[above right,pos=0] {$g$} (eps) edge[] (g13) edge[] (g23);  
      \draw (g13) ++(1.5cm,1cm)  edge[->,in=90,out=145] node[above right,pos=0] {$g$} (g13) edge[] (g13) edge[] (g23);
    \end{tikzpicture}
  }
  \caption{The Automaton ${\cal C_{\E}}\diagup_{\sim_e}$.}
  \label{fig r t e31}
\end{figure}	
\end{example}	

In order to show that the $k$-C-continuation tree automaton of $\E$ is isomorphic to the $k$-position automaton of $\E$, we first show the link between the position functions and the C-continuations.
  
  \begin{proposition}[\cite{cie}]\label{prop CfkFollow}
    Let $\E$ be linear, $1\leq k\leq m$ be two integers and $f$ be a position 
    in $\Sigma_E\cap \Sigma_m$. 
    Then $\Follow(E,f,k)=\First(C_{f^k}(\b{\E}))$. 
  \end{proposition}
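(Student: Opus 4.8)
The plan is to proceed by structural induction on the linear expression $\E$, identifying $\E$ with $\b\E$: since $\E$ is linear, its linearization is a bijective relabeling of the symbols of rank $\geq 1$, so $C_{f^k}(\b\E)$ and $C_{f^k}(\E)$ correspond under this relabeling and $\First$ commutes with it. Throughout, $f$ is a position in $\Sigma_{\E}\cap\Sigma_m$ with $m\geq 1$, so the cases $\E=0$ and $\E=a\in\Sigma_0$ are vacuous (no such $f$ exists) and need not be treated. The goal is thus to show $\Follow(\E,f,k)=\First(C_{f^k}(\E))$ by matching, case by case, the inductive clauses of $\Follow$ (Lemma~\ref{Computfollow}), of $\First$ (Lemma~\ref{firstComput}), and of $C_{f^k}$ (Definition~\ref{def1}).

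Before the induction I would isolate one elementary fact about constants that does the real work: for any expression $F$ and any $c\in\Sigma_0$, one has $c\in\llbracket F\rrbracket \Leftrightarrow c\in\First(F)$. Indeed, the only tree whose root is the rank-$0$ symbol $c$ is the tree $c$ itself, so ``$c$ is the root of some tree of $\llbracket F\rrbracket$'' is the same as ``$c\in\llbracket F\rrbracket$''. This is exactly what reconciles the side conditions of the two definitions: $\Follow$ branches on whether $c\in\Follow(\E_1,f,k)$, whereas $\First$ branches on whether $c\in\llbracket\cdot\rrbracket$. Combined with the induction hypothesis $\First(C_{f^k}(\E_1))=\Follow(\E_1,f,k)$, it gives the bridge $c\in\llbracket C_{f^k}(\E_1)\rrbracket \Leftrightarrow c\in\Follow(\E_1,f,k)$ between the two guards.

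Then I would run the cases. The symbol case ($f=g$ yields $C_{f^k}(\E)=\E_k$ against $\Follow(\E,f,k)=\First(\E_k)$; otherwise both sides descend into the unique child $\E_l$ containing $f$) and the sum case are immediate from the definitions and the induction hypothesis. For the product $\E=\E_1\cdot_c\E_2$, linearity forces $f$ into exactly one factor. If $f\in\Sigma_{\E_1}$, then $C_{f^k}(\E)=C_{f^k}(\E_1)\cdot_c\E_2$; applying the $\First$ rule for $\cdot_c$ and using the bridge to see that its guard $c\in\llbracket C_{f^k}(\E_1)\rrbracket$ coincides with $c\in\Follow(\E_1,f,k)$, the two outcomes $(\Follow(\E_1,f,k)\setminus\{c\})\cup\First(\E_2)$ and $\Follow(\E_1,f,k)$ match the first two branches of $\Follow$. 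If $f\in\Sigma_{\E_2}$, then $f\notin\Sigma_{\E_1}$, so $\Follow(\E_1,f,k)=\emptyset$ and the first $\Follow$ branch is excluded; here $C_{f^k}(\E)=C_{f^k}(\E_2)$ when $c\in\Last(\E_1)$ (so $\First$ gives $\Follow(\E_2,f,k)$ by induction) and $C_{f^k}(\E)=0$ otherwise (so $\First(0)=\emptyset=\Follow(\E,f,k)$), matching the third and fourth branches.

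Finally, for the star $\E=\E_1^{*_c}$ one has $C_{f^k}(\E)=C_{f^k}(\E_1)\cdot_c\E_1^{*_c}$, and I would compute $\First$ of this product using $\First(\E_1^{*_c})=\First(\E_1)\cup\{c\}$ together with the bridge. When $c\in\Follow(\E_1,f,k)$ the guard fires, and a short set-algebra step that absorbs the re-added $c$ back into $\Follow(\E_1,f,k)$ turns the result into $\Follow(\E_1,f,k)\cup\First(\E_1)$; otherwise it is simply $\Follow(\E_1,f,k)$. Both agree with the two branches of $\Follow$ for the star. I expect the only genuinely delicate point to be this coordination of the guards across the three definitions, which is precisely why I would establish the constant fact first; once the induction hypothesis is available, every remaining verification is a routine unfolding of the definitions.
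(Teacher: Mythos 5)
The paper itself contains no proof of Proposition~\ref{prop CfkFollow}: it is imported from~\cite{cie} with a bare citation, so there is no in-paper argument to compare yours against. Judged on its own merits, your induction is sound and is the natural one: the clauses of Lemma~\ref{firstComput}, Lemma~\ref{Computfollow} and Definition~\ref{def1} are parallel case by case, your preliminary fact that $c\in\llbracket F\rrbracket\Leftrightarrow c\in\First(F)$ for a constant $c$ (a tree whose root has rank $0$ is that constant itself) is exactly the right bridge between the three kinds of guards ($c\in\llbracket\cdot\rrbracket$ for $\First$, $c\in\Follow(\E_1,f,k)$ for $\Follow$, $c\in\Last(\E_1)$ for $C_{f^k}$), the absorption $(\Follow(\E_1,f,k)\setminus\{c\})\cup\{c\}=\Follow(\E_1,f,k)$ in the star case is correct, and discarding the base cases as vacuous is legitimate since $f\in\Sigma_{\E}\cap\Sigma_m$ with $m\geq 1$.

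There is, however, one step you treat as routine that is not licensed by the paper as stated. In the product and star cases you apply the $\First$ clauses of Lemma~\ref{firstComput} to $C_{f^k}(\E_1)\cdot_c\E_2$ and to $C_{f^k}(\E_1)\cdot_c\E_1^{*_c}$, but that lemma is stated only for \emph{linear} expressions, and $k$-c-continuations are in general not linear: for instance $C_{g^1}\bigl(g(f(a),b)^{*_b}\bigr)=f(a)\cdot_b g(f(a),b)^{*_b}$ contains two occurrences of $f$. Your opening remark about identifying $\E$ with $\b\E$ addresses the linearity of $\E$, not this; note the same issue already affects the statement being proved, since $\First(C_{f^k}(\b\E))$ is applied to a possibly non-linear argument. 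The repair is routine but must be made explicit: take $\First(F)=\{\rooot(t)\mid t\in\llbracket F\rrbracket\}$ as the definition for arbitrary $F$, and observe that the paper's proof of Lemma~\ref{firstComput} nowhere uses linearity, only the absence of $0$ (which c-continuations enjoy by the paper's unnumbered lemma together with the convention $0\cdot_c\E=0$, and the isolated value $C_{f^k}(\E)=0$ is harmless since both sides are then empty). With that observation added, every application you make is valid and your proof is complete; without it, the induction hypothesis and the clause applications are invoked outside the scope in which the paper establishes them.
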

  
  \begin{lemma}\label{lemma FirstContinuation}
    Let $\E$ be linear and $g$ be a symbol in $\Sigma_{\geq 1}$. 
    Then $g^{-1}(\E)\neq \emptyset\Leftrightarrow g\in \First(\E)$.
  \end{lemma}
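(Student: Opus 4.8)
The plan is to proceed by structural induction on $\E$, proving the equivalence $g^{-1}(\E)\neq\emptyset \Leftrightarrow g\in\First(\E)$ in lockstep with the inductive description of $\First$ from Lemma~\ref{firstComput} and the inductive definition of $g^{-1}$. The two definitions branch on the same cases, so each inductive step should reduce to an elementary set-theoretic manipulation. The one elementary fact I would keep in mind throughout is that $g\in\Sigma_{\geq 1}$, hence $g$ differs from every constant $c\in\Sigma_0$; consequently the operations ``$\setminus\{c\}$'' and ``$\cup\{c\}$'' that appear when computing $\First$ never change whether $g$ lies in the set.

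For the base cases I would check $\E=0$ (both conditions false, since $g^{-1}(0)=\emptyset$ and $\First(0)=\emptyset$) and $\E=a$ with $a\in\Sigma_0$ (again both false: $g$ has rank at least $1$, so $g\neq a$, giving $g^{-1}(a)=\emptyset$ and $g\notin\{a\}=\First(a)$). For $\E=f(\E_1,\ldots,\E_n)$ both conditions are equivalent to $g=f$: indeed $\First(\E)=\{f\}$, while $g^{-1}(\E)=\{(\E_1,\ldots,\E_n)\}$ is a non-empty singleton when $g=f$ and $\emptyset$ otherwise.

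The inductive cases for $+$, $\cdot_c$ and $*_c$ all hinge on one auxiliary observation that I would state first: for any set $\mathcal{S}$ of tuples of expressions and any expression $\f$, one has $\mathcal{S}\cdot_c\f=\emptyset$ if and only if $\mathcal{S}=\emptyset$, because $\mathcal{S}\cdot_c\f$ is exactly the image of $\mathcal{S}$ under the map $\mathcal{N}\mapsto\mathcal{N}\cdot_c\f$. With this, the sum case follows from $g^{-1}(\E_1+\E_2)=g^{-1}(\E_1)\cup g^{-1}(\E_2)$ and $\First(\E_1+\E_2)=\First(\E_1)\cup\First(\E_2)$ together with the induction hypothesis. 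The iteration case $\E=\E_1^{*_c}$ follows because $g^{-1}(\E)=g^{-1}(\E_1)\cdot_c\E_1^{*_c}$ is non-empty iff $g^{-1}(\E_1)$ is, while $\First(\E)=\First(\E_1)\cup\{c\}$ and $g\neq c$ give $g\in\First(\E)\Leftrightarrow g\in\First(\E_1)$.

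The product case $\E=\E_1\cdot_c\E_2$ is the one demanding the most care, and I expect it to be the main (albeit still modest) obstacle, since both $\First$ and $g^{-1}$ split according to whether $c\in\llbracket\E_1\rrbracket$. When $c\notin\llbracket\E_1\rrbracket$, both reduce to the $\E_1$ component: $\First(\E)=\First(\E_1)$ and $g^{-1}(\E)=g^{-1}(\E_1)\cdot_c\E_2$, so the auxiliary observation and the induction hypothesis close the case. When $c\in\llbracket\E_1\rrbracket$, we have $\First(\E)=(\First(\E_1)\setminus\{c\})\cup\First(\E_2)$ and $g^{-1}(\E)=(g^{-1}(\E_1)\cdot_c\E_2)\cup g^{-1}(\E_2)$; using $g\neq c$ to discard the ``$\setminus\{c\}$'', both conditions become ``$g\in\First(\E_1)$ or $g\in\First(\E_2)$'', which by the induction hypothesis coincides with ``$g^{-1}(\E_1)\neq\emptyset$ or $g^{-1}(\E_2)\neq\emptyset$''. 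This completes the induction.
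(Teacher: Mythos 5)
Your proof is correct and follows essentially the same route as the paper's: a structural induction whose cases mirror the parallel inductive definitions of $g^{-1}$ and $\First$, using $g\in\Sigma_{\geq 1}\Rightarrow g\neq c$ to dispose of the $\setminus\{c\}$ and $\cup\{c\}$ adjustments, and the fact that $\mathcal{S}\cdot_c\f=\emptyset$ iff $\mathcal{S}=\emptyset$ (which the paper uses implicitly in the steps $g^{-1}(\E_1)\cdot_c\E_2\neq\emptyset\Leftrightarrow g^{-1}(\E_1)\neq\emptyset$). Your explicit statement of that auxiliary observation, and your inclusion of the $\E=0$ base case, only make the argument slightly more complete than the paper's version.
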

  \begin{proof}
    By induction on the construction of $\E$.
    \begin{enumerate}
      \item If $E=a$, then $g^{-1}(\E)=\emptyset$ and $g\notin\First(\E)=\{a\}$.
      \item Suppose that $E=f(E_1,\ldots,E_n)$. 
        Hence $\mathrm{First}(E)=\{f\}$.
        Then:
        
        \centerline{
          \begin{tabular}{l@{\ }l}
            $g^{-1}(\E)\neq \emptyset$ & $\Leftrightarrow$ $f=g$\\
            & $\Leftrightarrow$ $g\in \mathrm{First}(E)$\\
          \end{tabular}
        }
      \item Consider that $E=E_1+E_2$.
        Then:
        
        \centerline{
          \begin{tabular}{l@{\ }l}
            $g^{-1}(\E)\neq \emptyset$ & $\Leftrightarrow$ $g^{-1}(\E_1)\neq \emptyset \vee g^{-1}(\E_2)\neq \emptyset$\\
            & $\Leftrightarrow$ $g\in \mathrm{First}(E_1)\vee g\in \mathrm{First}(E_2)$\ \ \ \ \textbf{(Induction Hypothesis)}\\
            & $\Leftrightarrow$ $g\in \mathrm{First}(E)$\\
          \end{tabular}
        }
      \item Suppose that $E=E_1\cdot_c E_2$. 
        Two cases have to be considered.
        \begin{enumerate}
          \item Suppose that $c\in \llbracket E_1\rrbracket$. 
            Then:
            
            \centerline{
              \begin{tabular}{l@{\ }l}
                $g^{-1}(\E)\neq \emptyset$ & $\Leftrightarrow$ $g^{-1}(\E_1)\cdot_c E_2 \neq \emptyset \vee g^{-1}(\E_2)\neq \emptyset$\\
                & $\Leftrightarrow$ $g^{-1}(\E_1) \neq \emptyset \vee g^{-1}(\E_2)\neq \emptyset$\\
                & $\Leftrightarrow$ $g\in \mathrm{First}(E_1)\vee g\in \mathrm{First}(E_2)$\ \ \ \ \textbf{(Induction Hypothesis)}\\
                & $\Leftrightarrow$ $g\in \mathrm{First}(E_1)\setminus\{c\}\vee g\in \mathrm{First}(E_2)$\ \ \ \ \textbf{($g\neq c$)}\\
                & $\Leftrightarrow$ $g\in \mathrm{First}(E)$\\
              \end{tabular}
            }
          \item  Consider that $c\notin \llbracket E_1\rrbracket$.  
            Then:
            
            \centerline{
              \begin{tabular}{l@{\ }l}
                $g^{-1}(\E)\neq \emptyset$ & $\Leftrightarrow$ $g^{-1}(\E_1)\cdot_c E_2 \neq \emptyset$\\
                & $\Leftrightarrow$ $g^{-1}(\E_1) \neq \emptyset$\\
                & $\Leftrightarrow$ $g\in \mathrm{First}(E_1)$\ \ \ \ \textbf{(Induction Hypothesis)}\\
                & $\Leftrightarrow$ $g\in \mathrm{First}(E)$\\
              \end{tabular}
            }
        \end{enumerate}
      \item Suppose that $E=E_1^{*_c}$. 
        Then:
        
        \centerline{
          \begin{tabular}{l@{\ }l}
            $g^{-1}(\E)\neq \emptyset$ & $\Leftrightarrow$ $g^{-1}(\E_1)\cdot_c E_1^{*_c}\neq \emptyset$\\
            & $\Leftrightarrow$ $g^{-1}(\E_1) \neq \emptyset$\\
            & $\Leftrightarrow$ $g\in \mathrm{First}(E_1)$\ \ \ \ \textbf{(Induction Hypothesis)}\\
            & $\Leftrightarrow$ $g\in \mathrm{First}(E)$\\
          \end{tabular}
        }
    \end{enumerate}
  \end{proof}
  
  \begin{corollary}\label{corol FirstContinuation}
    Let $\E$ be linear, $1\leq k\leq m$ be two integers and $f$ and $g$ be two symbols in $\Sigma$. 
    Then, $g^{-1}(C_{f^k}(\E))\neq \emptyset\Leftrightarrow g\in \First(C_{f^k}(\E))$.
  \end{corollary}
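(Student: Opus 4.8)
The plan is to derive Corollary~\ref{corol FirstContinuation} by instantiating Lemma~\ref{lemma FirstContinuation} at the regular expression $C_{f^k}(\E)$ in place of $\E$: the statement of the corollary is literally the lemma's equivalence applied to $C_{f^k}(\E)$, so everything reduces to checking that the lemma is applicable to this particular expression.

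First I would record that $C_{f^k}(\E)$ is a genuine regular expression over the position alphabet. By the preceding lemma it contains no occurrence of $0$, or it is reduced to $0$. In the degenerate case $C_{f^k}(\E)=0$ the equivalence is immediate, since $g^{-1}(0)=\emptyset$ by definition and $\First(0)=\emptyset$ by Lemma~\ref{firstComput}, so both sides are false; this lets me assume $C_{f^k}(\E)$ is $0$-free.

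The hard part will be that Lemma~\ref{lemma FirstContinuation} is stated for \emph{linear} expressions, whereas $C_{f^k}(\E)$ need not be linear: the star clause $C_{f^k}(\E_1^{*_c})=C_{f^k}(\E_1)\cdot_c \E_1^{*_c}$ repeats all the positions of $\E_1$, so a symbol may occur twice — for instance, in the running example $C_{f^1_1}(\b\E)$ contains the symbol $f_1$ twice. I therefore cannot quote the lemma verbatim. The way I would resolve this is to observe that linearity is never actually used in the proof of Lemma~\ref{lemma FirstContinuation} (nor in the inductive formulas for $\First$ of Lemma~\ref{firstComput}): each inductive case appeals only to the inductive description of $\First$, which merely records the possible root symbols of the denoted trees and is valid for an arbitrary expression, and to the fact that the non-constant symbol $g\in\Sigma_{\geq 1}$ necessarily differs from the constant $c\in\Sigma_0$ — which is all that is needed wherever $c$ is added to or removed from a $\First$ set.

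Consequently the equivalence $g^{-1}(F)\neq\emptyset \Leftrightarrow g\in\First(F)$ holds for \emph{every} regular expression $F$, linear or not, and taking $F:=C_{f^k}(\E)$ yields exactly Corollary~\ref{corol FirstContinuation}. If one preferred not to re-read the lemma's proof, the same conclusion could instead be reached by a routine structural induction on $\E$ tracking $C_{f^k}(\E)$ directly; but this merely re-enacts Lemma~\ref{lemma FirstContinuation}, which is precisely why presenting the corollary as a direct instantiation is the cleaner route.
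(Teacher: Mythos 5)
Your proof is correct and takes essentially the same route as the paper, which offers no separate argument for the corollary and treats it as a direct instantiation of Lemma~\ref{lemma FirstContinuation} at the expression $C_{f^k}(\E)$. The extra care you take is not superfluous, though: as you observe, $C_{f^k}(\E)$ is in general \emph{not} linear (the clause $C_{f^k}(\E_1^{*_c})=C_{f^k}(\E_1)\cdot_c \E_1^{*_c}$ duplicates positions; indeed $f_1$ occurs twice in $C_{f^1_1}(\b\E)$ in the paper's running example), so the lemma's hypothesis literally fails and the paper silently glosses over this. Your resolution is sound: the inductions proving Lemma~\ref{firstComput} and Lemma~\ref{lemma FirstContinuation} use only $0$-freeness --- which holds for $C_{f^k}(\E)$ by the lemma following Definition~\ref{def1}, with the degenerate case $C_{f^k}(\E)=0$ correctly disposed of separately as you do --- together with the fact that a symbol $g\in\Sigma_{\geq 1}$ can never equal a constant $c\in\Sigma_0$; linearity is nowhere needed, so the equivalence holds for every $0$-free expression and may be instantiated at $C_{f^k}(\E)$.
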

  
  \begin{lemma}\label{lemma FollowContinuation}
    Let $\E$ be linear, $1\leq k\leq m$ be two integers and $f$ and $g$ be two symbols in $\Sigma$. 
    Then, $g^{-1}(C_{f^k}(\E))\neq \emptyset\Leftrightarrow g\in \Follow(\E,f,k)$. 
  \end{lemma}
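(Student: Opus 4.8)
The plan is to obtain the statement purely by composing the two results just established, namely Corollary~\ref{corol FirstContinuation} and Proposition~\ref{prop CfkFollow}, without any fresh induction on the structure of $\E$.

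First I would invoke Corollary~\ref{corol FirstContinuation}, which already supplies, for the symbols $f$ and $g$ and the integers $1\leq k\leq m$, the equivalence $g^{-1}(C_{f^k}(\E))\neq\emptyset \Leftrightarrow g\in\First(C_{f^k}(\E))$. This reduces the claim to identifying the set $\First(C_{f^k}(\E))$ with $\Follow(\E,f,k)$. Then I would appeal to Proposition~\ref{prop CfkFollow}, which asserts $\Follow(\E,f,k)=\First(C_{f^k}(\b\E))$. Chaining the two equivalences gives $g^{-1}(C_{f^k}(\E))\neq\emptyset \Leftrightarrow g\in\Follow(\E,f,k)$, which is exactly the conclusion.

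The only point deserving care, and the closest thing to an obstacle here, is the bar in Proposition~\ref{prop CfkFollow}: it is phrased with the linearized expression $\b\E$, whereas the present lemma assumes $\E$ linear. Since $\E$ is already linear, no symbol of rank $\geq 1$ occurs twice, so marking changes nothing and $\b\E$ coincides with $\E$; hence $C_{f^k}(\b\E)=C_{f^k}(\E)$ and the two cited statements indeed speak about the same continuation. I would also record that both invoked results apply under the lemma's hypotheses: $g\in\Sigma_{>0}$ makes $g^{-1}$ defined, and $f\in\Sigma_m$ with $1\leq k\leq m$ and $m\geq 1$ makes the $k$-C-continuation $C_{f^k}$ defined. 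With this harmless identification in place, the proof is a two-step composition and requires no further computation.
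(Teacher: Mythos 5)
Your proposal is correct and follows essentially the same route as the paper: the paper's own proof is exactly the two-step composition of Corollary~\ref{corol FirstContinuation} with Proposition~\ref{prop CfkFollow}. Your extra remark about identifying $C_{f^k}(\b\E)$ with $C_{f^k}(\E)$ for linear $\E$ is a point the paper silently glosses over, so including it only makes the argument cleaner.
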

  \begin{proof}
    According to Corollary~\ref{corol FirstContinuation}, $g^{-1}(C_{f^k}(\E))\neq \emptyset\Leftrightarrow g\in \First(C_{f^k}(\E))$. 
    Following Proposition~\ref{prop CfkFollow}, $\First(C_{f^k}(E))$ $=\Follow(F,f,k)$.
    Consequently, it holds that $g^{-1}(C_{f^k}(\E))\neq \emptyset\Leftrightarrow g\in \Follow(F,f,k)$.
  \end{proof} 
  
  \begin{proposition}\label{prop c cont aut iso position}
    Let $E$ be linear.
    The automaton ${\cal C_{\E}}$ is isomorphic to ${\cal P_{\E}}$.
  \end{proposition}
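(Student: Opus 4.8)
The plan is to take the first–projection map as the isomorphism. I would define $\psi\colon Q_{\cal C}\to Q$ by $\psi((x,C_{x}(\E)))=x$, where $x$ ranges over $\{f^k\mid f\in\Sigma_m,\ 1\le k\le m\}\cup\{\varepsilon^1\}$. Since $\E$ is linear its alphabet is exactly its set of positions, so the first components occurring in $Q_{\cal C}$ are precisely the elements of $Q$; moreover a state of ${\cal C}_{\E}$ is entirely determined by its first component $x$ (the second one being the fixed expression $C_x(\E)$). Hence $\psi$ is a bijection, and it sends the unique final state $(\varepsilon^1,C_{\varepsilon^1}(\E))$ of ${\cal C}_{\E}$ to the unique final state $\varepsilon^1$ of ${\cal P}_{\E}$. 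It then remains to check that $\psi$ carries a rule of $\Delta_{\cal C}$ (componentwise) to a rule of $\Delta$ and conversely.

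The crux of the argument is a coherence property of continuations, which I would isolate as a lemma (or cite from~\cite{cie}): \textbf{Key Lemma.} For $\E$ linear, $g\in\Sigma_m\cap\Sigma_\E$ and any $x\in\{f^k\}\cup\{\varepsilon^1\}$, if $g^{-1}(C_x(\E))\neq\emptyset$ then $g^{-1}(C_x(\E))=\{(C_{g^1}(\E),\dots,C_{g^m}(\E))\}$. By linearity $g$ occurs at most once in $C_x(\E)$, so $g^{-1}(C_x(\E))$ has at most one element; an induction on the structure of $\E$ then checks that the products appended by the recursive rules defining $g^{-1}$ match exactly those appended by the defining recursion of the continuations $C_{g^k}$, so that the single tuple produced is indeed $(C_{g^1}(\E),\dots,C_{g^m}(\E))$. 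I expect this bookkeeping over the $\cdot_c$ and $*_c$ cases to be the main obstacle: the forward inclusion $\Delta_{\cal C}\to\Delta$ uses only non-emptiness and is immediate, whereas the reverse direction genuinely needs the \emph{identity} of that tuple.

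Granting the Key Lemma, the non-constant rules match through the chain
\[
\begin{array}{rl}
 & ((x,C_x(\E)),g,(g^1,C_{g^1}(\E)),\dots,(g^m,C_{g^m}(\E)))\in\Delta_{\cal C}\\
\Leftrightarrow & (C_{g^1}(\E),\dots,C_{g^m}(\E))\in g^{-1}(C_x(\E))\\
\Leftrightarrow & g^{-1}(C_x(\E))\neq\emptyset\\
\Leftrightarrow & (x,g,g^1,\dots,g^m)\in\Delta,
\end{array}
\]
where the first step is the definition of $\Delta_{\cal C}$, the second is the Key Lemma, and the third uses Lemma~\ref{lemma FollowContinuation} (giving $g\in\Follow(\E,f,k)$) when $x=f^k$, and Lemma~\ref{lemma FirstContinuation} together with $C_{\varepsilon^1}(\E)=\E$ and $\Follow(\E,\varepsilon^1,1)=\First(\E)$ (giving $g\in\First(\E)$) when $x=\varepsilon^1$; these are exactly the defining conditions of the two families of non-constant rules of $\Delta$. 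Applying $\psi$ to the target states turns the left-hand rule into the right-hand one, so the non-constant rules are in bijection.

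For the constant rules, $((x,C_x(\E)),c)\in\Delta_{\cal C}$ iff $c\in\llbracket C_x(\E)\rrbracket\cap\Sigma_0$. Since for a constant $c$ the only tree of $\llbracket X\rrbracket$ with root $c$ is $c$ itself, one has $c\in\llbracket X\rrbracket\Leftrightarrow c\in\First(X)$ for any $X$; combined with Proposition~\ref{prop CfkFollow} this yields $c\in\llbracket C_{f^k}(\E)\rrbracket\Leftrightarrow c\in\Follow(\E,f,k)$ when $x=f^k$, and $c\in\llbracket\E\rrbracket\Leftrightarrow c\in\First(\E)$ when $x=\varepsilon^1$. These are precisely the conditions defining the rules $(f^k,c)$ and $(\varepsilon^1,c)$ of $\Delta$, so $\psi$ matches the constant rules as well. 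Thus $\psi$ is a transition- and final-state-preserving bijection, establishing ${\cal C}_{\E}\cong{\cal P}_{\E}$.
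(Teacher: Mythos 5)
Your proposal is correct and follows essentially the same route as the paper's proof: the first-projection bijection $\psi((x,C_x(\E)))=x$, followed by the chain of equivalences from membership of the tuple $(C_{g^1}(\E),\ldots,C_{g^m}(\E))$ in $g^{-1}(C_x(\E))$, through non-emptiness of $g^{-1}(C_x(\E))$, to $g\in\Follow(\E,f,k)$ via Lemma~\ref{lemma FollowContinuation}, and then to the transitions of ${\cal P}_{\E}$ by construction. If anything, your write-up is more careful than the paper's: the ``Key Lemma'' you isolate is precisely the step the paper uses silently (its second equivalence, asserting that non-emptiness of $f^{-1}(C_{g^l}(\E))$ already yields the specific tuple of continuations), and you additionally verify the constant rules and the $\varepsilon^1$ state, both of which the paper's proof leaves implicit.
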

  \begin{proof}
    Let ${\cal C_{\E}}=(Q,\Sigma,Q_T,\Delta)$ and  ${\cal P_{\E}}=(Q',\Sigma,Q'_T,\Delta')$.
    Let $\phi$ be the bijection between $Q$ and $Q'$ defined for any state $q_{f_k}=(f^k,C_{f^k}(E))$ in $Q$ by $\phi(q_{f_k})=f_k$.
    Then for any symbol $f$ in $\Sigma_m$:
    
    \centerline{
      \begin{tabular}{l@{\ }l}
        & $((g^l,C_{g^l}(E)),f,(f^1,C_{f^1}(E)),\ldots,(f^m,C_{f^m}(E)))\in\Delta$\\
        $\Leftrightarrow$ & $(C_{f^1}(E),\ldots,C_{f^m}(E))\in f^{-1}(C_{g^l}(E))$\ \ \ \ \textbf{(Definition of ${\cal C_{\E}}$)}\\
        $\Leftrightarrow$ & $f^{-1}(C_{g^l}(E))\neq\emptyset$\\
        $\Leftrightarrow$ & $f\in \mathrm{Follow}(E,g,l)$\ \ \ \ \textbf{(Lemma~\ref{lemma FollowContinuation})}\\
        $\Leftrightarrow$ & $(g^l,f,f^1,\ldots,f^m)\in\Delta'$\ \ \ \ \textbf{(Construction of ${\cal P_{\E}}$)}\\
        $\Leftrightarrow$ & $(\phi((g^l,C_{g^l}(E)),f,\phi((f^1,C_{f^1}(E))),\ldots,\phi((f^m,C_{f^m}(E))))\in\Delta'$\\
      \end{tabular} 
    }
  \end{proof}
  
  This proposition can be extended to 
  expressions that are not necessarily linear  
  since ${\cal C_{\E}}$ and ${\cal P_{\E}}$ are 
  relabelings
  of ${\cal C_{\b\E}}$ and ${\cal P_{\b\E}}$.
  
  \begin{corollary}
    The automaton ${\cal C_{\E}}$ is isomorphic to ${\cal P_{\E}}$.
  \end{corollary}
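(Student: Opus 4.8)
The plan is to lift the isomorphism of Proposition~\ref{prop c cont aut iso position} from the linearized expression $\b\E$ to $\E$ itself, exploiting that both ${\cal C_{\E}}$ and ${\cal P_{\E}}$ are obtained from ${\cal C_{\b\E}}$ and ${\cal P_{\b\E}}$ by relabelling \emph{only} the transition symbols through $h$, while leaving the state sets, the final states and the underlying transition structure untouched. Since Proposition~\ref{prop c cont aut iso position} already furnishes a bijection $\phi$ between the states of ${\cal C_{\b\E}}$ and those of ${\cal P_{\b\E}}$, namely $\phi((f^k,C_{f^k}(\b\E)))=f^k$ and $\phi((\varepsilon^1,C_{\varepsilon^1}(\b\E)))=\varepsilon^1$, the same map $\phi$ is the candidate isomorphism here.

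First I would observe that the state set of ${\cal C_{\E}}$ coincides with that of ${\cal C_{\b\E}}$, and the state set of ${\cal P_{\E}}$ with that of ${\cal P_{\b\E}}$, because the relabelling modifies transitions only. Hence $\phi$ is still a bijection on states and still sends the unique final state to the unique final state. It therefore remains only to verify that $\phi$ is compatible with the transition relations of ${\cal C_{\E}}$ and ${\cal P_{\E}}$.

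The core step is a commutation argument. Any transition of ${\cal C_{\E}}$ on a symbol of rank $m\geq 1$ is, by definition, the $h$-relabelling of a transition $((g^l,C_{g^l}(\b\E)),f_i,(f_i^1,C_{f_i^1}(\b\E)),\ldots,(f_i^m,C_{f_i^m}(\b\E)))$ of ${\cal C_{\b\E}}$, its symbol $f_i$ being replaced by $h(f_i)$. Applying Proposition~\ref{prop c cont aut iso position} to this ${\cal C_{\b\E}}$-transition yields the ${\cal P_{\b\E}}$-transition $(g^l,f_i,f_i^1,\ldots,f_i^m)$, whose $h$-relabelling is exactly the ${\cal P_{\E}}$-transition $(g^l,h(f_i),f_i^1,\ldots,f_i^m)$. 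Because the relabelling acts solely on the symbol coordinate of a transition tuple while $\phi$ acts solely on the state coordinates, the two operations operate on disjoint parts of the tuple and hence commute; reading the equivalences of Proposition~\ref{prop c cont aut iso position} in both directions then gives the full transition correspondence. Constant transitions require no extra work: since $h(c)=c$ for every $c\in\Sigma_0$, such transitions are left invariant by the relabelling in both automata, so they are matched by $\phi$ directly through Proposition~\ref{prop c cont aut iso position}.

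I do not expect a genuine obstacle here; the only care needed is the bookkeeping to confirm that the relabelling and the state bijection act on orthogonal coordinates of the transition tuples, so that the linear isomorphism survives the passage to the non-necessarily-linear expression verbatim.
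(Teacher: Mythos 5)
Your proposal is correct and follows essentially the same route as the paper, which justifies the corollary precisely by noting that ${\cal C_{\E}}$ and ${\cal P_{\E}}$ are $h$-relabelings of the isomorphic automata ${\cal C_{\b\E}}$ and ${\cal P_{\b\E}}$ of Proposition~\ref{prop c cont aut iso position}. Your write-up merely spells out the routine bookkeeping (states and final states untouched, relabelling acts only on the symbol coordinate and hence commutes with the state bijection $\phi$, constants fixed by $h$) that the paper leaves implicit.
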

  
  We define the 
  similarity
  relation denoted by $\equiv$ over the set of states of the automaton ${\cal C_{\E}}$ as follows:
  
  \centerline{$(f^k,C_{f^k}(\b\E))\equiv(g^p,C_{g^p}(\b\E))\Leftrightarrow \Follow(\b\E,f,k)=\Follow(\b\E,g,p)$.}
  
  \begin{corollary}
    The finite tree automaton ${\cal C_{\E}}\diagup_{\equiv}$ is isomorphic to the follow automaton ${\cal F}_{\E}$.
  \end{corollary}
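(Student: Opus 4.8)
The plan is to combine the two isomorphisms already at our disposal — that ${\cal C_{\E}}$ is isomorphic to ${\cal P_{\E}}$ (Proposition~\ref{prop c cont aut iso position}) and that ${\cal P_{\E}}\diagup_{\sim_{\cal F}}$ is isomorphic to ${\cal F}_{\E}$ (Proposition~\ref{prop quot eq}) — by showing that the state bijection underlying the first isomorphism carries the relation $\equiv$ exactly onto the Follow relation $\sim_{\cal F}$. Once this correspondence of relations is established, an automaton isomorphism that is compatible with a pair of equivalence relations descends to an isomorphism of the associated quotients, and chaining with Proposition~\ref{prop quot eq} yields the claim.

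First I would recall the bijection $\phi$ from the proof of Proposition~\ref{prop c cont aut iso position}, defined by $\phi((f^k,C_{f^k}(\b\E)))=f^k$ and $\phi((\varepsilon^1,C_{\varepsilon^1}(\b\E)))=\varepsilon^1$. Comparing the defining conditions, for any two states of ${\cal C_{\E}}$ one has
\[
(f^k,C_{f^k}(\b\E))\equiv(g^p,C_{g^p}(\b\E))
\iff \Follow(\b\E,f,k)=\Follow(\b\E,g,p)
\iff f^k\sim_{\cal F} g^p ,
\]
the first equivalence being the definition of $\equiv$ and the second the definition of $\sim_{\cal F}$. Hence $q\equiv q'$ if and only if $\phi(q)\sim_{\cal F}\phi(q')$, so $\phi$ maps $\equiv$-classes bijectively onto $\sim_{\cal F}$-classes.

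Next I would introduce the induced map $\psi([q]_{\equiv})=[\phi(q)]_{\sim_{\cal F}}$ and verify that it is an isomorphism from ${\cal C_{\E}}\diagup_{\equiv}$ to ${\cal P_{\E}}\diagup_{\sim_{\cal F}}$. Well-definedness and injectivity both follow from the equivalence $q\equiv q'\Leftrightarrow\phi(q)\sim_{\cal F}\phi(q')$; surjectivity follows from that of $\phi$; and preservation of final states is immediate since $\phi$ preserves them. The transition clause is where care is needed: by the definition of a quotient given in Section~\ref{sec prelim}, a rule $([q],f,[q_1],\ldots,[q_n])$ belongs to the quotient transition set precisely when some representative tuple belongs to the original $\Delta$, so this existential condition must be transported along $\phi$. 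Since $\phi$ is an automaton isomorphism it induces a $\Delta$-preserving bijection on transition rules, and since it maps classes to classes bijectively, the "there is a representative in $\Delta$" condition for ${\cal C_{\E}}\diagup_{\equiv}$ holds if and only if the corresponding condition for ${\cal P_{\E}}\diagup_{\sim_{\cal F}}$ holds. This transport of the existential quotient condition is the main, though routine, point to check.

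Finally, composing $\psi$ with the isomorphism ${\cal P_{\E}}\diagup_{\sim_{\cal F}}\cong{\cal F}_{\E}$ of Proposition~\ref{prop quot eq} produces an isomorphism ${\cal C_{\E}}\diagup_{\equiv}\cong{\cal F}_{\E}$. The case of a non-linear $\E$ follows the pattern of the preceding corollaries: both ${\cal C_{\E}}\diagup_{\equiv}$ and ${\cal F}_{\E}$ are obtained from their linearized counterparts by relabeling transitions through $h$, and since relabeling affects only transition labels uniformly and commutes with the quotient and with $\psi$, the isomorphism extends verbatim.
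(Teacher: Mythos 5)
Your proof is correct and follows exactly the route the paper intends for this corollary (which it states without explicit proof): the state bijection $\phi$ of Proposition~\ref{prop c cont aut iso position} identifies $\equiv$ with $\sim_{\cal F}$ since both are defined by the condition $\Follow(\b\E,f,k)=\Follow(\b\E,g,p)$, so the quotients ${\cal C_{\E}}\diagup_{\equiv}$ and ${\cal P_{\E}}\diagup_{\sim_{\cal F}}$ are isomorphic, and Proposition~\ref{prop quot eq} finishes the argument. Your additional care about transporting the existential quotient-transition condition and the relabeling via $h$ is sound and merely makes explicit what the paper leaves implicit.
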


\section{Comparison between the Equation and the Follow Automata}\label{sec compar}

We discuss in this section two examples to compare the equation and the follow automata.

 Let $\Sigma=\Sigma_0\cup\Sigma_1$ be the ranked alphabet defined by $\Sigma_0=\{a\}$ and $\Sigma_1=\{f_1,\ldots,f_n\}$. 
    Let us consider the linear regular expression $\E=((f_1(a)^{*_a}\cdot_a f_2(a)^{*_a})\cdot_a \ldots)\cdot_a f_n(a)^{*_a}))^{*_a}$ defined over $\Sigma$. Then the size of $\E$ is $|\E|=4n-1$ and its alphabet width is $||\E|| = n+1$. We have 
$\First(\E)=\{a,f_1,f_2,\ldots,f_n\}$ and $\Follow(\E,f_1,1)=\Follow(\E,f_2,1)=\ldots=\Follow(\E,f_n,1)=\{a,f_1,f_2,\ldots,f_n\}$. 

\noindent The partial derivatives associated with $\E$ are:

\centerline{$\partial_{f_1}(\E)=\{(((a\cdot_a f_1(a)^{*_a}\cdot_a f_2(a)^{*_a})\cdot_a \ldots)\cdot_a f_n(a)^{*_a})\cdot_a \E\}$}

\centerline{$\partial_{f_2}(\E)=\{((a\cdot_a f_2(a)^{*_a}\cdot_a \ldots)\cdot_a f_n(a)^{*_a})\cdot_a \E\}$, $\ldots$}

\centerline{$\partial_{f_n}(\E)=\{(a \cdot_a f_n(a)^{*_a})\cdot_a \E\}$.}

\noindent The $K$-position automaton associated with $\E$ has $n+1$ states. 

\noindent The follow automaton associated with $\E$ has $1$ state.

\noindent The equation automaton associated with $\E$ has: $n+1$ states.

Let $\f=\underbrace{(f(a)^{*_a}+f(a)^{*_a}+\cdots +f(a)^{*_a})}_{f(a)^{*_a} ~n\mbox{-times} }$ be a regular expression defined over the ranked alphabet $\Sigma=\Sigma_0\cup\Sigma_1$ such that $\Sigma_0=\{a\}$ and $\Sigma_1=\{f\}$. We have $|\E|=4n-1$ and $||\E||= n+1$. The linearized form associated with $\f$ is $\b\f=(f_1(a)^{*_a}+f_2(a)^{*_a}+\cdots +f_n(a)^{*_a})$. The set $\First(\f)=\{a,f_1,f_2,\ldots,f_n\}$, $\Follow(\b\f,f_1,1)=\{a,f_1\}$, $\Follow(\b\f,f_2,1)=\{a,f_2\},\ldots,$ and  $\Follow(\b\f,f_n,1)=\{a,f_n\}$.

\noindent The partial derivatives associated with $\f$ are $\partial_{f}(\f)=\{a\cdot_a f(a)^{*_a}\}$, $\partial_{ff}(\f)=\{a\cdot_a f(a)^{*_a}\}$.

\noindent The $K$-position automaton associated with $\f$ has $n+1$ states. 

\noindent The follow automaton associated with $\f$ has: $n+1$ states.

\noindent The equation automaton associated with $\f$ has: $2$ states.

From these examples we state that the two automata are incomparable: 
 
\begin{proposition}
The Follow tree automaton
 and the Equation Tree Automaton are incomparable though they 
 are derived from two isomorphic automata,
  \emph{i.e.} Neither is a quotient of the other.
\end{proposition}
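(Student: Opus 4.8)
The plan is to exploit the elementary fact that a quotient automaton never has more states than the automaton it is built from: for any equivalence relation $\sim$ over the state set $Q$ of an automaton, the quotient state set $Q_{/\sim}=\{[q]\mid q\in Q\}$ satisfies $|Q_{/\sim}|\leq |Q|$ (the canonical map $q\mapsto[q]$ being onto). Consequently, to show that ${\cal A_{\E}}$ is \emph{not} a quotient of ${\cal F}_{\E}$ it suffices to exhibit a single linear expression $\E$ for which ${\cal A_{\E}}$ has strictly more states than ${\cal F}_{\E}$; symmetrically, to show that ${\cal F}_{\E}$ is not a quotient of ${\cal A_{\E}}$ it suffices to exhibit a linear expression for which ${\cal F}_{\E}$ is strictly larger than ${\cal A_{\E}}$. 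The two families of expressions introduced just above are designed to serve as these two witnesses, so the only real work is to certify their state counts.

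For the first direction I would take $\E=((f_1(a)^{*_a}\cdot_a f_2(a)^{*_a})\cdot_a\cdots)\cdot_a f_n(a)^{*_a})^{*_a}$. Here $\First(\E)=\Follow(\E,f_1,1)=\cdots=\Follow(\E,f_n,1)=\{a,f_1,\ldots,f_n\}$, so every set that may occur as a state of ${\cal F}_{\E}$ is the same set, and ${\cal F}_{\E}$ therefore has exactly one state. By contrast, the derivatives $\partial_{f_1}(\E),\ldots,\partial_{f_n}(\E)$ are pairwise distinct and distinct from $\E$, so ${\cal A_{\E}}$ has exactly $n+1$ states. For $n\geq 1$ we get $n+1>1$, hence ${\cal A_{\E}}$ cannot be a quotient of ${\cal F}_{\E}$.

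For the converse direction I would take $\f=(f(a)^{*_a}+\cdots+f(a)^{*_a})$ with $n$ summands, whose linearisation is $\b\f=f_1(a)^{*_a}+\cdots+f_n(a)^{*_a}$. Here $\First(\b\f)=\{a,f_1,\ldots,f_n\}$ and $\Follow(\b\f,f_i,1)=\{a,f_i\}$ for each $i$; for $n\geq 2$ these $n+1$ sets are pairwise distinct, so ${\cal F}_{\E}$ has $n+1$ states. In contrast $\partial_f(\f)=\partial_{ff}(\f)=\{a\cdot_a f(a)^{*_a}\}$, so the only derivatives are $\f$ itself and $a\cdot_a f(a)^{*_a}$, giving ${\cal A_{\E}}$ exactly $2$ states. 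For $n\geq 2$ we have $n+1>2$, hence ${\cal F}_{\E}$ cannot be a quotient of ${\cal A_{\E}}$.

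Combining the two directions yields the claimed incomparability, and this occurs even though ${\cal F}_{\E}$ is a quotient of ${\cal P}_{\E}$ (Proposition~\ref{prop quot eq}) and ${\cal A_{\E}}$ is a quotient of ${\cal C_{\E}}$ (via $\sim_e$), the source automata being isomorphic by Proposition~\ref{prop c cont aut iso position}. The only genuinely delicate point is the \emph{exactness} of the counts, i.e. verifying that the listed states are pairwise distinct rather than merely an over-approximation: for ${\cal A_{\E}}$ in the first family this reduces to noting that $\partial_{f_i}(\E)$ has pairwise different tails as $i$ varies, and for ${\cal F}_{\E}$ in the second family it reduces to the trivial distinctness of the sets $\{a,f_i\}$. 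Everything else is a direct cardinality comparison.
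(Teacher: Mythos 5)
Your proposal is correct and follows essentially the same route as the paper: the paper's argument consists precisely of these two example families (the iterated $\cdot_a$-product under a star, and the $n$-fold sum of $f(a)^{*_a}$), whose state counts ($1$ vs.\ $n+1$ in one direction, $n+1$ vs.\ $2$ in the other) rule out a quotient relationship in either direction. Your explicit remark that a quotient can never have more states than its source, and your check that the listed states are pairwise distinct, only make explicit what the paper leaves implicit.
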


\subsection{A smaller automaton}
In \cite{garcia} P. Garc\'{\i}a ~\emph{et al.} proposed an algorithm to obtain an automaton from a word regular expression. 
Their method is based 
on
 the computation of both the partial derivatives automaton and the follow automaton. They join two relations, the first 
relation
 is over the states of 
 the word
  follow automaton and the second 
 relation
 is over the word c-continuations automaton, in one relation denoted by $\equiv_V$. 
What we propose is to extend the relation $\equiv_V$ to the case of trees as 
follows:

\centerline{
$C_{f^k_j}(\b{\E}) \equiv_V C_{g^p_i}(\b{\E})$ $\Leftrightarrow$ 
$\left\{
  \begin{array}{l@{\ }l}
    & (\exists C_{h^l_m}(\b{\E}) \sim_{\cal F} C_{f^k_j}(\b{\E}) \mid~ C_{h^l_m}(\b{\E})\sim_e C_{g^p_i}(\b{\E}))\\
    \vee & (\exists C_{h^l_m}(\b{\E})\sim_{\cal F} C_{g^p_i}(\b{\E}))\mid~ C_{h^l_m}(\b{\E})\sim_e  C_{f^k_j}(\b{\E}))
  \end{array}
\right.$
}

The idea is to define the follow relation $\sim_{\cal F}$ over the states of the c-continuation automaton ${\cal C_{\E}}$ as follows: 
	
\centerline{$C_{f_j^k}(\b{\E}) \sim_{\cal F} C_{g_i^p}(\b{\E})\Leftrightarrow\Follow(C_{f_j^k}(\b{\E}),f_j,k)=\Follow(C_{g_i^p}(\b{\E}),g_i,p)$}
    
such that we keep all the equivalent $k$-C-Continuations in the merged states.
The obtained automaton is denoted by ${\cal C_{\E}}\diagup_{\sim_{\cal F}}$.  
Then apply the 
equivalence 
 relation $\sim_e$ (apply the mapping $h$) over the states of the automaton ${\cal C_{\E}}\diagup_{\sim_{\cal F}}$ and merge the states which have at least one expression in common. 

\begin{example} 
Let $\E=(f(a)^{*_a}\cdot_a b+ h(b))^{*_b}+g(c,a)^{*_c}\cdot_c (f(a)^{*_a}\cdot_a b+ h(b))^{*_b}$ defined in Example~\ref{Pos Automat} and  $\b\E=\underbrace{(f_1(a)^{*_a}\cdot_a b+ h_2(b))^{*_b}}_{\f_1} +\underbrace{g_3(c,a)^{*_c}}_{\G_2}\cdot_c \underbrace{(f_4(a)^{*_a}\cdot_a b+ h_5(b))^{*_b}}_{\f_3}$. 

\centerline{
    \begin{tabular}{lll}
$C_{f^1_1}(\b\E)=$ $((a\cdot_a f_1(a)^{*_a})\cdot_a b)\cdot_b (f_1(a)^{*_a}\cdot_a b+ h_2(b))^{*_b}$, &\\
$C_{h^1_2}(\b\E)=$  $b\cdot_b  (f_1(a)^{*_a}\cdot_a b+ h_2(b))^{*_b}$,\\
$C_{g^1_3}(\b\E)=$ $(c\cdot_c g_3(c,a)^{*_c})\cdot_c  (f_4(a)^{*_a}\cdot_a b+ h_5(b))^{*_b}$,&\\ 
$C_{g^2_3}(\b\E)=$ $(a\cdot_c g_3(c,a))^{*_c})\cdot_c (f_4(a)^{*_a}\cdot_a b+ h_5(b))^{*_b}$, \\
$C_{f^1_4}(\b\E)=$  $((a\cdot_a f_4(a)^{*_a})\cdot_a b)\cdot_b (f_4(a)^{*_a}\cdot_a b+ h_5(b))^{*_b}$,&\\
$C_{h^1_5}(\b\E)=$ $b\cdot_b (f_4(a)^{*_a}\cdot_a b+ h_5(b))^{*_b}$. \\
   \end{tabular}
  }     

Applying $\sim_{\cal F}$ over the states of ${\cal C_{\E}}$ 
we obtain:  
$C_{f^1_1}(\b\E)\sim_{\cal F} C_{h^1_2}(\b\E)$ then the two states are merged, $C_{f^1_4}(\b\E)\sim_{\cal F} C_{h^1_5}(\b\E)$ so they are merged. The states $C_{g^1_3}(\b\E)$ and $C_{g^2_3}(\b\E)$ are not merged with anyone.  

\noindent The number of states is $|Q|=5$ and the number of transition rules is $|\Delta|=15$.
The quotient automaton of this automaton by the 
equivalence
relation $\sim_{\cal F}$ is given in Figure~\ref{fig Cfk ef}.\\

\begin{figure}[H]
  \centerline{
	\begin{tikzpicture}[node distance=2.5cm,bend angle=30,transform shape,scale=1]
	  \node[accepting,state,rounded rectangle] (eps) {$\{C_{\varepsilon^1}(\overline{E})\}$};
	  \node[state, above of=eps,rounded rectangle,node distance=1.5cm] (h12) {$\{C_{f_1^1}(\overline{E}),C_{h_2^1}(\overline{E})\}$}; 
      \node[state, below of=eps,rounded rectangle,node distance=3.5cm] (g13) {$\{C_{g_3^1}(\overline{E})\}$};
      \node[state, right of=g13,rounded rectangle,node distance=3.5cm] (g23) {$\{C_{g_3^2}(\overline{E})\}$};
	  \node[state, left of=g13,rounded rectangle,node distance=3.5cm] (h15) {$\{C_{f_4^1(}\overline{E}),C_{h_5^1}(\overline{E})\}$};
	  \draw (eps) ++(-2cm,0cm) node {$b$}  edge[->] (eps);  
	  \draw (h12) ++(2cm,0cm) node {$b$}  edge[->] (h12); 
	  \draw (h15) ++(0cm,-1cm) node {$b$}  edge[->] (h15);  
	  \draw (g23) ++(2cm,0cm) node {$a$}  edge[->] (g23);  
	  \draw (g13) ++(0cm,-1cm) node {$b$}  edge[->] (g13);    
      \path[->]
		%
		(h12) edge[->,loop,above] node {$f,h$} ()
		(h12) edge[->,below right] node {$f,h$} (eps)
	    %
		(h15) edge[->, in=147,out=-147,loop,left] node {$f,h$} ()	
		(h15) edge[->,above left] node {$f,h$} (eps)		
		(h15) edge[->,above left] node {$f,h$} (g13)		
		%
	  ;
      \draw (eps) ++(1.75cm,-0.75cm)  edge[->,in=0,out=90] node[above right,pos=0] {$g$} (eps) edge[] (g13) edge[] (g23);  
      \draw (g13) ++(1.5cm,1cm)  edge[->,in=90,out=145] node[above right,pos=0] {$g$} (g13) edge[] (g13) edge[] (g23);
    \end{tikzpicture}
  }
  \caption{The Automaton ${\cal C_{\E}}\diagup_{\sim_{\cal F}}$.}
  \label{fig Cfk ef}
\end{figure}
		 
The quotient automaton of the automaton ${\cal C_{\E}}\diagup_{\sim_{\cal F}}$ by the equivalent relation $\sim_e$ is given in Figure~\ref{fig Cfk efe}.
The number of states is $|Q|=4$ and the number of transition rules is $|\Delta|=14$.

\begin{figure}[H]
  \centerline{
	\begin{tikzpicture}[node distance=2.5cm,bend angle=30,transform shape,scale=1]
	  \node[accepting,state,rounded rectangle] (eps) {$\{C_{\varepsilon^1}(\overline{E})\}$};
      \node[state, below of=eps,rounded rectangle,node distance=3.5cm] (g13) {$\{C_{g_3^1}(\overline{E})\}$};
      \node[state, right of=g13,rounded rectangle,node distance=3.5cm] (g23) {$\{C_{g_3^2}(\overline{E})\}$};
	  \node[state, below left of=eps,rounded rectangle] (h15) {$\{C_{f_1^1}(\overline{E}),C_{h_2^1}(\overline{E}),C_{f_4^1(}\overline{E}),C_{h_5^1}(\overline{E})\}$};
	  \draw (eps) ++(0cm,1cm) node {$b$}  edge[->] (eps);  
	  \draw (h15) ++(0cm,-1cm) node {$b$}  edge[->] (h15);  
	  \draw (g23) ++(2cm,0cm) node {$a$}  edge[->] (g23);  
	  \draw (g13) ++(0cm,-1cm) node {$b$}  edge[->] (g13);    
      \path[->]
		%
	    %
		(h15) edge[->,loop,above] node {$f,h$} ()	
		(h15) edge[->,below right] node {$f,h$} (eps)		
		(h15) edge[->,below left] node {$f,h$} (g13)		
		%
	  ;
      \draw (eps) ++(1.75cm,-0.75cm)  edge[->,in=0,out=90] node[above right,pos=0] {$g$} (eps) edge[] (g13) edge[] (g23);  
      \draw (g13) ++(1.5cm,1cm)  edge[->,in=90,out=145] node[above right,pos=0] {$g$} (g13) edge[] (g13) edge[] (g23);
    \end{tikzpicture}
  }
  \caption{The resulting automaton.}
  \label{fig Cfk efe}
\end{figure}

\end{example}

\section{Conclusion}
In this paper we define and recall different constructions of tree automata from a regular expression. 
The different automata and their relations (quotient, isomorphism) defined in this paper are represented in Figure~\ref{Rel Automat},
where our extension of Garc\'ia \emph{et al.} construction is denoted by $\equiv_V$-NFA.

\begin{figure}[H]
  \centerline{
	\begin{tikzpicture}[node distance=2cm,bend angle=30,transform shape,scale=1.25]
      \node[state,rounded rectangle,inner sep=0.25cm] (kCC) {\textbf{k-C-Continuation}};
      \node[state, right of=kCC,rounded rectangle,inner sep=0.25cm,node distance=3.5cm] (kP) {\textbf{k-Position}};
      \node[state, below of=kCC,rounded rectangle,inner sep=0.25cm] (Eq) {\textbf{Equation}};
      \node[state, below of=kP,rounded rectangle,inner sep=0.25cm] (Fo) {\textbf{Follow}};
      \node[state, below right of=Eq,rounded rectangle,inner sep=0.25cm,node distance=2.47cm] (Res) {$\mathbf{\equiv_V}$\textbf{-NFA}};
      \path
        (kCC) edge[<->] (kP)
        (kCC) edge[->,left] node {$\sim_e$} (Eq)
        (kCC) edge[->,above right] node {$\equiv$} (Fo)
        (kP) edge[->,right] node {$\sim_{\mathcal{F}}$} (Fo)
        (kCC) edge[->,above right,in=105,out=-45] node {$\equiv_V$} (Res)
        (Fo) edge[->,dashed] (Res)
        (Eq) edge[->,dashed] (Res)
      ;
    \end{tikzpicture}
  }
  \caption{Relation Between the Automata.}
\label{Rel Automat}
\end{figure}

We have shown that the $k$-position automaton and the $k$-c-continuations automaton are isomorphic, and that both the equation automaton and the follow autolaton are different quotients of the $k$-position automaton.

Looking for reductions 
of the set of states, we applied 
 the algorithm by Garc\'{\i}a ~\emph{et al.}~\cite{garcia}
 which allowed us to compute an automaton
 the size of which 
 is bounded above by the size of the 
 smaller of the follow and the equation automata.

\bibliography{bibliograph}
\end{document}